\title[Temporal Network Creation Games: The Impact of Non-Locality and Terminals]{Temporal Network Creation Games:\\The Impact of Non-Locality and Terminals}
\author{Davide Bilò}
\affiliation{
  \institution{University of L’Aquila}
  \city{L'Aquila}
  \country{Italy}}
\email{davide.bilo@univaq.it}
\author{Sarel Cohen}
\affiliation{
  \institution{Hasso Plattner Institute}
  \city{Potsdam}
  \country{Germany}}
\email{sarel.cohen@hpi.de}
\author{Tobias Friedrich}
\affiliation{
  \institution{Hasso Plattner Institute}
  \city{Potsdam}
  \country{Germany}}
\email{tobias.friedrich@hpi.de}
\author{Hans Gawendowicz}
\affiliation{
  \institution{Hasso Plattner Institute}
  \city{Potsdam}
  \country{Germany}}
\email{hans.gawendowicz@hpi.de}
\author{Nicolas Klodt}
\affiliation{
  \institution{Hasso Plattner Institute}
  \city{Potsdam}
  \country{Germany}}
\email{nicolas.klodt@hpi.de}
\author{Pascal Lenzner}
\affiliation{
  \institution{Hasso Plattner Institute}
  \city{Potsdam}
  \country{Germany}}
\email{pascal.lenzner@hpi.de}
\author{George Skretas}
\affiliation{
  \institution{Hasso Plattner Institute}
  \city{Potsdam}
  \country{Germany}}
\email{george.skretas@hpi.de}
\begin{abstract}
We live in a world full of networks where our economy, our communication, and even our social life crucially depends on them. These networks typically emerge from the interaction of many entities, which is why researchers study agent-based models of network formation. While traditionally static networks with a fixed set of links were considered, a recent stream of works focuses on 
networks whose behavior may change over time. In particular, Bil{\`{o}} {\sl et al.} (IJCAI 2023) 
recently introduced a game-theoretic network formation model that embeds temporal aspects in networks. More precisely, a network is formed by selfish agents corresponding to nodes in a given host network with edges having labels denoting their availability over time. Each agent strategically selects local, i.e., incident, edges to ensure temporal reachability towards everyone at low cost.

In this work we set out to explore the impact of two novel conceptual features: agents are no longer restricted to creating incident edges, called the \emph{global setting}, and agents might only want to ensure that they can reach a subset of the other nodes, called the \emph{terminal model}. For both, we study the existence, structure, and quality of equilibrium networks. For the terminal model, we prove that many core properties crucially depend on the number of terminals. We also develop a novel tool that allows translating equilibrium constructions from the non-terminal model to the terminal model. For the global setting, we show the surprising result that equilibria in the global and the local model are incomparable and we establish a high lower bound on the Price of Anarchy of the global setting that matches the upper bound of the local model. This shows the counter-intuitive fact that allowing agents more flexibility in edge creation does not improve the quality of equilibrium networks. Finally, in contrast to Bil{\`{o}} {\sl et al.} (IJCAI 2023) where the authors restrict the model to single labels per connection, all of our results hold for the restricted case and the generalized case where every edge can have multiple labels.
\end{abstract}
\keywords{network creation games, temporal graphs, nash equilibria, price of anarchy, temporal spanner, reachability}
\newcommand{\BibTeX}{\rm B\kern-.05em{\sc i\kern-.025em b}\kern-.08em\TeX}
\definecolor{myYellow}{rgb}{1 1 0.878}
\newcommand*{\N}{\mathds{N}}
\newcommand*{\R}{\mathds{R}}
\newcommand*{\reach}[1]{R_{#1}}
\newcommand*{\cost}[3]{c_{#3}(#1,#2)} 
\DeclareMathOperator{\socialcost}{SC}
\newcommand*{\SC}[2]{\socialcost_{#2}(#1)} 
\newcommand{\lifetime}{\lambda^{max}}
\newcommand{\timeEdges}{\Lambda}
\newcommand*{\product}{\Pi}
\newcommand*{\s}{\mathbf{s}}
\def\poa {{\sf PoA}}
\def\pos {{\sf PoS}}
\def\opt {{\sf OPT}}
\def\lne {{\sf NE}^{\sf lo}}
\def\lge {{\sf GE}^{\sf lo}}
\def\gne {{\sf NE}^{\sf gl}}
\def\gge {{\sf GE}^{\sf gl}}
\def\NE {{\sf NE}}
\def\GE {{\sf GE}}
\def\lo {{\sf lo}}
\def\gl {{\sf gl}}
\newcommand\nod{}
\def\nod(#1,#2,#3,#4){\node[draw, circle,  minimum size=15pt, inner sep=0.5pt](#1)[#2=#4 of #3]{$#1$};}
\newcommand\ed{}
\def\ed(#1,#2,#3){(#1) edge node[above, inner sep=2pt,]{$#3$} (#2)}
\newcommand\edc{}
\def\edc(#1,#2,#3){(#1) edge node[centered, inner sep=0.5pt,fill=white,circle]{$#3$} (#2)}
\newcommand\edo{}
\def\edo(#1,#2,#3,#4){(#1) edge node[#4, inner sep=2pt,]{$#3$} (#2)}
\tikzstyle{nod} = [inner sep=2pt, draw, circle, minimum size=15pt]
\begin{document}

\pagestyle{fancy}
\fancyhead{}

\maketitle 


\section{Introduction}
Networks are an integral part of our everyday lives, playing a key role in almost every aspect of human existence. Prominent examples include transportation networks (road networks, train tracks, airplaine routes, etc.), communication networks (e.g. the Internet), neural networks (both biological and artificial), biological networks (e.g. protein-protein interaction networks) and many more. With the growing digitization of society, networks, in particular communication networks and (online) social networks, came more and more into the focus of computer science research over the last decades. Many different topics have been studied ranging from the formation of social networks~\cite{KleinbergGroupFormation} over information diffusion~\cite{FriedrichRummorSpreading} and generating synthetic social networks with real-world properties~\cite{RandomGraphsForSocialNetworks,krioukov2010hyperbolic} to uncover their underlying geometry~\cite{papadopoulos2012popularity}.

To understand how social networks (and many other types of networks) emerge, one must understand the mechanisms and principles that govern the formation of networks among several non-cooperative agents~\cite{Papadimitriou01}. This sparked the investigation of game-theoretic network formation models like the \emph{Network Creation Game} (NCG)~\cite{fabrikant03}. In this model, selfish agents act as nodes of a network which can form costly connections to others to gain a central position in the arising network. In particular, each agent can build connections only locally, i.e., via creating incident edges. Since then, many variations and extensions of this model have been formulated and studied, e.g., variants with non-uniform edge cost~\cite{MeiromMO14,Cord-LandwehrMH14,ChauhanLMM17,BFLLM21}, robustness considerations~\cite{MeiromMO15,ChauhanLMM16,Goyal16,Echzell0LM20}, or geometric aspects~\cite{MoscibrodaSW06a,EidenbenzKZ06,bilo2019}.

Although all these models aim to capture time-dependent processes of network formation, in practice, they consider networks that, once formed, are static. This is in contrast to many real-world networks in which temporal aspects play a prominent role. We highlight two motivating examples to make this more evident.

One example is the commercial airline network: each time an airline company wants to serve a new route, the company also has to take into account connecting flights with their corresponding departure and arrival times. Planning the routes carefully can ensure reachability: customers can get from any airport to any other airport by taking a sequence of flights, possibly of different airlines, with ascending departure and arrival times. Here, the airlines are the selfish agents that can establish new connections to enable their customers to travel anywhere. 

For another example, consider the supply chain network of companies that are participating in the production of a particular product X.  Assume that company A wants to make product X and sell it.  Unless company A owns every part of the production chain (which is highly unlikely in today's world), they want to have a connection to other companies in order to send materials and use their means of production that are missing from their production chain. As such, they want to guarantee that they have the logistical infrastructure to send their parts to all other companies participating. But company A may want to combine deliveries. For example, load a vehicle with parts that goes to company B, and then the vehicle loads up parts from company B and moves them to company C. In order for this behaviour to be accurately portrayed, the scheduling of the connections must happen in ascending order (time-wise).

Other examples of network formation that include temporal properties are scheduling problems in which jobs have an order of preferences, 
neural networks where neurons 
forming a chain 
are serially activated one after the other, 
navigation networks in which the travel time of roads changes over time (e.g. due to traffic, or roadblocks),
as well as pathways in biological networks which are series of actions among molecules in a cell that lead to a certain product or a change in the cell. These examples motivate, that understanding network formation of temporal networks is crucial.

Recently, \citet{tempNCG_ijcai} made a first step towards incorporating temporal aspects into NCGs.
In their model, the game is played on an underlying \emph{temporal host network} that defines the time steps in which the bought edges will be available and each agent can only build incident edges.
However, this setting might not be general enough to represent real-world networks. Let us consider our two previous motivating examples again. 

In the airline route network, the 5th Freedom Right\footnote{\url{https://www.icao.int/Pages/freedomsAir.aspx}} allows airline companies to create connections among countries that do not necessarily include the country the airline is based at. Meanwhile, a company is not interested in reaching every possible destination in other countries, but it mainly serves the hubs and cities which are in high demand for its customers. Finally, an airline company may want to have multiple connections between two countries on each day.

Similarly, in the supply chain network, company~A will send parts to company~B for processing and then may want to use its own transport vehicles to transfer the processed parts to company~C afterwards. Additionally, company A may not need to have a connection to the whole supply chain network, but only to particular other companies. Finally, company A may want to establish more than one connection between two factories during a day, due to a multitude of logistical reasons.

In this work, we extend the model by~\citet{tempNCG_ijcai} to cope with the three raised issues. First, we introduce the \emph{terminal model} in which nodes want to reach only a subset of the nodes, called terminals. The second addition is the \emph{global setting} in which we allow each agent to build connections anywhere in the network, i.e., agents can create non-incident edges. Finally, in contrast to \citet{tempNCG_ijcai} where the authors restrict the model to single labels per connection, we study the restricted case and also generalize to multiple labels per connection.

Before giving an overview of our contribution, we introduce our model and some notation.

\subsection{Model and Notation}
We first introduce temporal graphs, then we move on to the game-theoretic definition of our model.

\paragraph{Temporal Graphs and Temporal Spanners}
A \emph{temporal graph} $G=(V_G, E_G, \lambda_G)$ consists of a set of nodes $V_G$, a set of undirected edges $E_G\subseteq\{\{u,v\}\subseteq V_G\mid u\neq v\}$, and a labeling function $\lambda_G\colon E_G\rightarrow P(\N)\setminus\emptyset$, where, for each edge $e \in E_H$, the term $\lambda_G(e)$ denotes the set of \emph{time labels} of $e$. Informally, the labeling function $\lambda_G$ describes the time steps in which edge $e$ is available. We sometimes write $\lambda_G(e)+c$ for some $c\in\N$ to denote the set $\{\lambda+c\mid \lambda\in\lambda_G(e)\}$. We define the set $\timeEdges_G$ of \emph{time edges} as the set of tuples of edges and each of their time labels, i.e. $\timeEdges_G \coloneqq \{(e, \lambda)\mid e\in E_G, \lambda \in \lambda_G(e)\}$. For nodes $u,v\in V_G$ and a time label $\lambda$, we sometimes abuse notation and write $(u,v,\lambda)$ instead of $(\{u,v\},\lambda)$. Furthermore, we call the largest label $\lifetime_G\coloneqq\max_{e\in E_G}\max_{\lambda \in \lambda_G(e)}\lambda$ the \emph{lifetime} of $G$. If the graph $G$ is clear from context, we might omit the subscript $G$ to enhance readability. We call a temporal graph simple if there is exactly one time label on each edge. For simple graphs $G$, we sometimes treat $\lambda_G(e)$ as a number instead of a set for easier notation.

A \emph{temporal path} is a sequence of nodes $v_0,\dots, v_\ell\in V$ that form a path in $G$, such that there exists an increasing sequence of time labels $\lambda_0\leq\dots\leq \lambda_{\ell-1}$ with $\lambda_i \in \lambda(\{v_i,v_{i+1}\})$ for every $i=0,\ldots,\ell-1$.
We define $\ell$ to be the length of the temporal path. Note that we do not require the labels on the temporal path to increase strictly.
We say that a node $u\in V_G$ \emph{reaches} $v\in V_G$ if there is a temporal path from $u$ to $v$ in $G$. Observe that, even though the edges are undirected, a temporal path from $u$ to $v$ does not necessarily imply the existence of a temporal path from $v$ to $u$. Moreover, we define  $R_G(v)\subseteq V_G$ as the set of nodes that node $v$ can reach in $G$. We call the graph $G$ \emph{temporally connected} if $R_G(v)=V_G$ for every node $v \in V_G$.

We define a \emph{temporal host graph with terminals} (or \emph{host graph} for short) as $H=(V_H, E_H, \lambda_H, T_H)$, where $(V_H, E_H, \lambda_H)$ is a complete temporal graph, i.e. $E_H=\{\{u,v\}\subseteq V_H\mid u\neq v\}$, while $T_H\subseteq V_H$ is a set of \emph{terminal nodes} (or terminals), which is the same for all agents. W.l.o.g., we assume that, for every $\tau=1,\ldots, \lambda_H^{\max}$, there is an edge $e \in E_H$ with $\tau \in \lambda_H(e)$.\footnote{Indeed, as long as some value of $\tau$, with $1\leq \tau \leq \lambda_H^{\max}$, is missing, we can decrease by 1 all the edge labels that are strictly larger than $\tau$.}

A temporal subgraph of $H$ is a temporal graph $G$ such that $(V_G,E_G)$ is a subgraph of $(V_H,E_H)$ and  $\lambda_G(e)\subseteq\lambda_H(e)$ for every $e \in E_G$.
A \emph{terminal spanner} of $H$ is a temporal subgraph $G$ of $H$, with $V_G=V_H$, where every node reaches all the terminals, i.e., $T_H\subseteq R_G(v)$ for every $v \in V_H$. Note that each terminal also needs to reach all the other terminals. Furthermore, for $k=n$ this is the definition of a \emph{temporal spanner}.

\paragraph{Game-Theoretic Model}
We introduce the game-theoretic model that we study in this paper. Let $H$ be a temporal host graph with terminals that serves as a host graph for our game. Each node $v\in V_H$ is a selfish agent whose \emph{strategy} $S_v\subseteq \timeEdges_H$ corresponds to the set of time edges that agent~$v$ buys. We distinguish two settings: \emph{Global edge-buying}, where agents have no restrictions on the time edges they can buy, and \emph{local edge-buying} where agents can only buy incident time edges, i.e. $S_v\subseteq \{(\{v,u\},\lambda)\mid u\in V_H\setminus\{v\}\}$. We denote by $\s=\bigcup_{v\in V_H}\{(v,S_v)\}$ the \emph{strategy profile} and by $G(\s)$ the temporal graph formed by the agents. Formally, the graph $G(\s)$ is a temporal subgraph of $H$ with $V_G=V_H$ and $\timeEdges_{G(\s)}=\bigcup_{(v,S_v)\in \s}S_v$. Note that $E_{G(\s)}$ and $\lambda_{G(\s)}$ are implicitly defined when $\timeEdges_{G(\s)}$ is known. In figures, we sometimes display edges as directed to illustrate the edge ownership. Such edges are bought by the node they originate in and can still be used in both direction for the purpose of temporal reachability. In the global setting this simplification does not always work. In this case we write onto the edge who buys it. For simple temporal graphs we sometimes talk about buying edges instead of time edges as they are equivalent in this case.

Each agent $v\in V_H$ aims at reaching all terminals while buying as few time edges as possible. Formally, agent $v$ wants to minimize its costs given by
\begin{align*}
    \cost{v}{\s}{H}&=|S_v|+C\cdot |T\setminus R_{G(\s)}(v)|.
\end{align*}
where $C>1$ is a large constant ensuring that reaching any terminal is more beneficial than not buying a single edge. Indeed, as $H$ is a complete temporal graph, each agent $v$ can always reach all terminals in $T_H$ by buying, for example, an arbitrary time edge for each edge of the form $\{v,u\}$, with $u \in T_H$.
We call the defined models \emph{global edge-buying $k$-terminal Temporal Network Creation Game} (global $k$-tNCG) and \emph{local edge-buying $k$-terminal Temporal Network Creation Game} (local $k$-tNCG), respectively.

Before defining the solution concepts, we need some more notation regarding strategies. Let $\s$ be a strategy profile and consider any agent $v\in V_H$. We define $\s_{-v}\coloneqq \s\setminus \{(v,S_v)\}$ as the strategy profile without the strategy of agent $v$. Now, consider an alternative strategy $S_v'\neq S_v$ for agent $v$. We denote by $\s_{-v}\cup S_v'$ the strategy profile $\s_{-v}\cup\{(v,S_v')\}$. If $\cost{v}{\s_{-v}\cup S_v'}{H}<\cost{v}{\s}{H}$, we say that $\s_{-v}\cup S_v'$ is an \emph{improving response} for $v$ (w.r.t. $\s$). If additionally, the strategies $S_v$ and $S_v'$ differ by at most one element (i.e. $v$ either adds or removes a single time edge), we call this a \emph{greedy improving response}\footnote{Note that, in the literature~\cite{Lenzner12}, a greedy improving response also allows a swap, i.e. removing one edge and adding one edge simultaneously. However, in our game, every improving response consisting of a swap also implies an improving response that only adds an edge and omits the remove part. This is because a swap is an improving response for an agent only when the number of reached terminals increases. This means, we can disregard swaps for our definition of greedy improving responses.}. We call $\s$ a \emph{best response} of agent $v$ (resp., a \emph{greedy best response}) if there is no improving response (resp., greedy improving response) for agent $v$.

We can now introduce our solution concepts. A strategy profile $\s$ is a \emph{Pure Nash Equilibrium (NE)} (resp., \emph{Greedy Equilibrium (GE)}) if no agent has an improving response (resp., greedy improving response).
As every greedy improving response is also an improving response, we have that every NE is also a GE. Furthermore, every NE (and thus every GE) guarantees pairwise disjoint strategies, since any agent can trivially remove the intersection of its strategy and some other agent's strategy without affecting its reachability. Moreover, our definition of the cost function directly implies that the created graph $G(\s)$ must be a terminal spanner.

Lastly, we introduce a measure for the well-being of all agents combined. Let $H$ be a host graph and let $\s$ be any strategy profile. The \emph{social cost} of $\s$ on $H$ is then defined as
\begin{align*}
    \socialcost_H(\s)=\sum_{v\in V_H}\cost{v}{\s}{H}.
\end{align*}
Note that $\socialcost_H(\s)=|\timeEdges_{G(\s)}|$ for every NE or GE $\s$. A strategy profile of minimum social cost for the given host graph $H$ is called \emph{social optimum} and denoted as $\s_H^*$. When considering the efficiency of equilibria, we will compare their social costs to the social optimum. For $n,k\in\N$ with $k\le n$, let $\mathcal{H}_{n,k}$ be the set of all host graphs containing $n$ nodes and $k$ terminals. Furthermore, for a host graph $H$, let $\lne_H$, $\gne_H$, $\lge_H$ and $\gge_H$ be the sets of Nash Equilibria and Greedy Equilibria in the local edge-buying and the global edge-buying setting, respectively. We define the \emph{Price of Anarchy (PoA)} for the local edge-buying setting with respect to Nash Equilibria as the ratio of the socially worst equilibrium and the social optimum
\begin{align*}
    \poa^\lo_\NE(n,k)\coloneqq \max_{H\in\mathcal{H}_{n,k}}\max_{\s\in \lne_H}\frac{\socialcost_H(\s)}{\socialcost_H(s^*_H)}.
\end{align*}
We define $\poa^\gl_\NE$, $\poa^\lo_\GE$, and $\poa^\gl_\GE$ analogously. If a result holds for both settings (local and global), we omit the superscript. If a result holds for both GE and NE, we omit the subscript.

Lastly, we define the \emph{Price of Stability} as
\begin{align*}
    \pos^\lo_\NE(n,k)\coloneqq \max_{H\in\mathcal{H}_{n,k}}\min_{\s\in \lne_H}\frac{\socialcost_H(\s)}{\socialcost_H(s^*_H)}.
\end{align*}
Again, we define $\pos^\gl_\NE$, $\pos^\lo_\GE$, and $\pos^\gl_\GE$ analogously.

\subsection{Our Contribution}
\begin{table*}
	\centering\setlength{\tabcolsep}{3pt}%
	\renewcommand{\arraystretch}{1.3}%
    \begin{tabular}{c|c|>{\columncolor{myYellow}}c|>{\columncolor{myYellow}}c}
            & (local $n$-)TNCG & local $k$-TNCG & global $k$-TNCG
        \\ \hline
		      Optimum &
            min temporal spanner &
            min terminal spanner &
            min terminal spanner
		\\ \hline
		      Equilibria &
            \begin{tabular}{@{}c@{}}$\lifetime=2\colon$spanning tree\\$m\leq\sqrt{6}n^\frac{3}{2}$ \end{tabular}&
            \begin{tabular}{@{}c@{}}$\lifetime=2\colon$spanning tree [\ref{cor:NE_lifetime_two}]\\ $k=2\colon$exists 
            [\ref{thm:NE_two_terminals}]\\ $m\leq\sqrt{6k}n+n$  [\ref{thm:dense_not_ge}] \end{tabular}&
            \begin{tabular}{@{}c@{}}$\lifetime=2\colon$spanning tree [\ref{cor:NE_lifetime_two}]\\$k=2\colon$exists [\ref{thm:NE_two_terminals}]\\GE: exists [\ref{cor:GEexists}]\end{tabular}
		\\ \hline
		      PoA &
            \begin{tabular}{@{}c@{}}$\mathcal{O}(\sqrt{n})$\\ $\mathcal{O}(\lifetime)$ \\ $\Omega(\log n)$ \\ $\poa_\GE\le\mathcal{O}(\log(n))\poa_\NE$\end{tabular} &
            \begin{tabular}{@{}c@{}}$\mathcal{O}(\sqrt{k})$ [\ref{thm:PoA_upper_bound_local}]\\ $\mathcal{O}(\lifetime)$ [\ref{thm:PoALifetime}]\\ $\Omega(\log k)$ [\ref{thm:PoA_lower_bound_local}]\end{tabular} &
            \begin{tabular}{@{}c@{}}$\mathcal{O}(k)$ [\ref{thm:PoA_upper_bound_global}]\\ $\mathcal{O}(\lifetime)$ [\ref{thm:PoALifetime}]\\ $\Omega(\sqrt{k})$ [\ref{thm:PoALB}]\\$\poa_\GE \in \Theta(k)$[\ref{thm:PoA_upper_bound_global},\ref{thm:PoA_lower_bound_global_GE}]\end{tabular}
		\\ \hline
		      PoS &
            ? &
            ? &
            1 (for GE) [\ref{cor:GEPoS}]
	\end{tabular}
    \caption{An overview of our results (yellow) and comparison with the existing results from~\cite{tempNCG_ijcai}. Here, $n$ is the number of nodes, $m$ the number of (time) edges, $k$ the number of terminals, and $\lifetime$ the largest label in the host graph.}\label{table:results}
\end{table*}

The main contribution of this work is the generalization of the model introduced by~\citet{tempNCG_ijcai} and its game-theoretic analysis. We introduce the concepts of terminals, global edge-buying and multiple labels.  To the best of our knowledge terminals have not been considered yet on any network creation model. While the terminal version is just a generalization of the normal model, we show that the global edge-buying leads to a completely different model with an incomparable set of equilibrium graphs.   Our results for the generalized model work for both single label graphs and multi label graphs. Note that our techniques can be used to extend the results of~\citet{tempNCG_ijcai} to the multi label model. \Cref{table:results} gives an overview of our results in comparison with the results of \cite{tempNCG_ijcai}. 

In \Cref{sec:equilibria}, we study the structure and properties of equilibria. First, we introduce a special kind of graph product, see \Cref{def:graph_product}, that allows us to take any two host networks and respective equilibria and construct a new host graph together with a new equilibrium. This can then be used to construct lower bound examples for the PoA for a wide range of numbers of nodes $n$ and numbers of terminals $k$ by constructing only a few initial equilibria. Additionally, we show that, in the local setting, many structural properties of equilibria and bounds on the price of anarchy derived by \cite{tempNCG_ijcai} that seemed to be dependent on the number of nodes in the graph are actually dependent on the number of terminals instead. Moreover, we show that for two terminals in the local and global setting, Greedy and Nash Equilibria always exist. We also show that for the global setting, Greedy Equilibrium graphs are exactly the set of inclusion minimal temporal spanners. We conclude the section by showing that the  set of equilibrium graphs in the global setting are incomparable to the ones from the local setting.

In \Cref{sec:poa}, we analyze the efficiency of equilibria. For the global setting, many results carry over from the local setting but there are notable differences. Our findings show that allowing the agents to buy non-incident edges does not improve the efficiency of equilibria but in fact might make them even worse. For the case of Greedy Equilibria, we show that the PoA in the global setting is in $\Omega(k)$, in contrast to the upper bound of $O(\sqrt{k})$ that exists for the local setting. We also show that for Nash Equilibria, the PoA is in $\mathcal{O}(\sqrt{k})$ in the local setting, while it is in $\Omega(\sqrt{k})$ for the global setting. While it is still possible that those bounds match asymptotically, we conjecture that the actual PoA is much closer to the lower bound of $\Omega(\log k)$ in the local setting.

\subsubsection*{Simple Host Graphs}
As mentioned before, all our results also hold for the special case where the host graph is a simple temporal graph, i.e. every edge has exactly one time label. For all results from \Cref{def:graph_product} to \Cref{lem:NE_dismountable} this is true since given simple host graphs, the constructions in turn admit simple host graphs. All remaining results are either general upper bounds/statements, and therefore, they also hold for the special case of simple graphs or constructions that are already simple graphs.

\subsection{Related Work}
As mentioned in the introduction, this paper extends the temporal network creation game proposed by~\citet{tempNCG_ijcai}, which studies the all-pairs reachability in the local edge-buying model. In particular, in~\cite{tempNCG_ijcai}, the authors first prove the existence of NE for host graphs with lifetime $\lambda_H^{\max}=2$ and show that, for every host graph with lifetime $\lambda_H^{\max}\geq 2$, the problems of computing a best response strategy and the problem of deciding whether a strategy profile is a NE are both NP-hard. The authors then consider upper and lower bounds to the PoA w.r.t. both NE and GE. In particular, they show that the PoA w.r.t. NE is in between $\Omega(\log n)$ and $\mathcal{O}(\sqrt{n})$. Moreover, they connect GE with NE by showing that the PoA w.r.t. GE is no more than a $\mathcal{O}(\log n)$ factor away the PoA w.r.t. NE.

Besides the paper by~\citet{tempNCG_ijcai} which, to the best of our knowledge, is the only one that combines temporal aspects with network formation games, there has been an extensive line of research on related games in the last decades. One of the earliest models which is close to our work is by~\citet{bala2000noncooperative}, where selfish agents buy incident edges and their utility increases with the number of agents they can reach while it decreases with the number of edges bought.
For the version where undirected edges are formed, the authors prove that equilibria always exist forming either stars or empty graphs, and that improving response dynamics quickly converge to such states. They also show  how to efficiently compute a best response strategy as well as deciding if a given state is in equilibrium. \citet{Goyal16} extended this model to a setting with attacks on the formed network, where the objective is to maintain post-attack reachability. This variant is more complex, yet \citet{FriedrichIKLNS17} proved that best response strategies can still be computed efficiently. Recently, \citet{ChenJKKM19} studied a variant where the attacks are probabilistic. \citet{EidenbenzKZ06} studied the different, yet related, topology control game, where the agents are points in the plane and edge costs are proportional to the Euclidean distance between the endpoints. A similar game was studied by~\citet{gulyas2015navigable}, with the difference that agents are points in hyperbolic space and using greedy routing.
Regarding the idea of using global edge-buying in network creation games, the model by \citet{DemaineHMZ09} is related. There, coalitions of agents can buy costshares of any edge in the network.

From a centralized algorithmic perspective, starting from the work by~\citet{KKK02}, a lot of research has been devoted to the problem of computing sparse spanners in temporal graphs. More precisely, temporal cliques admit sparse temporally connected spanners~\cite{CasteigtsPS21}, even when we seek for all-pairs temporal paths of bounded length~\cite{BiloDG0R22}. In contrast, there exist very dense temporal graphs that are not complete whose temporal spanners are all dense~\cite{AxiotisF16}. Closely related to the reachability problem, \citet{klobasMMS22} study the problem of finding the minimum number of labels required to achieve temporal connectivity in a graph.


\section{Equilibria}\label{sec:equilibria}
\begin{figure*}
    \centering
    \includegraphics[scale=0.9]{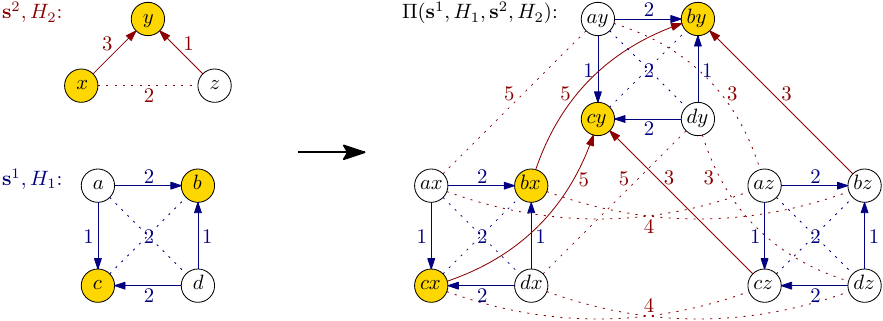}
    \caption{This figure shows two host graphs $H_1$ and $H_2$ (dotted and solid lines) and two respective strategy profiles $s^1$ and $s^2$ (solid lines) on the left. Yellow nodes are terminals and all edges are bought by the nodes where they originate. On the right, you can see the resulting graph product according to \Cref{def:graph_product}. For clarity, all edges with label 6 are not displayed.}\label{fig:product_graph}
\end{figure*}

In this section, we analyze the structure and properties of equilibria. We introduce a tool that constructs host graphs with arbitrary size and number of terminals that contain equilibria. In particular, we define a graph product similar to the cartesian product that transfers equilibria from the input graphs to the product graph. This allows us to translate PoA lower bounds between different numbers of terminals. We begin with a description of the construction.

Intuitively, given two host graphs $H_1$ and $H_2$, this operation creates one copy of $H_1$ for each node in $H_2$ and connects the nodes inside those copies to their counterparts in other copies according to $H_2$ such that all edges inside a copy have smaller labels than the edges between copies, filling the gaps with high labeled edges. This leads to temporal paths first travelling inside a local copy of $H_1$ before using edges from $H_2$ to reach the destination copy of $H_1$. Strategy profiles $\s^1$ and $\s^2$ for $H_1$ and $H_2$ are transformed such that the resulting graph contains all time edges inside the copies of $H_1$ that are present in $G(\s^1)$ and all time edges between those copies corresponding to time edges in $G(\s^2)$ but only if the connected nodes correspond to a terminal in $H_1$. See \Cref{fig:product_graph} for an example.

\begin{definition}[graph product]\label{def:graph_product}
    Let $H_1,H_2$ be host graphs with $n_1,n_2$ nodes and $k_1,k_2$ terminals, respectively. We define the \emph{product} of $H_1$ and $H_2$ as a host graph $\product(H_1,H_2)$ such that
    \begin{align*}
        V_{\product(H_1,H_2)}&\coloneqq V_{H_1}\times V_{H_2}\hspace{2.03cm}\text{and}\\
        \forall e=\{(x_1,x_2),(y_1,y_2)\}&\in E_{\product(H_1,H_2)}\colon\\
        \lambda_{\product(H_1,H_2)}(e)&\coloneqq \begin{cases}
            \lambda_{H_1}(\{x_1,y_1\})&\text{ if }x_2=y_2,\\
            \lambda_{H_2}(\{x_2,y_2\})+\lifetime_{H_1}&\text{ if }x_1=y_1,\\
            \{\lifetime_{H_1}+\lifetime_{H_2}+1\}&\text{ else.}
        \end{cases}
    \end{align*}
    Now, we extend this definition to include strategies. Let $\s^1$ and $\s^2$ be strategy profiles for $H_1$ and $H_2$, respectively. We define the \emph{product} of the two strategy profiles and their host graphs $\product(\s^1,H_1,\s^2,H_2)$ as a pair $(\s^\times,H_\times)$ such that $H_\times\coloneqq \product(H_1,H_2)$ and $\s^\times$ is a strategy profile for $H_\times$ with for all $(v_1,v_2)\in V_\times$ we have
    \begin{align*}
        S^\times_{(v_1,v_2)}\coloneqq& \{((x,v_2),(y,v_2),\lambda)\mid \{x,y,\lambda\}\in S^1_{v_1}\}\,\cup\\
        &\{((v_1,x),(v_1,y),\lambda+\lifetime_{H_1})\mid (x,y,\lambda)\in S^2_{v_2}\wedge v_1\in T_{H_1}\}.
    \end{align*}
\end{definition}

Note that, if $\s^1$ and $\s^2$ are local (i.e. every node only buys incident edges), then $\s^\times$ is local, too. Furthermore, if $H_1$ and $H_2$ are simple, $H_\times$ is simple, too.

The next theorem shows how we can translate equilibria from the original two host graphs into an equilibrium for the graph product. 

\begin{restatable}{theorem}{graphProduct}\label{thm:graph_product}
    Let $H_1$ and $H_2$ be host graphs and $\s^1$ and $\s^2$ equilibria of the same type (NE or GE) for a chosen setting (local or global). Further, let $(\s^\times,H_\times)=\product(\s^1,H_1,\s^2,H_2)$. Then $\s^\times$ is an equilibrium for $H_\times$ for the chosen game setting and equilibrium type.
\end{restatable}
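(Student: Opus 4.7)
The plan is to verify first that $\s^\times$ is a terminal spanner of $H_\times$ (with $T_{H_\times}=T_{H_1}\times T_{H_2}$, as dictated by the strategy construction) and then that no agent has a profitable deviation, by projecting any such deviation back into the component games. The key structural fact is the strict label ordering in $\product(H_1,H_2)$: $H_1$-edges have labels in $[1,\lifetime_{H_1}]$ and preserve the second coordinate; $H_2$-edges have labels in $[\lifetime_{H_1}+1,\lifetime_{H_1}+\lifetime_{H_2}]$ and preserve the first coordinate; and cross-copy edges (absent from $G(\s^\times)$) have label $\lifetime_{H_1}+\lifetime_{H_2}+1$. Hence every temporal path in $G(\s^\times)$ decomposes into a Phase~1 of $H_1$-edges within a single $v_2$-copy followed by a Phase~2 of $H_2$-edges within a single $v_1'$-copy (the latter requiring $v_1'\in T_{H_1}$). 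Feasibility follows: to reach $(t_1,t_2)\in T_{H_1}\times T_{H_2}$ from $(v_1,v_2)$, I would concatenate a $v_1\to t_1$ temporal path in copy $v_2$ of $H_1$ (present since $\s^1$ is a terminal spanner and $t_1\in T_{H_1}$) with a $v_2\to t_2$ temporal path in copy $t_1$ of $H_2$ (present since $t_1\in T_{H_1}$ activates those $H_2$-edges and $\s^2$ is a terminal spanner with $t_2\in T_{H_2}$). Consequently, the cost of $(v_1,v_2)$ in $\s^\times$ equals $|S^1_{v_1}|+\mathds{1}[v_1\in T_{H_1}]\cdot|S^2_{v_2}|$.

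For stability, suppose toward a contradiction that $(v_1,v_2)$ has an improving deviation $S'$. Define the projections
\[\bar S^1 \coloneqq \{(\{x,y\},\lambda) : ((x,v_2),(y,v_2),\lambda)\in S'\}\]
and, when $v_1\in T_{H_1}$,
\[\bar S^2 \coloneqq \{(\{x,y\},\lambda) : ((v_1,x),(v_1,y),\lambda+\lifetime_{H_1})\in S'\},\]
so that $|\bar S^1|+|\bar S^2|\le|S'| < |S^1_{v_1}|+\mathds{1}[v_1\in T_{H_1}]\cdot|S^2_{v_2}|$. Setting $\bar\s^1\coloneqq\s^1_{-v_1}\cup\{(v_1,\bar S^1)\}$ and $\bar\s^2$ analogously, I would show that $\bar\s^1$ is feasible in $H_1$ (and similarly $\bar\s^2$ in $H_2$ when applicable), so at least one strictly undercuts the original agent's cost in the component game, contradicting the equilibrium of $\s^1$ (resp.\ $\s^2$). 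For greedy equilibria, a single-edge deviation in $\s^\times$ projects to at most a single-edge change in one of $\bar\s^1,\bar\s^2$, so the same argument transfers.

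The main obstacle is feasibility of the projections, particularly in the global setting where $S'$ may contain $H_1$-edges outside copy $v_2$, $H_2$-edges outside copy $v_1$, and cross-edges, so temporal paths from $(v_1,v_2)$ in the deviated product graph need not respect the canonical ``copy-$v_2$ then copy-$t_1$'' decomposition. The approach is again to exploit the label ordering: if $v_1$ fails to reach some $t_1\in T_{H_1}$ in $G(\bar\s^1)$, then any witnessing path from $(v_1,v_2)$ to $(t_1,t_2)$ in the deviated product graph must escape copy $v_2$ via $S'$-edges not counted in $\bar S^1\cup\bar S^2$ (cross-edges, or off-copy $H_2$-edges). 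Augmenting $\bar S^1$ by one direct host edge $\{v_1,t_1\}$ per unreached $t_1$ restores feasibility, and a counting argument bounds the number of such augmentations by the number of $S'$-edges outside $\bar S^1\cup\bar S^2$, preserving the strict cost inequality and yielding the desired contradiction.
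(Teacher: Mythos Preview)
Your overall architecture—establish feasibility via the label-ordered Phase~1/Phase~2 decomposition, then project a hypothetical improving deviation back into the component games—matches the paper's. The difference lies in how the global-setting complication (that $S'$ may contain diagonal and off-copy edges) is handled. The paper \emph{normalizes first}: starting from a best response $T^\times$, it replaces the diagonal edges by incident edges to the terminals they serve, and then by $H_1$-edges inside copy $v_2$, obtaining a diagonal-free $Z^\times$ with $|Z^\times|\le|T^\times|$ that partitions cleanly into copy-$v_2$ and copy-$v_1$ pieces before projecting. You instead \emph{project first and repair afterward}, augmenting $\bar S^1$ (and $\bar S^2$) by direct host edges to the unreached terminals and arguing the repairs are paid for by the ``outside'' edges of $S'$. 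Both routes are legitimate; yours is arguably more direct, while the paper's normalization bakes the accounting into an explicit strategy transformation.

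There is, however, a genuine gap in your counting step. The assertion that a witnessing path to $(t_1,t_2)$ with $t_1$ unreached in $G(\bar\s^1)$ ``must escape copy $v_2$ via $S'$-edges not counted in $\bar S^1\cup\bar S^2$'' is false: Phase~1 may end at some $w_1\in T_{H_1}\setminus\{t_1\}$, and the path can then leave copy $v_2$ along an $H_2$-edge already present in $\s^\times_{-(v_1,v_2)}$, using no outside $S'$-edge at that step. What is actually true is that the first coordinate can change only in Phase~1 or via diagonal edges, and since $\s^\times_{-(v_1,v_2)}$ contains no diagonals, such a path must eventually use a diagonal edge of $S'$. To turn this into the inequality you need, you must (i) take $S'$ to be a best response (which you never say), so that its diagonal edges form a forest in which each tree has a single entry point reachable without diagonals, and (ii) observe that every terminal requiring a diagonal edge is a non-entry node of this forest, whence their number is at most the number $d$ of diagonal edges. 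Since each unreached $t_1$ forces all $k_2$ terminals $(t_1,\cdot)$ into this set, and (when $v_1\in T_{H_1}$) each unreached $t_2$ forces $(v_1,t_2)$ in as well, you get $r_1+r_2\le d\le |S'|-|\bar S^1|-|\bar S^2|$, which is exactly the bound your augmentation needs. With this correction the argument goes through; the paper's two-step normalization $T^\times\to U^\times\to Z^\times$ is essentially this same forest reasoning, carried out constructively.
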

\begin{proof}
    The main idea of this proof is to exploit the order of time labels in $H^\times$: The labels of edges inside a copy of $H_1$ are smaller than the labels of edges between copies of $H_1$ that connect nodes corresponding to the same original node in $H_1$ (those edges come from $H_2$). These, in turn, are smaller than edges connecting nodes in different copies of $H_1$ that do not correspond to the same original node in $H_1$. We call the last type of edges \emph{diagonal edges} and show that it is not beneficial for the nodes to buy them. Thus, a best response for some node in $H^\times$ will only buy time edges originating from $H_1$ or $H_2$, not diagonal ones. This best response can then be partitioned into those two kinds of edges which can than be used to construct an improving response either for $\s^1$ or $\s^2$.

    First, we argue that $G(s^\times)$ is a terminal spanner. We observe, that in $G(\s^1)$ and $G(\s^2)$ every node can reach every terminal because otherwise they wouldn't be equilibria. We now argue that this is also true for the resulting product graph. Let $(v_1,v_2)\in V_{H_\times}$ and $(t_1,t_2)\in T_{H_\times}$. Since $t_1$ and $t_2$ are terminals in $H_1$ and $H_2$, respectively, there is a temporal path $v_1,x_1^1,x_1^2,\dots,x_1^p,t_1$ in $G(\s^1)$ and a temporal path $v_2,x_2^1,x_2^2,\dots,x_2^q,t_2$ in $G(\s^2)$. This means there are nodes buying those edges. By construction, the temporal path $(v_1,v_2),(x_1^1,v_2),(x_1^2,v_2),\dots,(x_1^p,v_2),(t_1,v_2)$ and the temporal path $(t_1,v_2),(t_1,x_2^1),(t_1,x_2^2),\dots,(t_1,x_2^q),(t_1,t_2)$ exist in $G(\s^\times)$. Furthermore, by construction, all labels on the second path are larger than the labels on the first path. This means, by concatenating both paths, we obtain a temporal path from $(v_1,v_2)$ to $(t_1,t_2)$. Therefore, all nodes in the product graph can reach all terminals.

    Now, we show that there is no (greedy) improving response for any node in $G(\s^\times)$. Towards a contradiction, let $(v_1,v_2)\in V_\times$ be a node that has an improving response. Let $T^\times$ be its best response. We will create two successively more restrictive new best responses $U^\times$ and $Z^\times$. The latter can then be partitioned to construct either an improving response for $v_1$ for $\s^1$ or an improving response for $v_2$ for $\s^2$.
    
    First, we look at \emph{diagonal} edges $D\coloneqq\{\{(x_1,x_2),(y_1,y_2)\}\in E_\times\mid x_1\neq y_1\wedge x_2\neq y_2\}$. Those are the edges with time label $\lifetime_\times=\lambda^{max}_{H_1}+\lambda^{max}_{H2}+1$. We will abuse notation and use $D$ also as the set of the corresponding time edges. Let $M\subseteq T_{H_\times}$ be the set of terminals that $(v_1,v_2)$ could not reach without diagonal edges. Since diagonal edges have the largest possible time label and no other nodes buy diagonal edges, we have $|M|=|D\cap T^\times|$.  We construct a new strategy $U^\times$ such that
    \begin{align*}
        U^\times\coloneqq T^\times\setminus D \cup \bigcup_{x\in M}\{((v_1,v_2),x,\lambda_x)\}
    \end{align*}
    where $\lambda_x$ is some time label of the edge $\{(v_1,v_2),x\}$.
    We see, that $|U^\times|\le |T^\times|$. Furthermore, $(v_1,v_2)$ is still able to reach all terminals since the replacement of diagonal edges with incident (and still maybe diagonal) time edges doesn't interfere with other temporal paths from $(v_1,v_2)$ due to their high time label. Note, that this step is only necessary in the global setting since in the local setting, all diagonal edges bought by $(v_1,v_2)$ are already incident.

    Next, we replace all remaining diagonal edges by edges inside $(v_1,v_2)$'s copy of $H_1$: 
    \begin{align*}
        Z^\times\coloneqq U^\times\setminus D \cup \bigcup_{\mathclap{\{(v_1,v_2),(x_1,x_2)\}\in U^\times\cap D}}\{((v_1,v_2),(x_1,x_2),\lambda_{x_1,x_2})\}
    \end{align*}
    where $\lambda_{x_1,x_2}$ is again some time label of the edge $\{(v_1,v_2),(x_1,x_2)\}$.
    We again have $|Z^\times|\le |U^\times|$. It remains to show, that $(v_1,v_2)$ can reach every terminal in $G(\s_{-(v_1,v_2)}^\times\cup Z^\times)$. Again, the replacement of diagonal edges only influences the temporal reachability of its endpoints. Let $((v_1,v_2),(x_1,x_2),\lambda)\in U^\times\cap D$ be such a time edge. Then, the edge $\{(v_1,v_2),(x_1,v_2)\}$ is present in $G(\s^\times_{-(v_1,v_2)}\cup Z^\times)$. Also, the temporal path from $(x_1,v_2)$ to $(x_1,x_2)$ from the construction of $\s^\times$ is still present since all time edges on that path are bought by other agents than $(v_1,v_2)$. Lastly, all labels in $\lambda_{H_\times}(\{(v_1,v_2),(x_1,v_2)\})$ are smaller than all time labels on that path. Therefore, there is a temporal path from $(v_1,v_2)$ to $(x_1,x_2)$ in $G(\s^\times_{-(v_1,v_2)}\cup Z^\times)$. Hence, $Z^\times$ is a best response which does not contain any diagonal edges.

    In the final step of this proof, we use $Z^\times$ to construct either an improving response for $v_1$ in $G(\s^1)$ or for $v_2$ in $G(\s^2)$. We observe that due to the construction of $H_\times$ and $\s^\times_{-(v_1,v_2)}\cup Z^\times$ not containing diagonals, that every temporal path from $(v_1,v_2)$ to any terminal $(t_1,t_2)$ first moves inside $(v_1,v_2)$'s copy of $H_1$ and after that moves via edges originating from $H_2$. More concretely, the path has the form $(v_1,v_2),(x_1^1,v_2),(x_1^2,v_2),\dots,(x_1^p,v_2),(t_1,v_2),(t_1,x_2^1),(t_1,x_2^2),\dots$,\\$(t_1,x_2^q), (t_1,t_2)$. Since for $v_1\neq t_1$ the second half of those paths (from $(t_1,v_2)$ to $(t_1,t_2)$ already exist in $G(\s^\times_{-(v_1,v_2)})$ and $Z^\times$ is a best response, we can partition $Z^\times$ into
    \begin{align*}
        Z_1^\times&\coloneqq Z^\times \cap \{((x_1,v_2),(y_1,v_2),\lambda)\mid x_1,y_1\in V_1\}\} \text{ and}\\ Z_2^\times&\coloneqq Z^\times \cap \{((v_1,x_2),(v_1,x_2),\lambda)\mid x_2,y_2\in V_2\}\}.
    \end{align*}
    We use this partition to construct strategies for $v_1$ in $G(\s^1)$ or for $v_2$ in $G(\s^2)$. More precisely, we set
    \begin{align*}
        Z^1&\coloneqq\{(x_1,y_1,\lambda)\mid ((x_1,v_2),(y_1,v_2),\lambda)\in Z_1^\times\} \text{ and}\\
        Z^2&\coloneqq\{(x_2,y_2,\lambda)\mid ((v_1,x_2),(v_1,y_2),\lambda)\in Z_2^\times\}.
    \end{align*}
    
    We now make a case distinction. Case $v_1\notin T_1$: Then, $|Z^1|=|Z^\times_1|=|Z^\times|<|S^\times_{(v_1,v_2)}|=|S^1_{v_1}|$. Now let $t_1\in T_1$ and $t_2\in T_2$ be terminals. Since $Z^\times$ is a best response, there is a temporal path $(v_1,v_2),(x_1^1,v_2),(x_1^2,v_2),\dots,(x_1^p,v_2),(t_1,v_2),(t_1,x_2^1),(t_1,x_2^2),\dots,$\\$(t_1,x_2^q),(t_1,t_2)$ in $G(\s^\times_{-(v_1,v_2)}\cup Z^\times)$ which therefore means that $v_1,x_1^1,x_1^2,\dots,x_1^p,t_1$ is a temporal path in $G(\s^1_{-v_1}\cup Z^1)$. Thus, $Z^1$ is an improving response for $v_1$ contradicting that $\s_1$ is an equilibrium.

    Case $v_1\in T_1$: Then, $|Z^1|+|Z^2|=|Z^\times_1|+|Z^\times_2|=|Z^\times|<|S^\times_{(v_1,v_2)}|=|S^1_{v_1}|+|S^2_{v_2}|$. This means that $|Z^1|<|S^1_{v_1}|$ or $|Z^2|<|S^2_{v_2}|$. If $|Z^1|<|S^1_{v_1}|$, the argument is exactly the same as in the previous case. So let $|Z^2|<|S^2_{v_2}|$ and $t_2\in T_2$. Then there is a temporal path $(v_1,v_2),(v_1,x_2^1),(v_1,x_2^2),\dots,(v_1,x_2^q),(v_1,t_2)$ in $G(\s^\times_{-(v_1,v_2)}\cup Z^\times)$ which means that $v_2,x_2^1,x_2^2,\dots,x_2^q,t_2$ is a temporal path in the graph $G(\s^2_{-v_2}\cup Z^2)$. Therefore, $Z^2$ is an improving response for $v_2$ contradicting that $\s^2$ is an equilibrium.
    
    Since we get a contradiction in both cases, the assumption that there is an improving response for $(v_1,v_2)$ for $\s^\times$ is wrong and therefore $\s^\times$ is an equilibrium for $H_\times$. Note, that all steps were formulated to work for the most general case, namely for NE in the global setting. They however also hold for the local setting since all steps maintain incidency and for GE.
\end{proof}

With the product graph construction, we can get host graphs and equilibria for some combinations of $n$ and $k$. To obtain equilibria for (almost) any combination of $n$ and $k$, we show some useful properties. The following corollary shows that if we we have an equilibrium for $k$ terminals and nodes, we can scale up the example to arbitrarily high number of nodes while keeping the ratio between the time edges in the equilibrium and the number of nodes. That enables us to derive PoA bounds from single examples. The scaled up graphs are created by using the graph product on the original example and a graph with only one terminal.

\begin{restatable}{corollary}{NETerminalsNonTerminals}
    \label{cor:NE_terminals_non_terminals}
    Let $H_1$ be a host graph on $k$ nodes which are all terminals and $\s^1$ be an equilibrium for $H_1$ with $m_1$ time edges. Then, for each $c\in\mathds{N}$, there is a host graph $H$ with $n\coloneqq ck$ nodes and $k$ terminals, and an equilibrium for $H$ containing $m\coloneqq cm_1+(c-1)k$ time edges. Additionally, if $k\ge 2$ and $c\ge 3$, the host graph $H$ contains a spanning tree consisting of edges with label $\lifetime_H$.
\end{restatable}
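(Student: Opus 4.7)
The plan is to invoke Theorem~\ref{thm:graph_product} with a carefully chosen partner graph. I would let $H_2$ be the complete host graph on $c$ nodes with exactly one terminal $t$ and with every edge carrying all labels $1,\ldots,\lifetime_{H_2}$ (so the label-continuity requirement holds). Let $\s^2$ be the ``star at $t$'' profile in which each of the other $c-1$ nodes buys exactly one time edge to $t$ and $t$ itself buys nothing. Since every non-terminal in $H_2$ needs at least one bought incident time edge just to reach $t$, and $t$ has no reachability obligation, $\s^2$ is simultaneously a local and a global NE, and therefore also a GE. Thus Theorem~\ref{thm:graph_product} applies in whichever of the four flavours (local or global, NE or GE) the given $\s^1$ belongs to.

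Setting $(\s^\times, H_\times) = \product(\s^1, H_1, \s^2, H_2)$ yields an equilibrium of the same type for $H_\times$. The numerical claims drop out of \Cref{def:graph_product}: $|V_{H_\times}| = kc = n$ and $|T_{H_\times}| = |T_{H_1}|\cdot|T_{H_2}| = k\cdot 1 = k$; moreover, because every $v_1 \in V_{H_1}$ is a terminal of $H_1$, each strategy $S^\times_{(v_1,v_2)}$ contains a copy of both $S^1_{v_1}$ and $S^2_{v_2}$, summing to exactly $c\cdot m_1 + k(c-1)$ time edges.

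For the additional assertion, note that by \Cref{def:graph_product} we have $\lifetime_{H_\times} = \lifetime_{H_1} + \lifetime_{H_2} + 1$, and this label occurs exclusively on the \emph{diagonal} edges $\{(x_1,x_2),(y_1,y_2)\}$ with $x_1\neq y_1$ and $x_2\neq y_2$. Under the hypotheses $k\ge 2$ and $c\ge 3$, I would build a spanning tree from diagonals only. Fix a root $(v_1^*, v_2^*)$ and some $v_1^{**} \neq v_1^*$ (exists since $k\ge 2$). In step (i), attach every $(x_1,x_2)$ with $x_1\neq v_1^*$ and $x_2\neq v_2^*$ directly to the root by a diagonal. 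In step (ii), for each $(v_1^*, x_2)$ with $x_2\neq v_2^*$, choose $y_2 \in V_{H_2}\setminus\{v_2^*, x_2\}$ (exists since $c\ge 3$) and attach it by the diagonal to $(v_1^{**}, y_2)$, already placed in (i). In step (iii), attach each $(x_1, v_2^*)$ with $x_1\neq v_1^*$ by the diagonal to $(v_1^*, y_2)$ for any $y_2\neq v_2^*$, already placed in (ii). A direct count gives $(k-1)(c-1) + (c-1) + (k-1) = kc - 1 = n-1$ edges, hence a spanning tree.

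The argument is essentially bookkeeping on top of Theorem~\ref{thm:graph_product}; the only mildly delicate point is the spanning-tree construction, and its tightest case is $k=2$: the single alternative column $v_1^{**}$ forces us to reach the row through $v_2^*$ via two distinct $H_2$-coordinates, which is exactly why the hypothesis is $c\ge 3$ rather than $c\ge 2$.
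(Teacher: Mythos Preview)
Your proof is correct and follows essentially the same approach as the paper's: both apply Theorem~\ref{thm:graph_product} with $H_2$ a $c$-node host graph having a single terminal $t$, and both observe that a tree-shaped equilibrium $\s^2$ with $c-1$ time edges exists (you pick the star at $t$ explicitly, the paper just asserts such an equilibrium exists). The node, terminal, and time-edge counts are then read off from \Cref{def:graph_product} in the same way.

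The only notable difference is in the spanning-tree part. The paper argues that for $k\ge 2$ and $c\ge 3$ the diagonal edges (all labelled $\lifetime_{H_\times}$) connect any two nodes by a path of length~2, hence form a connected spanning subgraph. Your explicit three-step tree construction is a bit longer but is actually more careful: the paper's diameter-2 claim is slightly imprecise when $k=2$ (two nodes sharing the same $H_2$-coordinate but differing in the $H_1$-coordinate need a diagonal path of length~3, not~2), though the conclusion that the diagonal subgraph is connected remains valid. Your construction sidesteps this by building the tree directly. One small cosmetic point: your description of $H_2$ as having ``all labels $1,\ldots,\lifetime_{H_2}$'' is circular; just fixing $\lifetime_{H_2}=1$ (every edge labelled $\{1\}$) would make the choice concrete.
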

\begin{proof}
    Let $H_2$ be any host graph that contains exactly $c$ nodes and one terminal $t$. Then every equilibrium for $H_2$ forms a tree containing all nodes such that edge labels are non-decreasing towards $t$. Obviously, such an equilibrium $\s^2$ exists. Let $(\s,H)\coloneqq\product(\s^1, H_1,\s^2,H_2)$ be the graph product of those two host graphs and strategy profiles. By \Cref{thm:graph_product}, $\s$ is an equilibrium. Furthermore, $H$ contains $ck$ nodes and $G(\s)$ contains $cm_1+(c-1)k$ time edges. Lastly, we observe that for $k\ge 2$ and $c\ge 3$ that there is a path of length 2 with edge label $\lifetime_H$ between any pair of nodes consisting of diagonal edges.
\end{proof}

\begin{figure*}
    \centering
    \includegraphics{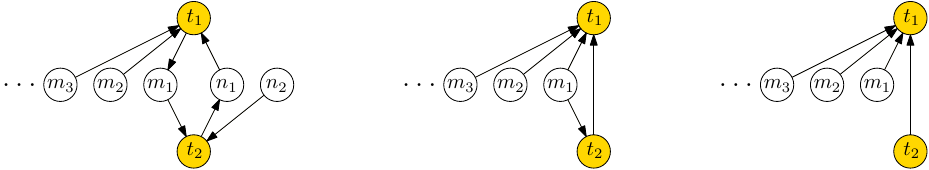}
    \caption{This figure illustrates the equilibrium constructions for the proof of \Cref{thm:NE_two_terminals}. On the left, we have the case where $M\neq\varnothing$ and $N\neq\varnothing$. The middle shows the case where $N=\varnothing$ and $\min\lambda(m_1,t_1)>\min\lambda(t_1,t_2)$ and the right illustrates the case where $N=\varnothing$ and $\min\lambda(m_1,t_1)<\min\lambda(t_1,t_2)$.}
    \label{fig:two_terminals_NE}
\end{figure*}

The following lemma lets us fill the gaps that are left from the previous corollary. \Cref{cor:NE_terminals_non_terminals} lets us create examples with a specific ratio between time edges in the equilibrium and nodes in the graph for arbitrary multiple of the size of a given example. With the following lemma we also get examples for numbers of nodes that are not multiples of the original by adding nodes (and maybe terminals) one by one without changing the number of edges in the equilibrium too much.

\begin{restatable}{lemma}{NEDismountable}\label{lem:NE_dismountable}
    Let $H$ be a host graph on $n$ nodes and $k$ terminals and $\s$ be an equilibrium (Nash or Greedy) creating $m$ time edges in the local or global setting. Then there is
    \begin{compactenum}
        \item a host graph $H_1$ on $n+1$ nodes and $k+1$ terminals and an equilibrium $\s^1$ (of the same type and for the same setting as $\s$) creating at least $m+1$ time edges and
        \item a host graph $H_2$ on $n+1$ nodes and $k$ terminals and an equilibrium $\s^2$ (of the same type and for the same setting as $\s$) creating $m+1$ time edges.
    \end{compactenum}
    Additionally, if $H$ contains a spanning tree consisting of edges with label $\lifetime_H$, then $H_1$, and $H_2$ contain spanning trees consisting of edges with label $\lifetime_{H_1}$ and $\lifetime_{H_2}$, respectively.
\end{restatable}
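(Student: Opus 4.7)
The plan is to attach one new vertex $u$ to the host graph $H$, carefully choose the labels of the edges incident to $u$, and extend (with at most one small modification) the given equilibrium $\s$ to an equilibrium of the new graph with at least one extra bought time edge. In both parts I preserve the labels of $E_H$ and set $\lifetime_{H_i} = \lifetime_H + 1$, reserving the new label slot for high-label edges incident to $u$.

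For case~2 I fix a terminal $t \in T_H$ and set $\lambda_{H_2}(\{u, t\}) = \{1\}$ together with $\lambda_{H_2}(\{u, w\}) = \{\lifetime_H + 1\}$ for every $w \neq t$. The profile $\s^2$ has $u$ buying only the single time edge $(u, t, 1)$ and every other agent keeping its strategy from $\s$. Then $u$ reaches every terminal by travelling to $t$ at time $1$ and following $\s$'s paths, while no existing agent must reach $u$ since $u$ is not a terminal. The crucial property is that $u$ cannot act as a router: its unique bought outgoing edge has label $1$, and any other agent that arrives at $u$ does so at time $\geq \lifetime_H + 1$ and finds no outgoing edge left to use. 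Hence every $S_w$ remains a best response and $\s^2$ has exactly $m+1$ bought time edges.

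For case~1, $u$ becomes a new terminal. When $k \geq 2$ I pick distinct $t^*, t_1 \in T_H$, set $\lambda_{H_1}(\{u, t^*\}) = \{1, \lifetime_H + 1\}$ and $\lambda_{H_1}(\{u, w\}) = \{\lifetime_H + 1\}$ for every $w \neq t^*$, and define $\s^1$ by having $u$ buy $(u, t^*, 1)$ and the \emph{anchor} $t_1$ additionally buy $(t_1, u, \lifetime_H + 1)$; all other agents keep their strategies from $\s$, except that if $S_{t^*}$ contains an edge whose sole purpose is to reach $t_1$, I drop that one edge. Every agent reaches $u$ through $t_1$ (since $t_1$ is a terminal and is therefore reached by everyone in $\s$), $u$ reaches $T_H$ via $t^*$ and $\s$'s paths, and $t^*$ reaches $u$ for free via $u$'s own edge. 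The anchor $t_1$ cannot remove its edge, because no other agent buys any edge incident to $u$ and reaching $t^*$ by time $1$ is impossible from $t_1$. The small cases $k = 1$ and $k = 0$ admit direct constructions: when $k = 1$, $u$ alone buys a single high-label edge to the unique terminal, yielding $m + 1$ edges; when $k = 0$, every $w \in V_H$ buys its own edge to $u$ and no free-riding is possible since $\s$ is empty, yielding $n \geq m + 1$.

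The main obstacle in case~1 is verifying that no existing agent has an improving response. The labels I chose guarantee that from $u$ only $t_1$ is reachable by anyone arriving at time $\lifetime_H + 1$: the edge $(u, t^*, 1)$ is in the past for such arrivers, and $(t_1, u, \lifetime_H + 1)$ only leads back to $t_1$. Hence any deviation by an agent $w \notin \{t^*, t_1, u\}$ that exploits the route through $u$ must buy $(w, u, \lifetime_H + 1)$ and can drop at most one $t_1$-dedicated edge from $S_w$, constituting a cost-neutral swap rather than a strict improving response. The delicate case is $t^*$, who reaches $u$ free of charge and could drop an $S_{t^*}$-edge dedicated to $t_1$ without replacement; this is precisely the edge I preemptively remove from $\s^1$, which lowers the count by one but keeps the total at $m + 2 - 1 = m + 1$. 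The spanning-tree addendum follows immediately since the edges $\{u, w\}$ all have label $\lifetime_H + 1 = \lifetime_{H_i}$ and form a star at $u$ spanning every vertex.
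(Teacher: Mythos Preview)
Your Part~2 is correct and essentially matches the paper's construction (the paper shifts all labels of $H$ up by one rather than relying on non-strict temporal paths, but your simplification works for the same reason).

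Your Part~1, however, has a genuine gap centered on the treatment of $t^*$. First, the ``preemptive drop'' of an edge $e\in S_{t^*}$ dedicated to $t_1$ can destroy the terminal-spanner property: other agents may use $e$ on their own paths to~$t_1$, and after you delete it they have no detour available (an agent $w\neq t^*$ can only enter $u$ via $t_1$, so $u$ gives $w$ no alternate route to $t_1$). Second, dropping a single edge need not neutralise $t^*$'s improving response: if $t^*$'s unique path to $t_1$ in $G(\s)$ contains several edges all bought by $t^*$, each of them satisfies $A_{G(\s)}(\cdot)=\{t_1\}$, and after you remove one, $t^*$ can still strictly improve by removing the next. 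Finally, your claim that ``reaching $t^*$ by time $1$ is impossible from $t_1$'' is unjustified, because you do not shift the labels of $H$; a label-$1$ path from $t_1$ to $t^*$ in $G(\s)$ would let $t_1$ reach $u$ for free and drop the anchor edge.

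The paper circumvents all three problems with a structurally different attachment. It looks at the forest $F$ of max-label edges in $G(\s)$, finds a tree $B\subseteq F$ together with a terminal outside $B$, and picks a node $a\in B$ that buys no edge of $B$ (such an $a$ exists since $B$ is a tree). After shifting all labels up by one, the new terminal $x$ is attached to $a$ with the unique label $1$ and to some other $b\in B$ with label $\lifetime_H+1$; node $b$ buys the latter edge. The only agent who could exploit the new shortcut through $x$ is $a$ (everyone else would have to arrive at $a$ strictly before time~$1$, which is impossible after the shift), and the shortcut only leads back into $B$; since $a$ owns no $B$-edges, $a$ has nothing to discard. This careful choice of attachment points, driven by the max-label forest, is the missing idea in your proposal.
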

\begin{proof}
    (2.) Let $x\notin V_H$ and $y\in V_H$. We construct $H_2$ such that $V_{H_2}=V_H\cup\{x\}$, $T_{H_2}=T_H$, and for all $\{u,v\}\in E_{H_2}$
    \begin{align*}
        \lambda_{H_2}(u,v)=\begin{cases}
            \lambda_H(u,v)+1&\text{, if }u\neq x \wedge v\neq x,\\
            \{1\}&\text{, if } \{u,v\}=\{x,y\},\\
            \{\lifetime_H+1\}&\text{, else.}
        \end{cases}
    \end{align*}
   Since $\s$ is an equilibrium, it is easy to see that $\s^2=\s\cup(x,\{(x,y,1)\})$ is an equilibrium for $H_2$, too. Furthermore, a spanning tree with label $\lifetime_H$ in $H$ together with any edge $\{x,v\},v\neq y$ will be a spanning tree with label $\lifetime_H+1$ in $H_2$.
    
    (1.) Let $F$ be the forest containing all edges with label $\lifetime_{G(\s)}$. We make a case distinction:

    Case 1: $F$ is a tree and contains all terminals from $H$. Then $G(\s)$ is a tree. For $H_1$, we can simply take the host graph with $n+1$ nodes, $k+1$ terminals and all edges having label 1.

    Case 2: There is a terminal from $H$ that is not in $F$ or a $F$ is not a tree. In both cases, we can find an (inclusion maximal) tree $B$ in $F$ and a terminal $t$ not in $B$. There also is a terminal $t_B$ in $B$ since otherwise all edges in $B$ would not be beneficial for any node. Lastly, there is a node $a$ in $B$ that does not buy any edge in $B$.
    
    Let $b\neq a$ be another node in $B$ and $x\notin V_H$. We construct $H_1$ such that $V_{H_1}=V_H\cup\{x\}$, $T_{H_1}=T_H\cup\{x\}$, and for all $\{u,v\}\in E_{H_1}$
    \begin{align*}
        \lambda_{H_2}(u,v)=\begin{cases}
            \lambda_H(u,v)+1&\text{, if }u\neq x \wedge v\neq x,\\
            \{1\}&\text{, if } \{u,v\}=\{x,a\},\\
            \{\lifetime_H+1\}&\text{, else.}
        \end{cases}
    \end{align*}
    Construct $\s^1$ such that $S^1_x=\{(x,a,1)\}$, $S^1_b=S_b\cup\{(x,b,\lifetime_H+1)\}$ and $\forall v\in V_{H}\setminus\{b\}\colon S^1_v=S_v$.

    We see, that $G(\s^1)$ contains $m+2$ edges. We now argue, that $\s^1$ is an equilibrium. We first observe that $G(\s^1)$ is a terminal spanner since every node can reach every terminal from $H$ just like in $G(\s)$ and can also reach $x$ by first going to $t_B$ and then to $x$ through $B$ and via $\{x,b\}$.

    Finally, we argue that no node can buy a smaller edge set. $x$ cannot reach $t$ via the edge $\{x,b\}$ because $t$ is not in $B$. Therefore, $x$ does not want to remove the edge $\{x,a\}$ which indeed ensures reachability for $x$ to every terminal since $a$ can reach every terminal. The only other node that could benefit from $\{x,a\}$ is $a$ but $a$ could only use it to reach nodes in $B$. Since $a$ doesn't buy edges in $B$, $a$ cannot use this to improve her strategy. Since $b$ has no other option to reach $x$ besides buying $\{b,x\}$ she cannot remove this edge. Any improving response by any node $v\neq x$ would also be an improving response for $\s$ on $H$. Thus, $\s^1$ is an equilibrium.

    Note, that a spanning tree with label $\lifetime_H$ in $H$ together with any edge $\{x,v\},v\neq a, v\neq b$ will be a spanning tree with label $\lifetime_H+1$ in $H_1$.
\end{proof}

The previous lemmas give us a way to create graphs with equilibria of specific sizes. In the next part, we analyze for which kind of host graphs equilibria exist.

The first observation is that when the host graph only contains two distinct labels than an equilibrium in the form of a spanning tree exists. This follows the same way as the same statement for the local setting in \cite{tempNCG_ijcai}.
\begin{restatable}{corollary}{NELifetimeTwo}\label{cor:NE_lifetime_two}
    Let $H$ be a host graph with $\lifetime=2$. Then there is a NE $\s$ for $H$. Furthermore, $G(\s)$ is a spanning tree.
\end{restatable}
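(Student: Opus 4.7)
The plan is to directly exhibit a spanning tree of $H$ whose edges all share a single label, then promote it to an NE by rooting it at a terminal. The key structural fact I would prove first is a connectivity dichotomy: writing $H_i \coloneqq (V_H, \{e \in E_H : i \in \lambda_H(e)\})$ for $i \in \{1,2\}$, at least one of $H_1, H_2$ is connected on $V_H$. Assuming $H_1$ has components $V_1, \ldots, V_p$ with $p \ge 2$, every edge between distinct $V_i$ and $V_j$ must carry label $2$ (otherwise it would lie in $H_1$ and merge the components) and thus belongs to $H_2$. These inter-component edges form a complete multipartite graph on $V_H$, which is connected for $p \ge 2$, so $H_2$ spans $V_H$; the symmetric statement holds when $H_2$ is disconnected.

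Next, I would take $T$ to be a spanning tree of whichever of $H_1, H_2$ is connected. Every edge of $T$ then carries a common label $\ell \in \{1,2\}$, so every path in $T$ is monochromatic and hence a temporal path in either direction. In particular, $T$ is a temporal spanner, and thus a terminal spanner for any choice of $T_H$. I would then fix an arbitrary terminal $t^* \in T_H$, orient $T$ toward $t^*$, and let $\s$ be the strategy profile in which each non-root node $v$ owns the unique edge of $T$ incident to $v$ that lies on the path from $v$ to $t^*$.

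Finally, I would verify that $\s$ is an NE in both the local and global $k$-tNCG. Every terminal is reached, so each agent's cost equals the number of edges it buys: $1$ for every non-root and $0$ for $t^*$. Deleting the owned edge $e_v$ severs $v$'s subtree from the rest of $T$ in $G(\s_{-v})$, so $v$ loses its temporal path to $t^*$ and incurs the penalty $C > 1$; by the paper's swap footnote, no swap can strictly improve cost; and adding any further edge (incident or not, in either setting) only raises cost. The main obstacle I would anticipate is ruling out global-setting deviations, where an agent could in principle replace $e_v$ with a non-incident edge that bridges $v$'s subtree back to the rest. Such a substitution keeps the strategy size at $1$, so even when it restores temporal reachability to $t^*$ (which requires the new edge to carry label $\ell$ for the patched path to remain monochromatic), the resulting cost is unchanged and hence no strict improvement is possible.
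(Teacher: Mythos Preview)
Your argument is correct and complete. The paper itself does not spell out a proof here; it simply defers to Theorem~5 of \cite{tempNCG_ijcai}, and your monochromatic-spanning-tree construction via the connectivity dichotomy on $H_1$ versus $H_2$ is exactly the standard argument used there (adapted cleanly to terminals by rooting at some $t^*\in T_H$ and to the global setting by the cost-$1$ observation). One cosmetic point: when you write ``each non-root node $v$ owns the unique edge of $T$ incident to $v$,'' make explicit that $v$ buys the \emph{time edge} $(e_v,\ell)$ with the common label $\ell$, so that $G(\s)$ is literally the monochromatic tree; otherwise a reader might wonder which label is purchased on edges of $T$ that happen to carry both labels in $H$.
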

\begin{proof}
    The proof follows the same arguments as in \cite{tempNCG_ijcai} Theorem 5.
\end{proof}

Interestingly, we can also show the existence of equilibria if we restrict the host graph to only two terminals.
\begin{restatable}{theorem}{NETwoTerminals}\label{thm:NE_two_terminals}
    Let $H$ be a host graph with $k=2$ terminals. Then there is a NE $\s$ for the local and the global setting containing at most $n$ (time) edges.
\end{restatable}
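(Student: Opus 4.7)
The plan is to prove the theorem by explicit construction. I would build a strategy profile $\s$ that uses only edges incident to the buyer, so that $\s$ belongs to the strategy space of both the local and the global edge-buying setting simultaneously. The main ingredients are the cheapest direct edge $e^*=(t_1,t_2,\lambda^*)$ with $\lambda^*=\min\lambda_H(\{t_1,t_2\})$, and, for every non-terminal $v$, the quantities $\alpha_v=\min\lambda_H(\{v,t_1\})$ and $\beta_v=\min\lambda_H(\{v,t_2\})$. I partition the non-terminals into $M=\{v:\alpha_v\le\beta_v\}$ and $N=\{v:\alpha_v>\beta_v\}$; renaming the terminals if necessary, assume $M\neq\varnothing$, and when $M$ is non-empty let $m_1\in M$ maximize $\alpha_{m_1}$ (and symmetrically define $n_1\in N$ when useful). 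As the figure shows, three cases arise.

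In the simplest case $N=\varnothing$ and $\alpha_{m_1}\le\lambda^*$ (right of the figure), every $v\in M$ buys $(v,t_1,\alpha_v)$ and $t_2$ buys $e^*$, yielding $n-1$ edges; every non-terminal reaches $t_1$ directly and $t_2$ through $e^*$, and the terminals reach each other through $e^*$. In the hub case $N=\varnothing$ and $\alpha_{m_1}>\lambda^*$ (middle of the figure), $m_1$ additionally buys $(m_1,t_2,\beta_{m_1})$ and becomes a hub; every other $v\in M$ still buys $(v,t_1,\alpha_v)$ and reaches $t_2$ via the temporal path $v\to t_1\to m_1\to t_2$, whose labels $\alpha_v\le\alpha_{m_1}\le\beta_{m_1}$ are non-decreasing by the choice of $m_1$ and because $m_1\in M$; the edge $e^*$ remains bought by $t_2$, so the count is $1+(n-3)+2=n$. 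In the balanced case $M,N$ both non-empty (left of the figure), I would perform the analogous construction on each side: at most one node from each side can be promoted to a hub, and whenever both hubs are needed I would drop $e^*$ and rely on the two hubs to provide the $t_1\leftrightarrow t_2$ temporal paths, keeping the total at $n$.

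The NE verification would check that every non-terminal owns the minimum number of incident edges enabling it to reach both terminals: removing any owned edge severs the unique temporal path to at least one terminal and therefore costs $C>1$. The subtle point concerns the terminals themselves in the hub case, since the hub's two edges already realize the path $t_1\to m_1\to t_2$ but not its reverse (the latter would require $\beta_{m_1}\le\alpha_{m_1}$); consequently, assigning $e^*$ to $t_2$---so that $t_1$ owns nothing and $t_2$ keeps the only $t_2\to t_1$ path---rules out any profitable deviation from a terminal. The main obstacle is the balanced case in which both $m_1$ and $n_1$ must become hubs: I must show that no hub can exploit the other hub's edges to drop one of its own. This reduces to checking that candidate shortcut temporal paths such as $m_1\to t_2\to n_1\to t_1$ violate monotonicity, which hinges on choosing $m_1$ and $n_1$ as the maximizers of $\alpha$ and $\beta$ in their respective classes and, if this is still not enough, redirecting the edges of some non-terminals from their ``closer'' terminal to the opposite one so that they reach the closer one through the existing hub rather than requiring a new one.
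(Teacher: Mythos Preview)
Your construction and the paper's share the same partition of non-terminals into $M$ and $N$ and the same star sub-case, but there is a genuine gap in your NE verification. You argue only that ``removing any owned edge severs the unique temporal path to at least one terminal'', which is a Greedy-Equilibrium check, not a Nash-Equilibrium check. Whenever a node owns two edges you must rule out \emph{every} alternative strategy, in particular the swap of both edges for a single one. In your hub case ($N=\varnothing$, $\alpha_{m_1}>\lambda^*$) the hub $m_1$ owns $(m_1,t_1,\alpha_{m_1})$ and $(m_1,t_2,\beta_{m_1})$, but may be able to drop both and buy a single edge $(m_1,m_i,\lambda)$ with $\lambda\le\alpha_{m_i}\le\lambda^*$ for some $m_i\in M$, reaching $t_1$ via $m_i$ and then $t_2$ via $e^*$. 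Nothing in your choice of $m_1$ (as the maximizer of $\alpha$) prevents such an $m_i$ from existing; the paper explicitly identifies this deviation and resolves it by an iterative process that re-routes $m_1$ through such an $m_i$ and then repeats the case distinction with $m_2$ playing the role of $m_1$, until the first sub-case applies.

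Your balanced case inherits the same problem twice over, since you promote one node on each side to a two-edge hub. The paper sidesteps this entirely with a different construction: instead of hubs, it builds the $4$-cycle $t_1\!\to m_1\!\to t_2\!\to n_1\!\to t_1$ where each of $t_1,m_1,t_2,n_1$ buys exactly one edge of the cycle (and every other non-terminal buys one edge to its closer terminal). Because no agent owns more than one edge, the NE check reduces to showing that the single owned edge cannot be dropped, which is immediate. Your sketch ``redirecting the edges of some non-terminals \ldots'' gestures at a fix, but does not supply one; either adopt the paper's cycle assignment in the balanced case, or give an explicit argument (and possibly an iterative repair) ruling out the two-for-one swap for each hub.
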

\begin{proof}
    For simplicity, we assume distinct edge labels in the host graph. When edge labels are not distinct that might lead to one of the edges of our construction being dropped. Other than that the proof works the same. Let $T=\{t_1,t_2\}$ be the set of terminals. We partition all remaining nodes into
    \begin{align*}
        M&\coloneqq\{v\in V\setminus T\mid \min\lambda(t_1,v)< \min\lambda(v,t_2)\}\qquad\text{and}\\
        N&\coloneqq\{v\in V\setminus T\mid \min\lambda(t_1,v)> \min\lambda(v,t_2)\}.
    \end{align*}
    Let further $m_1,\dots,m_p\in M$ be the nodes in $M$ sorted descendingly by $\min\lambda(t_1,m_i)$ and $n_1,\dots,n_q\in N$ be the nodes in $N$ sorted descendingly by $\min\lambda(n_i,t_2)$.

    If $M$ and $N$ are not empty, we construct the strategy profile $\s$ such that for all $v\in V$
    \begin{align*}
        S_v\coloneqq\begin{cases}
            \{(t_1,m_1,\min\lambda(t_1,m_1))\}&\text{, if } v=t_1\\
            \{(m_1,t_2,\min\lambda(m_1,t_2))\}&\text{, if } v=m_1\\
            \{(t_2,n_1,\min\lambda(t_2,n_1))\}&\text{, if } v=t_2\\
            \{(n_1,t_1,\min\lambda(n_1,t_1))\}&\text{, if } v=n_1\\
            \{(v,t_1,\min\lambda(v,t_1))\}&\text{, if } v\in M\setminus\{m_1\}\\
            \{(v,t_2,\min\lambda(v,t_2))\}&\text{, if } v\in N\setminus\{n_1\}.
        \end{cases}
    \end{align*}
    See \Cref{fig:two_terminals_NE} for an illustration of $G(\s)$.

    We see, that every node is able to reach both $t_1$ and $t_2$ since both terminals can reach each other and all other nodes arrive at their adjacent terminal before. Furthermore, every node buys exactly one (time) edge. Therefore, an improving response can only be the removal of that edge. This is trivially not beneficial for all nodes except $t_1, t_2, m_1$ and $n_1$ because this would disconnect them. If $t_1$ or $m_1$ would remove their edge, they wouldn't be able to reach $t_2$. The same holds for $t_2$ and $n_1$ regarding $t_1$. Thus, $\s$ is indeed a NE.

    We now consider the case where $M$ or $N$ is empty. Let, without loss of generality, $N=\varnothing$. Then $n_1$ does not exist. We now make a case distinction:

    Case $\min\lambda(m_1,t_1)<\min\lambda(t_1,t_2)$: Here, we set for all $v\in V$
    \begin{align*}
        S_v\coloneqq\begin{cases}
            \varnothing&\text{, if } v=t_1\\
            \{(m_1,t_1,\min\lambda(m_1,t_1))\}&\text{, if } v=m_1\\
            \{(t_2,t_1,\min\lambda(t_2,t_1))\}&\text{, if } v=t_2\\
            \{(v,t_1,\min\lambda(v,t_1))\}&\text{, if } v\in M\setminus\{m_1\}.
        \end{cases}
    \end{align*}
    This yields a star where every node can reach both $t_1$ and $t_2$ and any edge removal would result in disconnecting the removing node from the rest of the graph. Therefore, $\s$ is a NE.

    Case $\min\lambda(m_1,t_1)>\min\lambda(t_1,t_2)$: Here, we set for all $v\in V$
    \begin{align*}
        S_v\coloneqq\begin{cases}
            \varnothing&\text{, if } v=t_1\\
            \{(m_1,t_1,\min\lambda(m_1,t_1)),\\\qquad\quad(m_1,t_2,\min\lambda(m_1,t_2))\}&\text{, if } v=m_1\\
            \{(t_2,t_1,\min\lambda(t_2,t_1))\}&\text{, if } v=t_2\\
            \{(v,t_1,\min\lambda(v,t_1))\}&\text{, if } v\in M\setminus\{m_1\}.
        \end{cases}
    \end{align*}
    It is again easy to see, that each node can reach every terminal. Also, no node wants to sell any of their edges. But since $m_1$ is now buying two edges, we need to consider the case where $m_1$ sells both edges and buys a single edge instead. In order to still be able to reach $t_1$ and $t_2$, it needs to be an edge to some node $m_i$ such that $\min\lambda(m_i,t_1)<\min\lambda(t_1,t_2)$. If there is such a node $m_i$ we change $S_1$ to $\{(m_1,m_i,\min\lambda(m_1,m_1))\}$ and replace the role of $m_1$ by $m_2$, we can now make the same case distinction again:
    If $\min\lambda(m_2,t_1)<\min\lambda(t_1,t_2)$, setting $S_{m_2}=\{(m_2,t_1,\min\lambda(m_2,t_1))\}$ yields an equilibrium. If $\min\lambda(m_2,t_1)>\min\lambda(t_1,t_2)$ we set the strategy of $m_2$ to $S_{m_2}=\{(m_2,t_1,\min\lambda(m_2,t_1)),(m_2,t_2,\min\lambda(m_2,t_2))\}$ and repeat. This process will stop eventually by going into the first case leaving us with an NE. Note, that we didn't assume incident edges but the resulting strategy profile only contains incident edges. Therefore, this result holds for both the local and the global setting.
\end{proof}

\subsection{Local Edge-Buying $k$-Terminal TNCG}\label{subsec:local}
In this section, we analyze properties of equilibria for the local setting.

We prove an upper bound on the number of edges in an equilibrium state dependent on the number of terminals $k$. The proof is a modified version of \cite{tempNCG_ijcai}[Definition 7 - Theorem 9]. The key difference is that their proof only holds for $k=n$ and one time label per edge. For the sake of clarity, we reprove their lemmas for our modified version by slightly adjusting their proofs.

Note that in an equilibrium in the local setting, the agents only buy at most one time edge per edge in the graph. That is the case because everything the incident nodes can reach over the later time edge, they can also reach by replacing it with the earlier one, so they have an improving move by removing the later time edge from their strategy. Hence, all equilibria in this setting are simple temporal graphs. We call a strategy profile that creates a simple temporal graph a simple strategy profile. For ease of notation, we talk about buying edges instead of time edges in simple strategy profiles and use $\lambda(e)$ to denote the label of the corresponding label of the bought time edge. We start by introducing the concept of \emph{necessary edges} which we then use to characterize a structure that cannot appear in an equilibrium. 
\begin{definition}[necessary edge]
    Let $H$ be a host graph with $n$ agents and $k$ terminals, $\s$ a simple strategy profile, and $G\coloneqq G(\s)$. For each edge $e=\{u,v\}\in G$ that $u$ buys, we define
    $$A_G(e)\coloneqq \big\{x\in T_H\mid x\in \reach{G}(u)\wedge x\notin\reach{G-e}(u)\big\}.$$
    We say that $e$ is \emph{necessary} for $u$ to reach the terminals in $A_G(e)$.
\end{definition}
Note, that if $\s$ is an equilibrium, $A_G(e)\neq\varnothing$ for all $e\in E_G$.

Using this definition, we characterize a structure that cannot appear in any simple strategy profile.
\begin{lemma}\label{lem:forbidden_structure}
    Let $H$ be a host graph, $\s$ be simple a strategy profile, and $G\coloneqq G(\s)$. There cannot be nodes $z,u_1,u_2\in V$ and $x,y \in T_H$ and distinct edges $e_{1x}, e_{1y}, e_{2x}, e_{2y}\in E_G$ such that for $i\in\{1,2\}$ and $j\in\{x,y\}$
    \begin{compactenum}
        \item $\{z,u_i\}\in E_{G(\s)}\setminus\{e_{ij}\}$;
        \item $e_{ij}$ is bought by $u_i$;
        \item $j\in A_G(e_{ij})$; 
        \item $\lambda(\{z,u_i\})\leq\lambda(e_{ij})$. \hfill (See~\Cref{fig:forbidden_structure}.)
    \end{compactenum}
    
\end{lemma}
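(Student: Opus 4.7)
I argue by contradiction. First I relabel so that $\lambda(\{z,u_1\})\le\lambda(\{z,u_2\})$. The plan is to exhibit, for some $j\in\{x,y\}$, a temporal walk from $u_1$ to $j$ in $G-e_{1j}$; since any temporal walk can be pruned (by shortcutting repeated nodes, which preserves non-decreasing labels) to a temporal path with the same endpoints, this contradicts $j\in A_G(e_{1j})$.

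For each $j\in\{x,y\}$ I fix a temporal path $P_{2j}$ from $u_2$ to $j$ in $G$. Because $e_{2j}$ is bought by $u_2$, and hence incident to $u_2$ in the local setting, and $j\in A_G(e_{2j})$ by condition~3, every temporal path from $u_2$ to $j$ must use $e_{2j}$; since $u_2$ appears only once in a temporal path, $e_{2j}$ must be the first edge of $P_{2j}$, with starting label $\lambda(e_{2j})\ge\lambda(\{z,u_2\})$ by condition~4. Prepending $u_1\to z\to u_2$ to $P_{2j}$ yields a temporal walk $W_j$ from $u_1$ to $j$ whose successive labels $\lambda(\{z,u_1\})\le\lambda(\{z,u_2\})\le\lambda(e_{2j})\le\cdots$ are non-decreasing. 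At the start of $W_j$ the only $u_1$-incident edge used is $\{z,u_1\}\neq e_{1j}$ (condition~1), so $W_j$ can contain $e_{1j}$ only if $P_{2j}$ itself does, which forces $u_1\in P_{2j}$ with $e_{1j}$ being one of the two edges of $P_{2j}$ incident to $u_1$.

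If for some $j$ either $u_1\notin P_{2j}$, or $u_1\in P_{2j}$ and $e_{1j}$ is the incoming $u_1$-edge in $P_{2j}$, I am immediately done: in the first case $W_j$ avoids $e_{1j}$ outright, and in the second the suffix $P_{2j}[u_1,j]$ is itself a temporal path from $u_1$ to $j$ whose sole $u_1$-edge (the outgoing one) differs from $e_{1j}$. Otherwise both $P_{2x}$ and $P_{2y}$ pass through $u_1$ with $e_{1x}$ and $e_{1y}$ as their outgoing $u_1$-edges; write $P_{2j}=u_2\to\cdots\to u_1\to a_j\to\cdots\to j$ with $\{u_1,a_x\}=e_{1x}$ and $\{u_1,a_y\}=e_{1y}$.

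In this remaining regime I cross-splice. If $P_{2y}$ also uses $e_{1x}$, then since the outgoing $u_1$-edge in $P_{2y}$ is $e_{1y}\neq e_{1x}$, the edge $e_{1x}$ must be the incoming $u_1$-edge of $P_{2y}$, so $a_x$ appears immediately before $u_1$ in $P_{2y}$. The composite walk $u_1\to z\to u_2\to P_{2y}[u_2,a_x]\cdot P_{2x}[a_x,x]$ then lies entirely in $G-e_{1x}$, because neither sub-path visits $u_1$, and its labels chain at $a_x$ since the prefix arrives at label $\le\lambda(e_{1x})$ while the suffix departs at label $\ge\lambda(e_{1x})$. The symmetric argument resolves the case where $P_{2x}$ uses $e_{1y}$. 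The main technical obstacle I expect to face is the residual sub-case in which neither $P_{2x}$ uses $e_{1y}$ nor $P_{2y}$ uses $e_{1x}$; handling it will require exploiting the extra $u_1$-incident edges exposed by the incoming $u_1$-edges of $P_{2x}$ and $P_{2y}$ (each distinct from $\{z,u_1\}$, $e_{1x}$, and $e_{1y}$) via a similar but more delicate splice, in the spirit of the case analysis of~\cite{tempNCG_ijcai}.
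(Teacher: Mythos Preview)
Your argument is correct up to the point where you isolate the ``residual sub-case,'' but that sub-case is a genuine gap, and the fix you anticipate does not work. In the residual configuration, both $P_{2x}$ and $P_{2y}$ leave $u_1$ along $e_{1x}$ and $e_{1y}$ respectively. Since $j\in A_G(e_{1j})$, \emph{every} temporal path from $u_1$ to $j$ in $G$ begins with $e_{1j}$, so no amount of splicing around $u_1$ using the incoming edges $f_x,f_y$ will produce a $u_1$-to-$j$ walk avoiding $e_{1j}$: the target you fixed, namely contradicting $j\in A_G(e_{1j})$, is simply unreachable from within this case. A ``more delicate splice in the spirit of~\cite{tempNCG_ijcai}'' will not close this; that paper does not argue that way.

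The paper's proof avoids the whole case split by two small changes. First, it adds a second symmetry-breaking assumption, $\lambda(e_{1x})\le\lambda(e_{1y})$. Second, it switches the contradiction target from $u_1$ to $u_2$: once you know (exactly as you derived) that $P_{2x}$ must visit $u_1$ and hence that its prefix $P_{2x}[u_2,u_1]$ arrives at $u_1$ by time $\le\lambda(e_{1x})\le\lambda(e_{1y})$, you concatenate this prefix with any temporal path $Q$ from $u_1$ to $y$ (which necessarily starts with $e_{1y}$). The resulting $u_2$-to-$y$ walk avoids $e_{2y}$: the prefix is simple, starts with $e_{2x}\neq e_{2y}$, and so never revisits $u_2$; and $Q$ cannot visit $u_2$ either, since otherwise its $u_2$-to-$y$ suffix would be a temporal path from $u_2$ to $y$ that (by your own argument applied to $y$) must use $e_{1y}$ and hence revisit $u_1$, contradicting that $Q$ is simple. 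This contradicts $y\in A_G(e_{2y})$ in one stroke, with no residual case.
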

\begin{figure}
    \centering
    \includegraphics[scale=1]{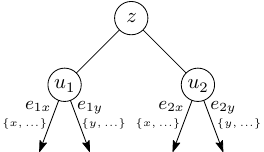}
    \caption{A forbidden structure in a strategy profile. The node $z$ has two neighbors $u_1$ and $u_2$ that both buy two distinct edges that they need to reach the nodes $x$ and $y$ respectively. For both of them, the two needed edges have at least a label as high as their edge to $z$.}
    \label{fig:forbidden_structure}
\end{figure}
\begin{proof}
    Towards a contradiction, suppose that there are nodes $z,u_1,u_2\in V$ and $x,y \in T_H$ and edges $e_{1x}, e_{1y}, e_{2x}, e_{2y}\in E_G$ as defined above. W.l.o.g. $\lambda(\{z,u_1\})\le\lambda(\{z,u_2\})$ and $\lambda(e_{1x})\le\lambda(e_{1y})$. Any temporal path from $u_2$ to $x$ starts with $e_{2x}$ (since $x\in A_G(e_{2x})$) and has to use $e_{1x}$. Otherwise, $u_1$ could reach $x$ by using $\{z,u_1\}$, $\{z,u_2\}$ and then the path from $u_2$ to $x$ without needing $e_{1x}$ which contradicts $x\in A_G(e_{1x})$. The same holds for $y$ instead of $x$.
    
    Therefore, there is a temporal path $P$ from $u_2$ to $u_1$, which starts with $e_{2x}$ and arrives at $u_1$ no later than $\lambda(e_{1x})\le\lambda(e_{1y})$. This means that $u_2$ does not need $e_{2y}$ to reach $y$ since it can use $P$ to get to $u_1$ and travel to $y$ from there. This contradicts $y\in A_G(e_{2y})$.
\end{proof}

We now show that simple temporal graphs with at least $n\sqrt{6k}+n$ edges contain a node that has a lot of neighbors that all need some late edges. Note that we consider directed temporal graphs in the proof. The direction indicates that the source buys the edge in the given strategy.

\begin{lemma}\label{lem:large_node_exists}
    Let $G$ be a simple directed temporal graph with $n$ nodes and at least $n\sqrt{6k}+n$ edges. Then there is a node $z\in V_G$ and a set $M\subset V_G$ such that
    \begin{compactenum}
        \item $|M| = \lceil\frac{1}{3}\sqrt{6k}\rceil$;
        \item $(u,z) \in E_G$, for every $u \in M$;
        \item each $u\in M$ has at least $\frac{2}{3}\sqrt{6k}$ outgoing edges $e=(u,v)\in E_G$ with $z\neq v$ and $\lambda_G((u,z))\leq\lambda_G(e)$.
    \end{compactenum}
\end{lemma}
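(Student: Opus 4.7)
The plan is a double-counting plus pigeonhole argument. Set $s \coloneqq \sqrt{6k}$ and, for every vertex $u$, let $d_u^+$ denote the out-degree of $u$ in $G$. Sort the outgoing edges of $u$ in non-decreasing order of their labels (breaking ties arbitrarily) and mark as \emph{early} the first $b_u \coloneqq \max\{0, d_u^+ - \lceil 2s/3\rceil\}$ of them. By construction, if $(u,w)$ is early then the $d_u^+ - b_u \ge \lceil 2s/3\rceil \ge \tfrac{2}{3}\sqrt{6k}$ outgoing edges of $u$ that follow in the sorted order carry a label at least $\lambda_G((u,w))$, and since $G$ is simple their endpoints are all distinct from $w$. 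Thus property~(3) of the lemma is automatic for $z \coloneqq w$ as soon as we certify $(u,w)\in E_G$ and bundle enough such $u$'s.

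To obtain that bundling, for each $z\in V_G$ define $X(z) \coloneqq \{u\in V_G : (u,z)\text{ is early for }u\}$. Counting early edges in two ways gives
\[
\sum_{z\in V_G} |X(z)| \;=\; \sum_{u\in V_G} b_u \;\ge\; |E_G| - n\lceil 2s/3\rceil \;\ge\; n\bigl(s+1-\lceil 2s/3\rceil\bigr),
\]
where we used $|E_G|\ge ns+n$. The elementary estimate $\lceil s/3\rceil + \lceil 2s/3\rceil < s+2$, which follows from the strict inequality $\lceil x\rceil<x+1$ applied to $x=s/3$ and $x=2s/3$, rearranges to $s+1-\lceil 2s/3\rceil > \lceil s/3\rceil - 1$. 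Hence the sum above strictly exceeds $n(\lceil s/3\rceil-1)$, and the pigeonhole principle over the $n$ bins yields some $z^*\in V_G$ with $|X(z^*)|\ge \lceil s/3\rceil$.

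Taking $M$ to be any subset of $X(z^*)$ of cardinality exactly $\lceil \sqrt{6k}/3\rceil$ finishes the proof: property~(1) holds by choice of $M$; property~(2) holds because membership of $u$ in $X(z^*)$ requires $(u,z^*)$ to exist as an outgoing edge of $u$; and property~(3) is exactly what the early-edge construction buys us for each $u\in M$ with respect to the common endpoint $z^*$. The delicate point to watch during write-up is that the hypothesis $|E_G|\ge n\sqrt{6k}+n$ is used almost to the bit, so one really has to invoke the \emph{strict} inequality $\lceil s/3\rceil+\lceil 2s/3\rceil<s+2$ (rather than the weaker $\le s+2$) to cross the pigeonhole threshold; everything else is routine bookkeeping.
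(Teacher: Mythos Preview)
Your proof is correct and follows essentially the same approach as the paper's: both arguments remove from each vertex the $\lceil 2s/3\rceil$ outgoing edges of largest label (your ``early'' edges are exactly the surviving ones), lower-bound the number of remaining edges by $n(s+1-\lceil 2s/3\rceil)$, and apply the pigeonhole principle on in-degrees to find $z$. Your treatment of the ceiling arithmetic via the strict inequality $\lceil s/3\rceil+\lceil 2s/3\rceil<s+2$ is slightly more explicit than the paper's, but the underlying idea is identical.
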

\begin{proof}
    Consider the graph $G'$ where for each node $v$ we remove the $\lceil\frac{2}{3}\sqrt{6k}\rceil$ outgoing edges with the largest labels (break ties arbitrarily).
    If $v$ has less outgoing edges, we just remove all of them. Now, $G'$ has at least $n\sqrt{6k}+n-n\lceil\frac{2}{3}\sqrt{6k}\rceil\ge n \frac{1}{3}\sqrt{6k}$ edges. By the pigeonhole principle, there is a node $z$ with at least $\lceil\frac{1}{3}\sqrt{6k}\rceil$ incoming edges. Let $M$ be a set of $\lceil\frac{1}{3}\sqrt{6k}\rceil$ neighbors $u$ of $z$ in $G'$ that have a directed edge $e'=(u,z)\in E_{G'}$ towards $z$. By construction of $G'$, each $u\in M$ has at least $\frac{2}{3}\sqrt{6k}$ outgoing edges $e=(u,v)\in E_G$ such that $v \neq z$ and $\lambda_G(e')\leq\lambda_G(e)$  (see~\Cref{fig:graph_including_forbidden_structure}).
\end{proof}

To conclude our bound on equilibria, we show that the structure from \Cref{lem:large_node_exists} implies the existence of the forbidden structure from~\Cref{fig:forbidden_structure}. Therefore, simple graphs with at least $n\sqrt{6k}+n$ edges must contain unnecessary edges, giving us a bound on the number of edges in an equilibrium.

\begin{restatable}{theorem}{denseNotGE}\label{thm:dense_not_ge}
    Let $H$ be a host graph with $|V_H|=n$ agents and $k$ terminals and let $\s$ be a strategy profile in the local setting. If $G\coloneqq G(\s)$ contains at least $\sqrt{6k}n+n$ time edges, then $G$ is not a GE.
\end{restatable}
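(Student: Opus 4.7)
The plan is to derive a contradiction with \Cref{lem:forbidden_structure}. Assume $G\coloneqq G(\s)$ is a GE with at least $\sqrt{6k}\,n+n$ time edges. Since every local-edge-buying equilibrium is simple, orient each edge from its buyer to the other endpoint and view $G$ as a simple directed temporal graph. Applying \Cref{lem:large_node_exists} then yields a node $z$ and a set $M$ of in-neighbors of $z$ with $|M|=\lceil\tfrac{1}{3}\sqrt{6k}\rceil$, such that every $u\in M$ owns at least $\beta \coloneqq \lceil\tfrac{2}{3}\sqrt{6k}\rceil$ outgoing \emph{late} edges $(u,v)$ satisfying $v\neq z$ and $\lambda_G((u,z))\leq\lambda_G((u,v))$.

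Next, exploit the GE property. For every edge $e$ bought by an agent $u$, $A_G(e)\neq\emptyset$: otherwise $u$ could drop $e$ as a greedy improving response. Moreover $A_G(e)\cap A_G(e')=\emptyset$ whenever $u$ buys two distinct edges $e\neq e'$, because a terminal reachable from $u$ only through $e$ cannot simultaneously depend only on $e'$. Picking one witness terminal from $A_G(e)$ for each late edge $e$ of $u$ therefore yields a \emph{representative set} $R_u\subseteq T_H$ with $|R_u|\geq\beta$, in which distinct elements witness distinct late edges of $u$.

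The heart of the argument is to find a pair $u_1\neq u_2\in M$ with $|R_{u_1}\cap R_{u_2}|\geq 2$. Let $d_x = |\{u\in M : x\in R_u\}|$. From the elementary inequality $\binom{d_x}{2}\geq d_x-1$, which holds for every nonnegative integer $d_x$, one obtains
\begin{align*}
\sum_{\{u_1,u_2\}\subseteq M}|R_{u_1}\cap R_{u_2}|
\;=\; \sum_{x\in T_H}\binom{d_x}{2}
\;\geq\; \Big(\sum_{x\in T_H} d_x\Big) - k
\;\geq\; |M|\,\beta - k.
\end{align*}
A direct computation shows $|M|\,\beta - k > \binom{|M|}{2}$, so by pigeonhole some pair $\{u_1,u_2\}$ satisfies $|R_{u_1}\cap R_{u_2}|\geq 2$. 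Choose two distinct $x,y\in R_{u_1}\cap R_{u_2}$: the four witnessing late edges $e_{1x},e_{1y},e_{2x},e_{2y}$ are pairwise distinct—within one $u_i$ the distinct representatives encode distinct edges, and across $i=1,2$ the edges have different owners in the local setting—so together with $\{z,u_1\},\{z,u_2\}\in E_G$ they exhibit exactly the structure forbidden by \Cref{lem:forbidden_structure}, yielding the desired contradiction. The main technical obstacle is verifying the strict inequality $|M|\,\beta - k > \binom{|M|}{2}$, which reduces to a short case analysis on whether $\tfrac{1}{3}\sqrt{6k}$ lies just above an integer and on whether $\tfrac{2}{3}\sqrt{6k}$ lies above a half-integer; in every regime the positive slack introduced by the ceiling operations in $|M|$ and $\beta$ suffices.
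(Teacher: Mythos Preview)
Your proposal is correct and follows essentially the same approach as the paper: both invoke \Cref{lem:large_node_exists} to obtain $z$ and $M$, assign representative terminals to the late edges, and then use a counting argument to force two agents in $M$ to share two representatives, which instantiates the configuration forbidden by \Cref{lem:forbidden_structure}. The only difference is cosmetic: the paper assumes every pairwise intersection has size at most~$1$ and uses inclusion--exclusion to show $\bigl|\bigcup_u R_u\bigr|>k$, whereas you double-count via $\sum_{x}\binom{d_x}{2}$ and pigeonhole over the $\binom{|M|}{2}$ pairs; both routes reduce to the identical inequality $|M|\cdot\beta-\binom{|M|}{2}>k$. One small remark: the paper verifies this inequality cleanly in two lines using $\lceil a\rceil-1<a$ with $a=\tfrac{1}{3}\sqrt{6k}$, so you need not defer it to an unspecified ``case analysis.''
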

\begin{proof}
    Towards a contradiction, suppose that $G$ is a GE and $|M_G|\ge n\sqrt{6k}+n$. As $G$ is an equilibrium, it has to be simple and $|E_G| = |M_G|$. As shown in \Cref{lem:large_node_exists} (the direction of the graph indicates that the source of the edge buys it in $\s$), there is a node $z \in V_G$ and a set $M\subset V_G$ such that
    \begin{compactenum}
        \item $|M| = \lceil\frac{1}{3}\sqrt{6k}\rceil$;
        \item $\{u,z\} \in E_G$, for every $u \in M$;
        \item Each $u\in M$ has a set $E_u\subseteq S_u$ of at least $\frac{2}{3}\sqrt{6k}$ bought edges $(u,v)$ with $z\neq v$ and $\lambda_G((u,z))\le\lambda((u,v))$.
    \end{compactenum}
    See \Cref{fig:graph_including_forbidden_structure} for an illustration.
    \begin{figure}[t]
    \centering
    \includegraphics[scale=1]{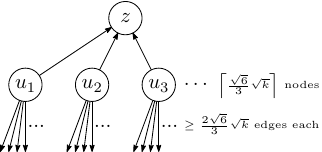}
    \caption{A structure that always appears in a directed temporal graph with at least $n\sqrt{6k}+n$ edges. A node $z$ exists with $\lceil\frac{\sqrt{6}}{3}\sqrt{k}\rceil$ neighbors via in-edges that each have at least $\frac{2\sqrt{6}}{3}\sqrt{k}$ out-edges with a label that is at least as high as the label of their edge to~$z$.}
    \label{fig:graph_including_forbidden_structure}
\end{figure}

    For each edge $e\in E_u$, let $a_e\in A_G(e)$ be a representative of $A_G(e)$. Note that $A_G(e)\neq\varnothing$ because $G$ is a GE, so those representatives always exist.
    For each $u\in M$, we define
    \begin{equation*}
        D_u\coloneqq\bigcup_{e\in E_u}\{a_e\}.
    \end{equation*}
    Intuitively, $D_u$ contains terminals that $z$ can reach by going over $u$ and that $u$ needs to buy an edge for.
    
    We see that the forbidden structure from~\Cref{lem:forbidden_structure} appears if there are two nodes $u,v\in M$ such that $|D_u\cap D_v|\ge 2$. We can therefore assume $|D_u\cap D_v|\leq 1$ for all $u,v \in M$. Also, for $e,e'\in E_u$ we have $A_G(e)\cap A_G(e')=\varnothing$ since there cannot be two edges that are necessary for $u$ to reach the same node. From this, we get $|D_u|\ge |E_u|\ge\frac{2}{3}\sqrt{6k}$.
    
    Using the inclusion-exclusion principle, we get
    \begin{align*}
        \left|\bigcup_{u\in M}D_u\right|&\ge\sum_{u\in M}|D_u|-\smashoperator[r]{\sum_{\{u,v\}\subseteq M,u\neq v}}|D_u\cap D_v|\\
        &\ge \left\lceil\frac{1}{3}\sqrt{6k}\right\rceil\frac{2}{3}\sqrt{6k}-\frac{1}{2}\left\lceil\frac{1}{3}\sqrt{6k}\right\rceil\left(\left\lceil\frac{1}{3}\sqrt{6k}\right\rceil-1\right)\\
        &> \left\lceil\frac{1}{3}\sqrt{6k}\right\rceil\frac{2}{3}\sqrt{6k}-\frac{1}{2}\left\lceil\frac{1}{3}\sqrt{6k}\right\rceil\frac{1}{3}\sqrt{6k}\\
        &= \left\lceil\frac{1}{3}\sqrt{6k}\right\rceil \frac{1}{2}\sqrt{6k} \geq k.
    \end{align*}
    
    This is a contradiction since each set $D_u$ only contains terminals and $H$ has only $k$ terminals.
\end{proof}

\subsection{Global Edge-Buying $k$-Terminal TNCG}
In this section, we analyze properties of equilibria for the global setting. We start by showing some differences between the structural properties of Greedy and Nash Equilibria. We finish with our main result for this section which shows that equilibria in the global and local setting are incomparable.

The set of inclusion minimal temporal spanners and the set of Greedy Equilibria coincide.
\begin{restatable}{theorem}{GETemporalSpanners}
    \label{thm:GETemporalSpanners}
    Let $H$ be a host graph in the global setting.
    \begin{compactenum}[(i)]
        \item For every GE $\s$, the graph $G(\s)$ is an inclusion minimal terminal spanner of $H$.
        \item For every inclusion minimal terminal spanner $G$ of $H$, there is a GE $\s$ with $G(\s)=G$.
    \end{compactenum}
\end{restatable}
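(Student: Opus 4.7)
The two directions are each a short calculation once we pin down the shape of greedy deviations in the global setting, where an agent may own any time edge. I would prove the two inclusions separately.

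For (i), assume toward contradiction that $\s$ is a GE but $G(\s)$ is not an inclusion minimal terminal spanner. Since every GE graph is a terminal spanner (as noted after the cost function is defined), failure of minimality means there is some $e \in \timeEdges_{G(\s)}$ such that $G(\s) - e$ is still a terminal spanner. Let $v$ be the (unique, by disjointness of equilibrium strategies) owner of $e$ and consider the greedy deviation $S_v' := S_v \setminus \{e\}$. Then $G(\s_{-v} \cup S_v') = G(\s) - e$ is still a terminal spanner, so $v$ continues to reach every terminal, while $|S_v'| = |S_v| - 1$. This strictly decreases $v$'s cost, contradicting that $\s$ is a GE.

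For (ii), let $G$ be an inclusion minimal terminal spanner of $H$. The plan is to assign every time edge of $G$ to an owner who genuinely needs it. By inclusion minimality, for every $e \in \timeEdges_G$ the graph $G - e$ fails to be a terminal spanner, so there exist $v_e \in V_H$ and $t_e \in T_H$ with $t_e \notin R_{G-e}(v_e)$; fix such a choice and define $\s$ by $S_v := \{ e \in \timeEdges_G : v_e = v\}$ for every $v \in V_H$. By construction $\timeEdges_{G(\s)} = \timeEdges_G$, hence $G(\s) = G$, and since $G$ is a terminal spanner every agent reaches every terminal, so $v$'s cost in $\s$ is exactly $|S_v|$. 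A greedy deviation by $v$ is of one of two forms: adding a single time edge (increasing cost by $1$ with no reachability gain) or removing some $e \in S_v$ (saving $1$ on the edge count, but because $v_e = v$ the agent then loses reachability to $t_e$, incurring an extra penalty of at least $C > 1$). Neither is improving, so $\s$ is a GE with $G(\s) = G$.

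The argument is essentially mechanical, so the only thing to be careful about is step (ii): the ownership assignment requires that any agent can buy any time edge, which is exactly what the global setting provides. This is also the reason why the analogous equivalence does not carry over to the local setting.
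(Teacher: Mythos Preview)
Your proof is correct and follows essentially the same approach as the paper: both directions argue exactly as you do, with (ii) assigning each time edge to a witness node that loses reachability upon its removal. The only cosmetic difference is that the paper explicitly re-derives in (i) that $G(\s)$ is a terminal spanner rather than citing the earlier remark, but the substance is identical.
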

\begin{proof}
    To prove (i), let $\s$ be a Greedy Equilibrium. If $G(\s)$ is not a terminal spanner, there is an agent that cannot reach one of the terminals. Buying any time edge directly to that terminal is an improving move. So $G(\s)$ has to be a terminal spanner. Assume, $G(\s)$ is not an inclusion minimal terminal spanner for $H$. Then there is a time edge $(e,\lambda)\in \timeEdges_{G(\s)}$ such that $G(\s)-(e,\lambda)$ is a terminal spanner. Any agent buying $(e,\lambda)$ can remove it from its strategy and still reach every terminal. Therefore, this is a greedy improving response and thus $\s$ not a Greedy Equilibrium. This contradicts the assumption.

    For proving (ii), let $G$ be an inclusion minimal terminal spanner for $H$. For every time edge $e\in \timeEdges_G$, there is a pair of nodes $u,t$ with $t\in T_H$, s.t $u$ cannot reach $t$ in $G-e$. We set $v_e\coloneqq u$. If there are multiple such pairs for $e$, we arbitrarily pick one. Now, we construct a strategy profile $\s$ such that for all nodes $u\in V$
    \begin{align*}
        S_u\coloneqq \{e\in \timeEdges_G\mid v_e=u\}.
    \end{align*}
    This means, every time edge $e\in \timeEdges_G$ is bought by exactly one node, namely $v_e$. Therefore, $G(\s)=G$ which also means that $G(\s)$ is a terminal spanner. Thus, no node wants to add a time edge. Similarly, no node $u\in V$ can remove a time edge since this would result in $u$ not reaching some terminal. Therefore, $\s$ is a GE.
\end{proof}

\begin{restatable}{corollary}{GEexist}\label{cor:GEexists}
    For every host graph  $H$ a GE exists in the global setting.
\end{restatable}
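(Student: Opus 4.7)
The plan is to leverage Theorem~\ref{thm:GETemporalSpanners}(ii), which asserts that every inclusion minimal terminal spanner of $H$ is realizable as the graph of some GE. Thus, the corollary reduces to showing that an inclusion minimal terminal spanner of $H$ always exists.

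First I would observe that $H$ itself is a terminal spanner of $H$: since $H$ is complete as a temporal graph and, for each $\tau \in \{1,\ldots,\lifetime_H\}$, at least one edge carries label $\tau$, every node can reach every terminal (in fact every node) using, e.g., a direct time edge. Hence the family of terminal spanners of $H$ is nonempty. The set $\timeEdges_H$ is finite, so we can simply take a terminal spanner $G$ of $H$ with the minimum number of time edges; any such $G$ must be inclusion minimal, because removing any further time edge from $G$ would either contradict minimality of $|\timeEdges_G|$ or break the spanner property. Alternatively, one can argue by a straightforward greedy stripping procedure: start with $G_0 = H$, and while there exists a time edge whose removal preserves the terminal spanner property, remove it; the process terminates after finitely many steps in an inclusion minimal terminal spanner.

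Applying Theorem~\ref{thm:GETemporalSpanners}(ii) to this inclusion minimal terminal spanner $G$ yields a strategy profile $\s$ with $G(\s)=G$ that is a GE in the global setting, proving the corollary.

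There is no real obstacle here; the only thing to be careful about is verifying the existence of an inclusion minimal terminal spanner, which follows immediately from finiteness of $\timeEdges_H$ and the fact that $H$ is itself a terminal spanner. All the substantive work has already been done in Theorem~\ref{thm:GETemporalSpanners}.
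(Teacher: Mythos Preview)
Your proof is correct and follows the same approach as the paper, which simply states that the result follows directly from Theorem~\ref{thm:GETemporalSpanners}. You have merely made explicit the (trivial) existence of an inclusion minimal terminal spanner via finiteness of $\timeEdges_H$, which the paper leaves implicit.
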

\begin{proof}
    This follows directly from \Cref{thm:GETemporalSpanners}.
\end{proof}

For NE, equilibria and minimal temporal spanners do not coincide. We show that by providing a minimal temporal spanner that does not admit an equilibrium no matter how the edges are assigned.

\begin{figure}
    \centering
    \begin{tikzpicture}[on grid=true, node distance=1.5cm and 1.5cm]
    	\node[nod] (v1) at (135:1.5){$v_1$};
    	\node[nod] (v2) at (45:1.5) {$v_2$};
    	\node[nod] (v3) at (315:1.5){$v_3$};
    	\node[nod] (v4) at (225:1.5){$v_4$};
    	\path[dashed, red]
    	   \edo(v1,v3,1, above);
    	\path[blue]
    	   \edo(v2,v3,4;v_3,right)
    	   \edo(v1,v2,5;v_3, above)
    	   \edo(v1,v4,2;v_1,left)
          \edo(v2,v4,2;v_2,below)
          \edo(v3,v4,3;v_3,below)
            ;
     
    \end{tikzpicture}\hspace{1cm}
    \caption{Simple temporal clique with a given minimal temporal spanner (blue edges). The numbers represent the labels of the edges. For blue edges it is also given which nodes could not reach all other nodes anymore when the edge is removed from the spanner.}
    \label{fig:NETemporalSpanner}
\end{figure}
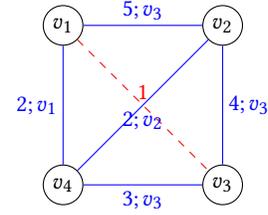

\begin{restatable}{lemma}{NETemporalSpanners}
    There exist simple host graphs $H$ with $k=n$ terminals and inclusion minimal temporal spanners $G$ for $H$ such that there is no NE $s$ with $G(s) = G$ in the global setting.
\end{restatable}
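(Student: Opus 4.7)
My plan is to formalize the example sketched in Figure~\ref{fig:NETemporalSpanner}. Take $H$ to be the complete temporal graph on $V=\{v_1,v_2,v_3,v_4\}$ with $T_H=V$ and the six labels as in the figure: $\lambda(v_1v_3)=1$, $\lambda(v_1v_4)=\lambda(v_2v_4)=2$, $\lambda(v_3v_4)=3$, $\lambda(v_2v_3)=4$, $\lambda(v_1v_2)=5$. Let $G$ be the five-edge subgraph obtained by deleting the cheap edge $(v_1v_3,1)$. First I would verify by direct case check that $G$ is a temporal spanner (the only non-obvious paths are $v_1\to v_3$ via $v_4$ with labels $2,3$, and $v_3\to v_1$ via $v_2$ with labels $4,5$), and that $G$ is inclusion minimal: removing any one edge breaks reachability for at least one vertex, which I would record by computing, for each $e\in E_G$, the set $N(e)$ of nodes losing a terminal in $G-e$.

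Next I would use $N(e)$ to constrain any strategy profile $\s$ with $G(\s)=G$ that is a NE. Every edge of $G$ must be bought by some node in $N(e)$, otherwise its owner has an improving response by dropping it. A quick enumeration gives
\[
N(v_1v_4)=\{v_1\},\; N(v_2v_4)=\{v_2\},\; N(v_2v_3)=\{v_3\},\; N(v_1v_2)=\{v_3\},\; N(v_3v_4)=\{v_1,v_3\},
\]
so in any such $\s$ the agent $v_1$ must buy $v_1v_4$, the agent $v_2$ must buy $v_2v_4$, and the agent $v_3$ must buy both $v_1v_2$ and $v_2v_3$; in particular $|S_{v_3}|\geq 2$.

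Finally I exhibit a single improving response for $v_3$ that rules out both remaining cases (whether $v_3v_4$ is owned by $v_1$ or by $v_3$). Let $v_3$ deviate to the singleton strategy $S'_{v_3}=\{(v_1v_3,1)\}$ using the edge outside~$G$. After this deviation, $v_1v_4$ and $v_2v_4$ remain in the graph (they are owned by $v_1$ and $v_2$), and $v_3$ reaches
\[
v_3\xrightarrow{1}v_1\xrightarrow{2}v_4\xrightarrow{2}v_2,
\]
so $v_3$ reaches all terminals with cost $1$, strictly less than $|S_{v_3}|\geq 2$ in either case. Hence $\s$ is not a NE, proving the claim.

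The main obstacles are two bookkeeping steps: (i) the case analysis establishing $N(e)$ for each of the five spanner edges (which relies on the asymmetric nature of temporal reachability and must be done edge by edge), and (ii) making sure the deviation for $v_3$ is valid in the global setting even when $v_3$ had previously purchased three edges, which is where the non-locality of edge buying is genuinely used, since $(v_1v_3,1)$ is non-incident to $v_3$'s previous purchase graph only if we had restricted to the local model---here global edge-buying is immaterial for $v_3$ but the example crucially exploits the non-local reachability through $v_1$'s and $v_2$'s purchases.
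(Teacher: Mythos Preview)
Your proposal is correct and follows essentially the same approach as the paper: both use the very same four-vertex host graph and five-edge spanner, determine which agents must own which edges in any NE realizing $G$, and then exhibit the improving response for $v_3$ of replacing its strategy by the single outside edge $(v_1v_3,1)$.

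One small correction: your computation of $N(v_3v_4)$ is off. In $G-v_3v_4$ the vertex $v_1$ still reaches $v_3$ via $v_1\xrightarrow{2}v_4\xrightarrow{2}v_2\xrightarrow{4}v_3$, so $N(v_3v_4)=\{v_3\}$ rather than $\{v_1,v_3\}$. This actually simplifies your argument (there is only one possible owner assignment, with $|S_{v_3}|=3$, and no case split on who owns $v_3v_4$ is needed), and it matches the paper's claim that the assignment in the figure is the unique candidate. Your final deviation and cost comparison are unaffected.
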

\begin{proof}
    Consider the simple host graph $H$ and the inclusion minimal temporal spanner $G$ of $H$ from \Cref{fig:NETemporalSpanner}. Now assume there is a Nash Equilibrium $s$ with $G(s) = G$. In this equilibrium every (time) edge of $G$ is assigned to one of the agents. If any edge is assigned to a different agent than the one given in \Cref{fig:NETemporalSpanner}, this agent can remove the edge from its strategy and still reach every other agent. Therefore, the given assignment is the only possible Nash Equilibrium. However, in this assignment, agent $v_3$ can remove all edges from its strategy and add edge $(v_1,v_3)$ to maintain reachability to every node. That is an improving response as it reduces its strategy by two edges. This is a contradiction to $s$ being a Nash equilibrium, which concludes the proof.
\end{proof}

The next theorem compares local and global equilibria, showing that those settings are really different and no setting is a generalization of the other in terms of the generated equilibrium graphs.

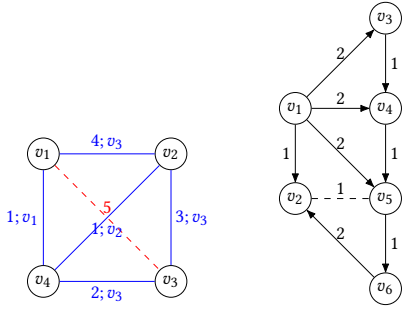
\begin{figure}
    \centering
    \scalebox{0.8}{\begin{tikzpicture}[on grid=true, node distance=1.5cm and 1.5cm]
    	\node[nod] (v1) at (135:1.5){$v_1$};
    	\node[nod] (v2) at (45:1.5) {$v_2$};
    	\node[nod] (v3) at (315:1.5){$v_3$};
    	\node[nod] (v4) at (225:1.5){$v_4$};
    	\path[dashed, red]
    	   \edo(v1,v3,5, above);
    	\path[blue]
    	   \edo(v2,v3,3;v_3,right)
    	   \edo(v1,v2,4;v_3, above)
    	   \edo(v1,v4,1;v_1,left)
          \edo(v2,v4,1;v_2,below)
          \edo(v3,v4,2;v_3,below)
            ;
     
    \end{tikzpicture}\hspace{1cm}
    \begin{tikzpicture}[on grid=true, node distance=1.5cm and 1.5cm]
    	\node[nod] (v1) {$v_1$};
    	\node[nod] (v2) at (0,-1.5) {$v_2$};
    	\node[nod] (v3) at (1.5,1.5){$v_3$};
    	\node[nod] (v4) at (1.5,0){$v_4$};
        \node[nod] (v5) at (1.5,-1.5){$v_5$};
    	\node[nod] (v6) at (1.5,-3){$v_6$};
    	\path[dashed]
    	\edo(v2,v5,1,above);
     
        \path[-{Latex[round]}]
    	\ed(v1,v3,2)
    	\ed(v1,v4,2)
        \ed(v1,v5,2)
        \ed(v6,v2,2)
        \edo(v1,v2,1,left)
        \edo(v3,v4,1,right)
        \edo(v4,v5,1,right)
        \edo(v5,v6,1,right);
    \end{tikzpicture}}
    \caption{Simple temporal cliques with given global or local equilibrium respectively. The numbers on the edges indicate the time labels and all non depicted edges have label 3. In the left graph the blue edges form a global equilibrium when bought by the indicated nodes. In the right graph the directions of the edges indicate that the source buys this edge in a local Nash equilibrium.}
    \label{fig:NEIncomparable}
\end{figure}

\begin{restatable}{theorem}{NEIncomparable}
    Let for a host graph $H$ with $k=n$ terminals, $\gne_H$ be the set of Nash equilibria in the global setting on $H$ and let $\lne_H$ be the set of Nash equilibria in the local setting. Then the set of graphs defined by those equilibria are incomparable. In particular:
    \begin{compactitem}
    \item There exists a simple host graph $H_1$ and a global equilibrium $A_g \in \gne_{H_1}$ such that for all $B_l \in \lne_{H_1}$ holds $G(A_g) \neq G(B_l)$.
    \item There exists a simple host graph $H_2$ and a local equilibrium $A_l \in \lne_{H_2}$ such that for all $B_g \in \gne_{H_2}$ holds $G(A_l) \neq G(B_g)$.
    \end{compactitem}
\end{restatable}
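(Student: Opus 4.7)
I would prove the two directions by exhibiting the host graphs depicted in \Cref{fig:NEIncomparable} and verifying the claimed equilibria, then showing that each resulting equilibrium graph is unrealizable in the other setting.

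For the first direction, take $H_1$ (left) together with the global profile $A_g$ shown. I would first verify that $A_g$ is a global NE: the only delicate case is $v_3$, whose non-incident purchase $(v_1,v_2,4)$ is necessary because every alternative $v_3$-to-$v_1$ path must traverse $v_4$, which is blocked by the non-increasing labels on $(v_3,v_4,3)$ and $(v_1,v_4,2)$. To rule out $G(A_g)$ as a local NE graph, I would observe that in the local setting $(v_1,v_2,4)$ must be owned by one of its endpoints $v_1$ or $v_2$; but deleting it leaves the label-2 path $v_1\to v_4\to v_2$ intact, so its owner has a greedy improving response. Since $v_3$ cannot purchase a non-incident edge locally, no local NE can have graph $G(A_g)$.

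For the second direction, take $H_2$ (right) together with the depicted local NE $A_l$. I would first verify $A_l$ is a local NE by inspecting every agent's best response; the crucial point is that $v_1$ needs each of its four incident edges in $G(A_l)$ because the label-1 chain on $v_3,v_4,v_5,v_6$ cannot be entered at time $\geq 2$ and the bridge $(v_2,v_6,2)$ is unusable before first reaching $v_2$. The main task is to show that no assignment of $G(A_l)$'s edges to buyers yields a global NE, which I plan to split into three steps. (i) Argue that each of the four edges $(v_1,v_2,1),(v_1,v_3,2),(v_1,v_4,2),(v_1,v_5,2)$ must be bought by $v_1$ or $v_2$ in any NE realizing $G(A_l)$: every other agent has an alternative temporal path through the label-1 chain and $(v_2,v_6,2)$, so for such an agent owning the edge would be a greedy improving move to remove. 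This forces $|S_{v_1}|+|S_{v_2}|\geq 4$. (ii) Identify two ``shortcut'' global deviations that use the host edge $(v_2,v_5,1)$, which is present in $H_2$ but not in $G(A_l)$: the strategy $\{(v_1,v_2,1),(v_2,v_5,1)\}$ lets either $v_1$ or $v_2$ reach every node along the label-1 chain and thereby bounds its best response cost by $2$; moreover, the even cheaper strategy $\{(v_2,v_5,1)\}$ of cost $1$ is a valid deviation for $v_2$ whenever $v_1$ already owns $(v_1,v_2,1)$, and symmetrically for $v_1$ whenever $v_2$ owns it. (iii) Combine: whichever of $\{v_1,v_2\}$ does \emph{not} own $(v_1,v_2,1)$ has best response cost at most $1$ and therefore buys at most one of the four edges, forcing the other to buy at least three and exceed its own best response bound of $2$, contradicting (i).

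The main obstacle is step (ii), where one must verify that the shortcut deviations are feasible regardless of which label-2 edges the deviating agent currently owns and regardless of who owns $(v_2,v_6,2)$ and the label-1 chain edges; the delicate part is showing that dropping the deviating agent's previously owned edges never breaks the replacement path, for which one uses that the label-1 chain together with $(v_2,v_5,1)$ suffices even without $(v_2,v_6,2)$. Step (i) is a direct inspection of temporal paths, and the counting in step (iii) is then immediate. The first direction, by contrast, reduces to a single observation about the non-incident purchase in $A_g$ and is essentially routine.
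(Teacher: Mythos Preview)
Your first direction coincides with the paper's proof: both observe that in any local realization of $G(A_g)$ the edge $\{v_1,v_2\}$ must be assigned to one of its endpoints, yet neither $v_1$ nor $v_2$ needs it for reachability, so its owner has a greedy improving response. (Your stated labels $(v_3,v_4,3)$ and $(v_1,v_4,2)$ differ from \Cref{fig:NEIncomparable}, where they are $2$ and $1$, but this does not affect the argument.)

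For the second direction you reach the same conclusion via a different and more involved route. The paper restricts attention to only the \emph{three} edges $(v_1,v_3),(v_1,v_4),(v_1,v_5)$, shows each must be owned by $v_1$ or $v_2$ in any global NE realizing $G(A_l)$, and then applies pigeonhole: one of these two agents owns at least two of them, and that agent's improving response is simply to delete those two and add the single host edge $(v_2,v_5,1)$, after which the label-$1$ edges of the resulting graph already form a spanning tree. Because this deviation only removes two specific label-$2$ edges and keeps everything else in the agent's strategy, it works irrespective of what other edges that agent may own. Your approach instead bounds the best-response cost of $v_1$ and $v_2$ by exhibiting complete replacement strategies of size $1$ or $2$, which forces you to verify that the replacement path survives after \emph{all} of the deviating agent's current edges are dropped. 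Here your plan has a small unaddressed case: you treat the label-$2$ edges and $(v_2,v_6)$, but you do not rule out that the deviating agent owns one of the label-$1$ chain edges $(v_3,v_4),(v_4,v_5),(v_5,v_6)$; if it did, the chain your shortcut relies on would be broken. The fix is easy---in any NE neither $v_1$ nor $v_2$ can own a chain edge, since each already reaches all of $v_3,\dots,v_6$ without it---but the paper's ``remove two, add one'' deviation avoids this issue altogether and yields a noticeably shorter argument.
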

\begin{proof}
    Let $H_1$ be the left graph from \Cref{fig:NEIncomparable} and let $A_g$ be the global equilibrium depicted. It is an equilibrium because it is temporally connected and none of $v_1$, $v_2$, $v_4$ can improve because they all have at most one (time) edge in their strategy and cannot remove it without breaking connectivity. The agent $v_3$ cannot improve because the only other edge it can buy is $(v_1,v_3)$ which still needs both $(v_2,v_3)$ and $(v_3,v_4)$ to keep connectivity to all other nodes.

    There exists no local equilibrium $B_l \in \lne_{H_1}$ with $G(A_g) = G(B_l)$. The reason for that is, that in the local setting the edge $(v_1,v_2)$ has to be in the strategy of either $v_1$ or $v_2$ in order to create $G(A_g)$. However, neither of them needs this edge for reachability in $G(A_g)$, so whoever has it in its strategy has an improving response by removing this edge.

    Let $H_2$ be the right graph from \Cref{fig:NEIncomparable} and let $A_l$ be the local equilibrium depicted. It is an equilibrium because it is temporally connected and all nodes but $v_1$ cannot improve because they all have at most one edge in their strategy and cannot remove it without breaking connectivity. The agent $v_1$ cannot improve because it cannot remove any other edge from its strategy (even while adding new edges) without losing connectivity to the target of the respective edge.

    There exists no global equilibrium $B_g \in \gne_{H_2}$ with $G(A_l) = G(B_g)$. The reason for that is, that in the global setting the edges $(v_1,v_3)$, $(v_1,v_4)$ and $(v_1,v_5)$ have to be in the strategy of either $v_1$ or $v_2$ in order to create $G(A_l)$ and to form an equilibrium. All other agents could remove any one of those edges while keeping connectivity to all other nodes. Hence, by pigeonhole principle, either $v_1$ or $v_2$ has to have two of those edges in their strategy. However, that gives this agent an improving response by removing those two edges and adding the edge $(v_2,v_5)$ which creates a spanning tree with label 1 and therefore ensures all pair reachability.
\end{proof}

\section{Efficiency of Equilibria}\label{sec:poa}
In this section, we analyze the efficiency of equilibria by comparing their social cost to the social cost of socially optimal networks. In particular, we derive several bounds on the Price of Anarchy and Price of Stability. We start by bounding the size of a social optimum. The bound follows directly from the fact that social optima are minimum temporal spanners and the $\mathcal{O}(n\log(n))$ upper bound on the size of minimum temporal spanners on temporal cliques from \citet{CasteigtsPS21}.

\begin{restatable}{corollary}{theoremUBsocopt}\label{cor:theoremUBsocopt}
    Let $H$ be a host graph with $|V_H|=n$ agents and $\s^*$ be a social optimum for $H$. Then $\SC{\s^*}{H}\in\mathcal{O}(n\log(n))$.
\end{restatable}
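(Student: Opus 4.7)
The plan is to reduce the bound on the social optimum to the known bound on minimum temporal spanners of temporal cliques. First I would observe that, since $C > 1$ and $H$ is a complete temporal graph, the social optimum $\s^*$ must produce a terminal spanner: otherwise, even a single unreached terminal contributes an additive $C$ penalty, which exceeds the cost of simply buying one additional time edge directly to that terminal. Assuming additionally (without loss of generality) that the strategies in $\s^*$ are pairwise disjoint, the social cost simplifies to $\SC{\s^*}{H} = |\timeEdges_{G(\s^*)}|$, so it suffices to upper bound the number of time edges in a minimum terminal spanner of $H$.

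Next I would note the monotonicity between spanner notions: every temporal spanner of $H$ (in which every node reaches every other node) is in particular a terminal spanner of $H$, since $T_H \subseteq V_H$. Hence the minimum terminal spanner of $H$ is no larger than the minimum temporal spanner of $H$, and bounding the latter is enough.

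Finally, I would invoke the theorem of Casteigts, Peters and Schoeters (cited in the related work as the $\mathcal{O}(n \log n)$ bound for temporal cliques). To apply it cleanly in the multi-label setting, I would first extract from $H$ a simple temporal clique $H'$ by keeping, for every edge $e \in E_H$, exactly one label from $\lambda_H(e)$; then $H'$ is a temporal clique in the sense of Casteigts et al. and therefore admits a temporal spanner $G'$ with $|\timeEdges_{G'}| \in \mathcal{O}(n \log n)$. Because every label of $G'$ is also a label of the corresponding edge in $H$, $G'$ is a temporal subgraph of $H$ and hence also a temporal spanner of $H$, and therefore a terminal spanner of $H$. Assigning each time edge of $G'$ to one of its endpoints yields a strategy profile whose social cost equals $|\timeEdges_{G'}| \in \mathcal{O}(n \log n)$, which upper bounds $\SC{\s^*}{H}$.

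The argument is essentially a black-box application of the Casteigts--Peters--Schoeters bound, so there is no real obstacle; the only mildly subtle point is handling the multi-label host graph, which is resolved by first restricting to a single label per edge before invoking the theorem.
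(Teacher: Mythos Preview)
Your proposal is correct and follows essentially the same approach as the paper: reduce to a simple temporal clique $H'$ by keeping one label per edge, invoke the Casteigts--Peters--Schoeters $\mathcal{O}(n\log n)$ spanner bound, and observe that any temporal spanner of $H'$ is a terminal spanner of $H$. You add a bit more justification (why $\s^*$ must yield a terminal spanner, why strategies can be taken disjoint, how to turn the spanner into a strategy profile), but the core argument is identical.
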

\begin{proof}
    $\opt=G(\s^*_H)$ is a minimum terminal spanner of $H$, i.e., the terminal spanner that contains the least edges. Let $H'$ be a subgraph of $H$ that contains exactly one label for each edge of $H$. It is easy to see, that every temporal spanner of $H'$ is also a terminal spanner of $H$. Thus, $G(\s^*_H)$ contains at most as many time edges as a minimum temporal spanner for $H'$. Casteigts, Peters, and Schoeters \cite{CasteigtsPS21} proved that minimum temporal spanners of simple complete temporal graphs contain at most $\mathcal{O}(n\log n)$ edges. Therefore, we have $\SC{\opt}{H}=|\timeEdges_\opt|\in\mathcal{O}(n\log(n))$
\end{proof}

For GE in the global setting we exactly know the PoS as the social optimum is also an equilibrium. 

\begin{restatable}{corollary}{GEPoS}\label{cor:GEPoS}
    $\pos_\GE^\gl(n,k)=1$.
\end{restatable}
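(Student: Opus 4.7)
The plan is to exploit the tight structural characterization of Greedy Equilibria in the global setting given by \Cref{thm:GETemporalSpanners}, which identifies them exactly with inclusion-minimal terminal spanners of the host graph. Since the Price of Stability asks for the \emph{best} equilibrium relative to the optimum, it suffices to exhibit, for every host graph $H$, a GE whose social cost equals that of the social optimum $\s^*_H$.

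First, I would observe that $G(\s^*_H)$ is a minimum terminal spanner of $H$, since by the definition of the cost function every socially optimal profile must be a terminal spanner, and minimizing the total number of bought time edges is exactly minimizing $|\timeEdges_{G(\s^*_H)}|$. Next, I would argue that every minimum terminal spanner is in particular an \emph{inclusion-minimal} terminal spanner: if some time edge could be removed while preserving terminal reachability, the resulting subgraph would be a terminal spanner with strictly fewer time edges, contradicting minimality.

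Given that $G(\s^*_H)$ is an inclusion-minimal terminal spanner, I would invoke \Cref{thm:GETemporalSpanners}(ii) to produce a strategy profile $\s$ which is a GE in the global setting with $G(\s) = G(\s^*_H)$. Because $\s$ and $\s^*_H$ induce the same set of time edges, and since for any equilibrium $\SC{\s}{H} = |\timeEdges_{G(\s)}|$ (as noted in the Model and Notation section), we obtain $\SC{\s}{H} = \SC{\s^*_H}{H}$. Hence the minimum inside the definition of $\pos_\GE^\gl(n,k)$ is at most $1$; combined with the trivial lower bound that any equilibrium has social cost at least that of the social optimum, this yields $\pos_\GE^\gl(n,k) = 1$.

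There is no real obstacle here: the corollary is a direct consequence of \Cref{thm:GETemporalSpanners} together with the elementary observation that minimum terminal spanners are inclusion-minimal. The only care needed is to make the quantifiers for $\pos$ explicit and to recall the equality $\SC{\s}{H} = |\timeEdges_{G(\s)}|$ that holds for equilibria, so that equal underlying graphs immediately give equal social costs.
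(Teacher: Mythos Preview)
Your proposal is correct and follows essentially the same approach as the paper: take a minimum terminal spanner, note that it is inclusion-minimal, apply \Cref{thm:GETemporalSpanners}(ii) to obtain a GE realizing it, and conclude that this GE matches the social optimum. Your write-up is slightly more explicit about why minimum implies inclusion-minimal and why equal underlying graphs give equal social cost, but the argument is the same.
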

\begin{proof}
    Let $H$ be a temporal host graph and $G$ be a minimum terminal spanner for $H$. $G$ is also an inclusion minimal terminal spanner for $H$ and with \Cref{thm:GETemporalSpanners} there is a strategy profile $\s$ in GE with $G(s) = G$. The strategy profile $\s$ is also a social optimum. Therefore, $\pos_\GE^\gl=1$.
\end{proof}

The rest of the section analyzes the Price of Anarchy in multiple settings. We first give an upper bound on the PoA in all settings based on the maximum lifetime. It follows from the fact that there will not be a cycle of any label in an equilibrium which means that equilibria can contain at most $\lifetime (n-1)$ time edges.

\begin{restatable}{theorem}{PoALifetime}\label{thm:PoALifetime}
    For host graphs with a maximum lifetime of $\lifetime$, it holds that $\poa\in \mathcal{O}(\lifetime)$.
\end{restatable}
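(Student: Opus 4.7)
The plan is to show $\poa \leq \lifetime$, which is $\mathcal{O}(\lifetime)$, by bounding the size of any equilibrium in terms of $\lifetime$ and $n$, and then dividing by a lower bound on the social optimum. The key structural lemma I would establish first is that in any equilibrium $\s$ (NE or GE, local or global) and for every label $\tau\in\{1,\ldots,\lifetime\}$, the edges of $G(\s)$ carrying label $\tau$ form a forest on $V_H$.

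To prove this lemma I would argue by contradiction: suppose there is a cycle $v_0,v_1,\ldots,v_\ell = v_0$ in $G(\s)$ whose $\ell$ edges all carry label $\tau$, and let $e=(\{v_0,v_1\},\tau)$ be bought by some agent $u$. I claim that removing $e$ from $S_u$ is a greedy improving response, i.e., it does not shrink the set of terminals reached by $u$. Indeed, any temporal $u$-to-$t$ path in $G(\s)$ that uses $e$ arrives at $v_0$ (or $v_1$) at some time $\tau'\leq\tau$, traverses $e$ at time $\tau$, and continues from $v_1$ (or $v_0$) with labels $\geq\tau$. Replacing the single step along $e$ by the detour through the other $\ell-1$ cycle edges, all of which carry label exactly $\tau$, yields a valid temporal path: the prefix labels are $\leq\tau$, the detour is constant at $\tau$, and the suffix labels are $\geq\tau$. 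Temporal paths that do not use $e$ are unaffected, so $u$ keeps its full reachability set, contradicting that $\s$ is a GE; since every NE is a GE, the lemma holds for both notions.

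Summing the lemma over all $\lifetime$ labels gives $|\timeEdges_{G(\s)}|\leq\lifetime(n-1)$ for every equilibrium $\s$. For the denominator, the underlying graph of any terminal spanner is connected, because any two nodes are connected through some terminal of $T_H$, so $\SC{\s_H^*}{H}\geq n-1$. Combining the two bounds yields $\poa\leq\lifetime(n-1)/(n-1)=\lifetime$.

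The main obstacle will be a clean treatment of the global setting, where the owner $u$ of the monochromatic cycle edge need not be incident to it: the detour argument still goes through because the cycle edges all share the common label $\tau$, which lets the detour be inserted at precisely the moment in time at which the original path crosses $e$. The remainder of the argument is insensitive to local versus global and to single- versus multi-labeled host graphs, since the no-monochromatic-cycle property is purely label-local.
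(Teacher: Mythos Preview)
Your proposal is correct and follows essentially the same route as the paper: both arguments observe that an equilibrium cannot contain a monochromatic cycle (since the buyer of any cycle edge could drop it and reroute through the rest of the cycle at the same label), conclude that each label contributes at most $n-1$ time edges, and divide the resulting bound $\lifetime(n-1)$ by the trivial lower bound $n-1$ on the optimum. Your write-up is in fact a bit more careful than the paper's in spelling out why the detour preserves temporal feasibility and why the argument is unaffected by global edge ownership.
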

\begin{proof}
    Let $H$ be a temporal host graph with $n$ nodes and a lifetime of $\lifetime$ and let $A$ be an equilibrium for $H$. Assume that $A$ has at least $\lifetime(n-1)+1$ time edges. Then, by pigeonhole principle, there has to be a label $l$ that appears on at least $n$ time edges of $A$. Therefore there exists a cycle in $A$ that contains only time edges with label $l$. Let $e$ be a time edge of this cycle and let $v$ be the agent that has $e$ in its strategy.  Agent $v$ now has an improving response by deleting $e$ from its strategy as everything it could reach using $e$, it can now reach by using the rest of the cycle instead. That is a contradiction to $A$ being an equilibrium, so there cannot be an equilibrium with at least $\lifetime(n-1)+1$ time edges.
\end{proof}

In the local setting we bound the PoA from both directions dependant on $k$. For the lower bound we first use the graph product from \Cref{def:graph_product} to create graphs with $k$ nodes and terminals that contain spanning trees with a single label and equilibria that are hyper cubes with $\Theta(k \log k)$ time edges. We then use \Cref{cor:NE_terminals_non_terminals} and \Cref{lem:NE_dismountable} to blow up the examples to arbitrarily large graphs that still have $\log k$ times as many time edges in an equilibrium than in the social optimum.

\begin{restatable}{theorem}{PoALBLocal}
    \label{thm:PoA_lower_bound_local}
    $\poa^\lo(n,k)\in\Omega(\log k)$.
\end{restatable}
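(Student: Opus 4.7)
The plan is to iterate the graph product of \Cref{def:graph_product} to obtain a local NE of cost $\Theta(k\log k)$ on a host graph whose social optimum is $\Theta(k)$, then scale to arbitrary $n$ via \Cref{cor:NE_terminals_non_terminals} and \Cref{lem:NE_dismountable}. The ratio of these two quantities will give the desired $\Omega(\log k)$.

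Base construction. I would start with $H_1$, the two-node complete temporal graph with both nodes terminal and its sole edge labeled $1$, and $\s_1$, the NE in which one endpoint buys this edge. I then define $(H_{d+1}, \s_{d+1}) \coloneqq \product(\s_d, H_d, \s_1, H_1)$. By \Cref{thm:graph_product}, each $\s_d$ is a local NE (hence also a local GE). A short induction on $d$ shows that $H_d$ has $2^d$ nodes (all terminal) and that $G(\s_d)$ is the $d$-dimensional hypercube $Q_d$: the edge-count recurrence $E_{d+1} = 2 E_d + 2^d$ (two copies of $G(\s_d)$ plus one vertical edge per terminal of $H_d$) with $E_1 = 1$ solves to $E_d = d \cdot 2^{d-1}$, and the structured labeling from \Cref{def:graph_product} makes horizontal labels strictly smaller than vertical ones inside every product step.

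Amplification and scaling. Next, I would apply \Cref{cor:NE_terminals_non_terminals} to $(H_d, \s_d)$ with an integer $c \geq 3$, producing a host graph $H_d^\star$ on $n = c \cdot 2^d$ nodes with $k = 2^d$ terminals and a local NE of cost $c\,d\cdot 2^{d-1} + (c-1)\cdot 2^d = \Theta(n\log k)$. Crucially, $H_d^\star$ contains a spanning tree every edge of which carries the label $\lifetime_{H_d^\star}$; since our definition of temporal path permits equal consecutive labels, this uniform-label spanning tree is itself a terminal spanner, so the social optimum on $H_d^\star$ has cost at most $n - 1 = \Theta(n)$. Hence the PoA ratio is $\Theta(\log k)$, establishing the bound whenever $n$ is an exact multiple of $k$ with $n \geq 3k$. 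For arbitrary $n \geq 3k$, I would set $c \coloneqq \lfloor n/k \rfloor \geq 3$ and then pad via \Cref{lem:NE_dismountable}(2) the remaining $n - ck < k$ times; each padding step adds exactly one non-terminal node and at most one time edge to the equilibrium while preserving the uniform-label spanning tree, so the ratio remains $\Omega(\log k)$. For $k$ that is not a power of two, I would take $d \coloneqq \lfloor \log_2 k \rfloor$ and first apply \Cref{lem:NE_dismountable}(1) exactly $k - 2^d < 2^d$ times to introduce the missing terminals, again at the cost of at most one additional time edge per step.

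The main obstacle is the combinatorial bookkeeping rather than any conceptual leap. Verifying that $G(\s_d)$ really is $Q_d$ under repeated products requires carefully unrolling the three labeling regimes (horizontal, vertical, diagonal) introduced by \Cref{def:graph_product}, and tracking that neither the scaling in \Cref{cor:NE_terminals_non_terminals} nor the padding in \Cref{lem:NE_dismountable} destroys the uniform-label spanning tree, so that the social optimum truly stays $O(n)$ throughout. Once these invariants are in place the asymptotic ratio falls out immediately from $E_d / 2^d = d/2 = \Theta(\log k)$.
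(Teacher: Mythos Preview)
Your proposal is correct and mirrors the paper's proof exactly: iterate the graph product on the two-node base to obtain a hypercube NE with $\Theta(k\log k)$ edges, then scale via \Cref{cor:NE_terminals_non_terminals} and pad via \Cref{lem:NE_dismountable}, with the uniform-label spanning tree from \Cref{cor:NE_terminals_non_terminals} keeping the optimum at $n-1$. You are in fact more careful than the paper about non-power-of-two $k$ and non-multiple $n$; the only slip is that \Cref{lem:NE_dismountable}(1) guarantees \emph{at least} $m+1$ time edges rather than at most, which only helps the lower bound.
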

\begin{proof}
    Let $k=2^d$ for some $d\in\N$. We now construct a sequence of host graphs $H_1,\dots, H_d$ by repeatedly applying the product graph result. Let $H_1$ be the host graph containing $n_1=2$ nodes, $k_1=2$ terminals and exactly one label on the only edge. It is easy to see, that there is a NE $\s^1$: One of the nodes buys the only existing time edge. We now define $(\s^{i+1},H_{i+1})\coloneqq(\s^i,H_i,\s^1,H_1)$ for all $1<i<d$. By \Cref{thm:graph_product} $\s^i$ is a NE for $H_i$. We can see, that $G(\s^i)$ is an $i$-dimensional hypercube. Thus, $G(\s^d)$ contains $\Theta(k\log k)$ time edges.

    Using \Cref{cor:NE_terminals_non_terminals} (if $
    \frac{n}{k}\ge 3$) and \Cref{lem:NE_dismountable}, we obtain a host graph $H$ and a NE $\s$ containing $\Theta(k\log k\cdot\frac{n}{k})$ time edges. For $k\ge 2$, the host graph contains a spanning tree consisting of edges with label $\lifetime$. Thus, $\opt$ contains only $n-1$ time edges. We obtain
    \begin{align*}
        \poa^\lo\ge\frac{\socialcost_{H_d}(\s^d)}{\socialcost_{H_d}(\opt)}\in\Omega\left(\frac{k\log k\cdot\frac{n}{k}}{n-1}\right)=\Omega(\log k).&\qedhere
    \end{align*}
\end{proof}

The $\mathcal{O}(\sqrt k)$ upper bound follows directly from \Cref{thm:dense_not_ge}.

\begin{restatable}{corollary}{PoAUBLocal}
    \label{thm:PoA_upper_bound_local}
    $\poa^\lo(n,k)\in\mathcal{O}(\sqrt k)$.
\end{restatable}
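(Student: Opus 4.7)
The plan is to obtain this bound as an almost immediate corollary of \Cref{thm:dense_not_ge}. Since every Nash equilibrium is also a Greedy equilibrium, it suffices to bound the social cost of any GE. The contrapositive of \Cref{thm:dense_not_ge} states that for any GE $\s$ on a host graph $H \in \mathcal{H}_{n,k}$ in the local setting, $|\timeEdges_{G(\s)}| < \sqrt{6k}\,n + n$. Since the social cost of any equilibrium equals the number of time edges it contains, this gives $\SC{\s}{H} = O(n\sqrt{k})$.

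The second step is to show a matching lower bound $\SC{\s^*_H}{H} = \Omega(n)$. Because the penalty constant $C > 1$ is large enough to dominate the edge-buying cost (each missed terminal incurs cost $C$ while adding a direct time edge to a terminal costs only $1$), the social optimum must itself be a terminal spanner. In any terminal spanner of $H$, the underlying (non-temporal) graph must be connected: otherwise either some non-terminal would sit in a component containing no terminal, or two terminals would lie in different components, and in either case the required reachability to every terminal fails. Hence the underlying graph contains at least $n - 1$ edges, and so at least $n - 1$ time edges, giving $\SC{\s^*_H}{H} \geq n - 1$.

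Combining the two bounds yields
\begin{align*}
    \poa^\lo(n,k) \leq \frac{\sqrt{6k}\,n + n}{n-1} = O(\sqrt{k}),
\end{align*}
where the trivial case $n=1$ is handled separately (it forces $k=1$ and all costs are zero, so the ratio is $1$). There is no genuine technical obstacle: all the heavy lifting is concentrated in \Cref{thm:dense_not_ge}, and the lower bound on the optimum is essentially a one-line connectivity argument.
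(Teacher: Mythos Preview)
Your proposal is correct and follows essentially the same approach as the paper: apply \Cref{thm:dense_not_ge} to upper bound the number of time edges in any GE by $\mathcal{O}(n\sqrt{k})$, lower bound the optimum by $n-1$, and divide. The paper's proof is terser and simply asserts the $n-1$ lower bound, whereas you spell out the connectivity argument, but the substance is identical.
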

\begin{proof}
    By \Cref{thm:dense_not_ge}, the number of time edges in a GE (and thus in a NE, too) is upper bounded by $\Omega(n\sqrt{k})$. The number of time edges in an optimum is lower bounded by $n-1$. Thus,
    \begin{align*}
        \poa^\lo\in\mathcal{O}\left(\frac{n\sqrt{k}}{n-1}\right)=\mathcal{O}(\sqrt{k}).&\qedhere
    \end{align*}
\end{proof}

In the global setting we get a much higher upper bound on the PoA. It follows from the simple observation that for each terminal a spanning tree suffices to reach it. Hence all nodes together only buy at most $n-1$ time edges per terminal. As a social optimum needs at least $n-1$ edges, the PoA is upper bounded by $k$.

\begin{restatable}{theorem}{PoAUBGlobal}
    \label{thm:PoA_upper_bound_global}
    $\poa^\gl(n,k)\le k$.
\end{restatable}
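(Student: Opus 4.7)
The plan is to show $\socialcost_H(\s_H^*) \ge n-1$ from below and $\socialcost_H(\s) \le k(n-1)$ from above for every equilibrium $\s$ in the global setting, which yields $\poa^\gl(n,k) \le k$ upon dividing. The lower bound is immediate: any terminal spanner must allow every node to reach every terminal temporally, which forces its underlying (static) graph to be connected, so $\socialcost_H(\s_H^*) = |\timeEdges_{G(\s_H^*)}| \ge n - 1$.

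The heart of the argument is the following key lemma: for any equilibrium $\s$ and any terminal $t \in T_H$, there exists a subgraph $F_t \subseteq G(\s)$ with $|F_t| \le n-1$ time edges such that every node still reaches $t$ using only $F_t$. To construct $F_t$, for each $v \neq t$ I would define $\beta(v)$ to be the largest label $\lambda$ such that $v$ admits a temporal path to $t$ in $G(\s)$ whose first time edge has label $\lambda$; this is well-defined since $G(\s)$ is a terminal spanner. Set $\beta(t) := \infty$. For each $v \neq t$, pick a time edge $f(v) = (\{v,u_v\},\beta(v)) \in \timeEdges_{G(\s)}$ realizing this maximum, using a careful tie-breaking rule (discussed below) to keep the induced parent map $v \mapsto u_v$ cycle-free. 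Setting $F_t := \{f(v) : v \neq t\}$ gives $|F_t| \le n-1$ since each $f(v)$ is incident to a distinct $v$. Correctness follows by induction, processing nodes in decreasing $\beta$-order: traversing $f(v)$ brings $v$ to $u_v$ at time $\beta(v) \le \beta(u_v)$, and by induction $u_v$ has a temporal path to $t$ inside $F_t$ starting at time $\beta(u_v) \ge \beta(v)$, so the concatenation is time-respecting.

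Given the lemma, set $F := \bigcup_{t \in T_H} F_t$, so $|F| \le k(n-1)$ and $F$ is a terminal spanner contained in $G(\s)$. If $|\timeEdges_{G(\s)}| > k(n-1)$, there exists a time edge $(e,\lambda) \in \timeEdges_{G(\s)} \setminus F$. Since equilibrium strategies are pairwise disjoint, there is a unique agent $v$ owning $(e,\lambda)$ in $\s$. The deviation $S_v \setminus \{(e,\lambda)\}$ produces the graph $G(\s) \setminus \{(e,\lambda)\}$, which still contains $F$ and is therefore a terminal spanner; so $v$ still reaches every terminal while saving one unit of cost. This is a (greedy) improving response, contradicting that $\s$ is an equilibrium. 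Hence $\socialcost_H(\s) \le k(n-1)$, and combining with the lower bound on the optimum, $\poa^\gl(n,k) \le k(n-1)/(n-1) = k$.

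The main obstacle is the tie-breaking inside the lemma: when several nodes share the same $\beta$-value $\lambda$, selecting first edges naively can induce cycles in the parent map and break the induction. The resolution is to observe that every $\beta$-level set $S_\lambda := \{v : \beta(v) = \lambda\}$ must contain at least one node with a valid first edge reaching a strictly higher $\beta$-level; otherwise no node of $S_\lambda$ could actually reach $t$, contradicting $\beta(v) = \lambda$ being well-defined. Seeding from such exit nodes and orienting the intra-level first edges in a BFS manner yields an acyclic parent assignment within each $S_\lambda$, completing the construction.
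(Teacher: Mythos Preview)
Your proof is correct and follows essentially the same strategy as the paper: for each terminal $t$, exhibit a subgraph of $G(\s)$ with at most $n-1$ time edges in which every node still reaches $t$, take the union over all $k$ terminals, and observe that any time edge outside this union can be dropped by its owner without losing reachability, contradicting equilibrium. The only difference is granularity: the paper simply invokes the existence of a ``reachability tree towards $t$'' in $G(\s)$ as a standard fact, whereas you explicitly build $F_t$ via the latest-departure function $\beta$ and work out the tie-breaking to keep the parent map acyclic. Your tie-breaking argument is fine; a slightly slicker way to avoid the BFS detour is to choose $f(v)$ as the first edge of a \emph{shortest} temporal path among those starting with label $\beta(v)$, so that the lexicographic pair $(\beta(v),-\ell(v))$ strictly increases along the parent map.
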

\begin{proof}
    Let $H$ be a host graph on $n$ nodes containing $k$ terminals. We argue that any NE $\s$ contains at most $kn$ time edges. For each $t\in T_H$ let $B_t$ be the time edge set of a reachability tree towards $t$ in $G(\s)$, i.e., $B_t$ is a tree containing all nodes from $H$, only time edges from $G(\s)$, and all nodes can reach $t$ in $B_t$. Since $\s$ is a NE, $\s$ is a terminal spanner and therefore those trees exist. Any time edge not in any of the $B_t$ can be removed by its buyer since she already has temporal paths to each terminal. Thus,
    \begin{align*}
        \left|\timeEdges_{G(\s)}\right|=\left|\bigcup_{t\in T_H}B_t\right|\le\sum_{t\in T_H}|B_t|\le\sum_{t\in T_H}n-1\le k(n-1).
    \end{align*}
    
    Trivially lower bounding the size of social optima with $n-1$ yields the result.
\end{proof}

For Greedy Equilibria this upper bound is asymptotically tight. This follows from a construction from \citet{AxiotisF16} that creates graphs with $\Theta(n^2)$ minimum temporal spanners. We extend those graphs to get temporal cliques with minimal temporal spanners of size $\Theta(n^2)$, which by \Cref{thm:GETemporalSpanners} implies GEs of the same size. Using \Cref{cor:NE_terminals_non_terminals} and \Cref{lem:NE_dismountable} again generalizes the bound to arbitrarily large graphs with fixed number of terminals $k$.

\begin{restatable}{theorem}{PoALBGlobal}
    \label{thm:PoA_lower_bound_global_GE}
    $\poa_\GE^\gl(n,k)\in \Omega(k)$.
\end{restatable}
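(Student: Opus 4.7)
The plan is to combine an existing lower bound construction for minimum temporal spanners with our machinery for transferring equilibria across parameter regimes. More precisely, I would first produce a host graph on $k$ nodes, all terminals, on which every GE has $\Theta(k^2)$ time edges, and then apply the blow-up tools from \Cref{cor:NE_terminals_non_terminals} and \Cref{lem:NE_dismountable} to extend the example to arbitrary $n \ge k$.

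For the first step I would take the construction of \citet{AxiotisF16}, which exhibits temporal graphs on $n$ vertices whose \emph{every} temporal spanner (and in particular every inclusion-minimal one) has $\Theta(n^2)$ time edges. Since our game is defined on a temporal clique rather than an arbitrary temporal graph, I would complete the Axiotis-Fotakis graph to a temporal clique by assigning each missing edge a single label strictly larger than the original lifetime; these ``dummy'' edges occur too late to create any new temporal path that was not already present (since every temporal path can use at most one such edge at its very end, and the endpoints of such an edge are already temporally connected in the original construction), so the minimum temporal spanner size remains $\Theta(k^2)$ when we set $n=k$ and declare all vertices to be terminals. By \Cref{thm:GETemporalSpanners}(ii), any inclusion-minimal temporal spanner of this host graph can be realized as a GE in the global setting, yielding a GE with $\Theta(k^2)$ time edges on a host graph with $k$ nodes and $k$ terminals.

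Next, to reach arbitrary $n$ while keeping the terminal count at $k$, I would invoke \Cref{cor:NE_terminals_non_terminals} with $c = \lfloor n/k \rfloor$ to obtain a host graph on $ck$ nodes with $k$ terminals that admits a GE of size $c \cdot \Theta(k^2) + (c-1)k = \Theta(nk)$, and, when $c \ge 3$, containing a spanning tree of edges all labeled $\lifetime$. The latter property ensures that the social optimum is just a spanning tree of $n-1$ time edges. For values of $n$ that are not multiples of $k$, I would then apply \Cref{lem:NE_dismountable}(2) at most $k-1$ times to add the remaining nodes one by one, each step increasing both the equilibrium size and the node count by one while preserving the spanning-tree-at-$\lifetime$ property (hence the optimum stays at $n-1$). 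Combining these bounds gives
\begin{equation*}
    \poa_\GE^\gl(n,k) \;\ge\; \frac{\Theta(nk)}{n-1} \;=\; \Omega(k).
\end{equation*}

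The main obstacle I anticipate is verifying that completing the Axiotis-Fotakis construction to a temporal clique does not shrink the minimum spanner: one has to argue carefully that the newly added late-labeled edges cannot serve as shortcuts enabling a sparser spanner. The cleanest way is to place all dummy labels strictly above the original lifetime so that any temporal path using a dummy edge must use it as its final edge, and to check that for each such dummy edge the original construction already certifies that every spanner contains a distinct path between its endpoints. The remaining steps are routine applications of the tools already established in the paper.
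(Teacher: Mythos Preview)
Your high-level plan matches the paper's: start from the Axiotis--Fotakis dense-spanner construction, complete it to a temporal clique, invoke \Cref{thm:GETemporalSpanners} to obtain a large GE, and then blow up via \Cref{cor:NE_terminals_non_terminals} and \Cref{lem:NE_dismountable}. The paper does exactly this.

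There is, however, one unnecessary detour that creates the very obstacle you worry about. You aim to show that the \emph{minimum} temporal spanner of the completed clique is still $\Theta(k^2)$. This is both stronger than needed and genuinely delicate: if you give all missing edges the same large label and the complement of the Axiotis--Fotakis graph happens to be connected, a spanning tree among those dummy edges is already a temporal spanner of size $k-1$, and your claim fails. What you actually need is only that \emph{some} inclusion-minimal spanner has $\Theta(k^2)$ edges, and this is immediate: the dense inclusion-minimal spanner from the original construction remains inclusion-minimal after edges are added to the host, since inclusion-minimality depends only on the subgraph itself, not on the ambient host. The paper makes precisely this observation (and, incidentally, assigns the missing edges label $1$ rather than a large label, which also furnishes a spanning tree of label-$1$ edges and hence $\opt=k-1$ directly on the base graph). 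With this simplification your ``main obstacle'' disappears entirely, and the remainder of your argument goes through as written.
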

\begin{proof}
    In their paper, \citet{AxiotisF16} give an example for dense simple temporal graphs in which all spanners have $\Omega(n^2)$ edges. Hence, those graphs contain inclusion minimal spanners of that size. We now extend their graphs by adding all missing edges and give them the label 1. As minimal temporal spanners stay minimal temporal spanners when adding more edges to the host graph, the resulting graphs still have minimal temporal spanners with $\Omega(n^2)$ edges. By \Cref{thm:GETemporalSpanners}, those spanners imply Greedy Equilibria of the same size. As the added edges also contain a spanning tree, the social optimum on these graphs has $n-1$ edges. Hence, by setting the number of terminals $k$ to $n$, we get an example graph with $\Omega(k)$ times more edges in an equilibrium than in the social optimum. Using \Cref{cor:NE_terminals_non_terminals} and \Cref{lem:NE_dismountable}, we obtain such examples for arbitrarily large number of nodes $n'$.
\end{proof}

\begin{figure*}
    \centering
    \pgfplotsset{colormap={CM}{color=(blue) color=(purple) color=(red) color=(orange)}}
    \tikzset{farbe/.style={color of colormap=#1}}
    \pgfmathsetmacro{\x}{4}
    \pgfmathsetmacro{\xlim}{int(\x-1)}
    \pgfmathsetmacro{\c}{int(2*\x)}
    \pgfmathsetmacro{\clim}{int(\c-1)}
    \pgfmathsetmacro{\halfx}{int(\x/2)}
    \pgfmathsetmacro{\halfxlim}{int(\halfx-1)}
    \pgfmathsetmacro{\halfxlimMinusOne}{int(\halfxlim-1)}
    \resizebox{0.73\linewidth}{!}{
    \begin{tikzpicture}[inner sep=0.5pt]
    	\foreach \i in {0,1,...,\clim}{
    		\pgfmathsetmacro{\deg}{-360*\i/\c+90}
    		\foreach \j in {0,...,\halfxlim}{
    			\pgfmathsetmacro{\offset}{2.5*(\halfx-\j+0.5)}
    			\pgfmathsetmacro{\offsete}{2.5*(\halfx-\j)}
    			\node[draw, circle,minimum size=22pt] (v\i\j) at (\deg:\offset) {$v_{\i,\j}$};
    			\node[draw,circle,minimum size=22pt] (u\i\j) at (\deg:\offsete) {$v'_{\i,\j}$};
    		}
    	}
    	
        \pgfmathsetmacro{\xinc}{int(\x+1)}
    	\foreach \i in {0,1,...,\clim}{
    		\pgfmathsetmacro{\nexti}{int(Mod(\i+1,\c))}
    		\foreach \j in {0,...,\halfxlim}{
    			\foreach \k in {0,...,\halfxlim} {
					\pgfmathsetmacro{\labelv}{int(Mod(2*(\k-\j),\x)+1)}
					\pgfmathsetmacro{\labelu}{int(Mod(2*(\k-\j)+1,\x)+1)}
    				\pgfmathsetmacro{\colorv}{(\labelv-1)*1000/(\x)}
    				\pgfmathsetmacro{\coloru}{(\labelu-1)*1000/(\x)}
    				\path[-, farbe=\colorv]
    				\edc(v\i\j, u\nexti\k, \labelv);
    				\path[-, farbe=\coloru]
    				\edc(u\i\j, v\nexti\k, \labelu);
    			}
                \path[-, farbe=1000]
                \edc(v\i\j, u\i\j, \xinc);
    		}
            \ifthenelse{\x>2}{
            \foreach \j in {0,...,\halfxlimMinusOne} {
                \pgfmathsetmacro{\nextj}{int(\j+1)}
                \path[-, farbe=1000]
                \edc(u\i\j, v\i\nextj, \xinc);
            }}{}
    	}
    \end{tikzpicture}}
    \caption{This figure shows a dense cycle graph for $x=4$.}\label{fig:dense_cycle}
\end{figure*}

We conclude with one of our main results, establishing a rather high lower bound on the Price of Anarchy of Nash Equilibria in the global setting.
\begin{restatable}{theorem}{PoALB}\label{thm:PoALB}
    $\poa_\NE^\gl(n,k)\in\Omega(\sqrt{k})$.
\end{restatable}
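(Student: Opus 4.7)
The plan is to exploit the \emph{dense cycle} host graph $H_x$ depicted in \Cref{fig:dense_cycle}. Fix an even integer $x\ge 2$; the graph has $n=2x^2$ nodes arranged in $2x$ columns of $x$ nodes each, and I would declare every node a terminal, so $k=n$. Only adjacent-column edges carry the small labels $1,\dots,x$, and each label value corresponds to a perfect matching between the ``$v$'' and ``$v'$'' nodes of the two columns, with a strict parity/direction rule: traversing an edge from a $v$-node to a $v'$-node of the next column uses an odd label, while the analogous ``next column'' hop from a $v'$-node to a $v$-node uses an even label. All other edges, intra-column or between non-adjacent columns, carry the maximum label $x+1$. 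Since this label-$(x+1)$ subgraph already contains a spanning tree of common label, the social optimum is at most $n-1 = O(n)$.

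For the equilibrium witness, I would construct a strategy profile $\s$ in which every small-label edge between two adjacent columns is bought by exactly one of its endpoints, selected by a canonical tie-breaking rule. A direct count then gives $|\timeEdges_{G(\s)}| = \Theta(x^3) = \Theta(n^{3/2})$. Temporal connectivity relies on the parity/direction rule: any temporal walk restricted to labels in $\{1,\dots,x\}$ must alternate parity at each step and hence be strictly label-increasing, so its length is at most $x$; combining one such forward path with one backward path from the source already covers all $2x$ columns, and the matching structure lets us land on any specific node within each column. The Nash verification uses the same parity principle from a different angle: in any temporal walk originating at an agent $v$, a label-$(x+1)$ edge can only appear as the \emph{final} hop, because no smaller label can succeed it. Consequently, if $v$ drops or substitutes one of its own small-label edges, it loses the ability to initiate the unique parity-alternation chain that leads to a specific far-away terminal, and non-incident heavy edges bought by others cannot compensate since they can only serve after $v$ has already reached the destination column through its own small-label edges.

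The resulting cost ratio is $\poa_\NE^\gl(2x^2,2x^2) \in \Omega(x) = \Omega(\sqrt{k})$. To conclude, I would invoke \Cref{cor:NE_terminals_non_terminals} and \Cref{lem:NE_dismountable} to lift the construction to arbitrary pairs $(n',k')$ with $k' \le n'$, preserving both the equilibrium cost $\Theta(n'^{3/2})$ and the label-$\lifetime$ spanning tree that keeps the social optimum at $O(n')$. I expect the main obstacle to be the Nash verification: since the global setting allows each agent to buy arbitrary non-incident edges, one must carefully rule out ``outsourced'' strategies that trade a small-label edge owned by $v$ for a combination of another agent's heavy edge plus some non-incident redirection. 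The strict parity-alternation of $H_x$ is engineered precisely so that this case analysis collapses, because a heavy edge can only occur at the very end of a temporal walk, every terminal that $v$ currently reaches through a length-$x$ small-label chain becomes unreachable if $v$ sacrifices its initiating edge, regardless of which non-incident edges $v$ purchases instead.
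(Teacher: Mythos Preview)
There are two genuine gaps in your plan, and both stem from departures from the paper's construction that turn out to be essential rather than cosmetic.

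\textbf{The proposed $G(\s)$ is disconnected.} You let $G(\s)$ consist only of the small-label edges between adjacent columns. But in the dense cycle graph, a $v$-node in column $i$ is adjacent only to $v'$-nodes in columns $i\pm 1$, and vice versa. Since there are $2x$ columns (an even number), the small-label graph splits into two components: one containing the $v$-nodes of even-indexed columns together with the $v'$-nodes of odd-indexed columns, and the other containing the rest. Hence $G(\s)$ is not a terminal spanner and $\s$ cannot be an equilibrium. The paper repairs this by also including the intra-bag label-$(x{+}1)$ paths in $G(\s)$; these are the edges that bridge the two components. (The paper also uses label $x{+}2$, not $x{+}1$, for the remaining host edges; this separation matters for the Nash argument.)

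\textbf{Local ownership cannot yield a NE here.} Even after adding the intra-bag paths, assigning every edge to one of its endpoints fails. Take $x\ge 4$ and the label-$3$ edge $e=\{v_{0,0},v'_{1,1}\}$. Neither endpoint needs $e$: $v_{0,0}$ already reaches every bag via its unique forward path $P_{0,0}$ (labels $1,\dots,x$) and backward path $Q_{0,0}$ (labels $2,\dots,x$), neither of which contains $e$; the intra-bag $(x{+}1)$-paths then cover all nodes within each bag. The same holds for $v'_{1,1}$. So under any endpoint assignment, the owner of $e$ can drop it, and $\s$ is not even a GE. The node that truly needs $e$ is $v_{2x-2,0}$, because $e$ is the third edge of $P_{2x-2,0}$, its unique route to the opposite bag $B_{x-2}$. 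This is exactly what the paper exploits: each odd node $v_{i,j}$ buys the \emph{entire} (mostly non-incident) path $P_{i,j}$, and the first even node of each bag buys the $(x{+}1)$-path of the opposite bag. The Nash verification then reduces to a clean count: any deviation of $v_{i,j}$ that drops an edge of $P_{i,j}$ forces it to buy one label-$(x{+}2)$ edge per node of $B_{i+x}$, hence at least $x$ edges. Your sentence ``$v$ has already reached the destination column through its own small-label edges'' presupposes precisely this non-incident ownership, which your endpoint rule does not provide.
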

To show this, we construct a class of simple temporal graphs containing $k$ terminals and nodes and $\Omega(k^\frac{3}{2})$ edges, which we term \emph{dense cycle graphs} (see the following \Cref{def:PoALBConstruction} and \Cref{fig:dense_cycle}). We then embed it into a host graph and show that the edges can be assigned to the agents such that the resulting strategy profile is an equilibrium in the global setting. The main idea is that there are $\Theta(\sqrt{k})$ bags arranged in cyclic fashion each containing $\Theta(\sqrt{k})$ nodes. The labeling is such that each node has exactly one temporal path towards the bag on the opposite side enforcing it to buy the whole path of length $\Theta(\sqrt{k})$.

To prove this we prove a sequence of lemmas making up the remaining part of this section. We start with the definition of dense cycle graphs.

\begin{definition}\label{def:PoALBConstruction}
    Let $x\in\N$ be an even number. A \emph{dense cycle graph} is a simple temporal graph $G$ consisting of $2x$ bags $B_0,\dots,B_{2x-1},B_{2x}=B_0$ each containing $\frac{x}{2}$ pairs of nodes, i.e.
    \begin{align*}
        \forall 0\le i < 2x\colon B_i&\coloneqq\{v_{i,0},v'_{i,0}, v_{i,1},v'_{i,1}, \dots, v_{i,\frac{x}{2}-1}, v'_{i,\frac{x}{2}-1}\}\\
        V_G&\coloneqq \bigcup_{0\le i < 2x}B_i.
    \end{align*}
    We call all nodes $v_{i,j}$ \emph{odd} and all nodes $v'_{i,j}$ \emph{even}. The bags form a ring-shaped graph with $n=2x^2$ nodes, where every two adjacent bags are densely connected in the following way:
    \begin{align*}
        E_G\coloneqq{}&\{\{v_{i,j},v'_{i+1,k}\}\mid 0\le i< 2x\wedge 0\le j,k< \frac{x}{2}\}\\
        &\cup\{\{v'_{i,j},v_{i+1,k}\}\mid 0\le i< 2x\wedge 0\le j,k< \frac{x}{2}\}.
    \end{align*}
    Notice, that between two bags, only nodes of different parity are connected. For the labels, we choose $\lambda_G$ such that, for all $0\le i<2x$ and $0\le j,k< \frac{x}{2}$, we have
    \begin{align*}
        \lambda_G(\{v_{i,j},v'_{i+1,k}\})&\coloneqq (2(k-j)\bmod x)+1 \text{ and}\\
        \lambda_G(\{v'_{i,j},v_{i+1,k}\})&\coloneqq ((2(k-j)+1)\bmod x)+1.
    \end{align*}

    Furthermore, we define a \emph{connected} dense cycle graph $G'$ as a cycle graph which contains an additional path with label $x+1$ inside each bag:
    
    \begin{align*}
        E_{G'}=E_G&\cup \{\{v_{i,j},v'_{i,j}\}\mid 0\le i < 2x\wedge 0\le j< \frac{x}{2}\}\\
        &\cup \{\{v'_{i,j},v_{i,j+1}\}\mid 0\le i < 2x\wedge 0\le j< \frac{x}{2}-1\}
    \end{align*}
\end{definition}
For an example of a connected dense cycle graph, see \Cref{fig:dense_cycle}.

To show \Cref{thm:PoALB}, we construct a simple host graph and a NE that contains $\Omega(n^\frac{3}{2})$ edges. We start by defining a specific graph class and showing some useful properties that will eventually yield the desired result.

Towards lower bounding the Price of Anarchy in the global setting, we proof a sequence of useful properties of dense cycle graphs.

\begin{restatable}{lemma}{PoALBSize}\label{lem:PoALBSize}
    Let $x\in \N$ be an even number. Then, a dense cycle graph $G$ for $x$ has $2x^2$ nodes and $x^3$ edges. A connected dense cycle graph $G'$ has $x^3+2x(x-1)$ edges.
\end{restatable}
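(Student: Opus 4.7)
The proof is a straightforward counting argument that unpacks the definitions, so the plan is simply to tally the nodes and each family of edges carefully.

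First I would count the nodes. Each bag $B_i$ is an explicit list of $\tfrac{x}{2}$ odd nodes together with $\tfrac{x}{2}$ even nodes, so $|B_i| = x$. Since the bags $B_0, \ldots, B_{2x-1}$ are pairwise disjoint and there are $2x$ of them, we get $|V_G| = 2x \cdot x = 2x^2$ immediately.

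Next I would count the edges of the plain dense cycle graph $G$. The edge set is partitioned into $2x$ groups, one per pair of consecutive bags $(B_i, B_{i+1})$ with indices taken modulo $2x$. Within one such group, the edges split further into the odd-to-even edges $\{v_{i,j}, v'_{i+1,k}\}$ and the even-to-odd edges $\{v'_{i,j}, v_{i+1,k}\}$; each choice of $(j,k) \in \{0, \ldots, \tfrac{x}{2}-1\}^2$ gives one edge, and the two families are disjoint since they connect nodes of different parities. This yields $2 \cdot (\tfrac{x}{2})^2 = \tfrac{x^2}{2}$ edges between any two consecutive bags, and hence $2x \cdot \tfrac{x^2}{2} = x^3$ edges in total. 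The only subtle bookkeeping is to confirm that no inter-bag edge is counted twice across the $2x$ groups, which follows because each edge has one endpoint in $B_i$ and the other in $B_{i+1}$ with $i$ uniquely determined.

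Finally I would count the extra intra-bag edges in $G'$. Fix a bag $B_i$. The two families that are added inside this bag are $\{v_{i,j}, v'_{i,j}\}$ for $0 \le j < \tfrac{x}{2}$, contributing $\tfrac{x}{2}$ edges, and $\{v'_{i,j}, v_{i,j+1}\}$ for $0 \le j < \tfrac{x}{2}-1$, contributing $\tfrac{x}{2}-1$ edges; these are disjoint since the indices $j$ and $j{+}1$ differ. So each bag contributes $x-1$ new edges and, summing over the $2x$ bags, the extra edges number $2x(x-1)$. Adding this to the previous count gives $|E_{G'}| = x^3 + 2x(x-1)$, as claimed.

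There is essentially no obstacle here beyond careful arithmetic and avoiding double-counting; the only thing worth double-checking is that the cyclic indexing contributes exactly $2x$ (not $2x-1$) inter-bag groups, which is correct because $B_{2x} = B_0$ identifies the final pair $(B_{2x-1}, B_0)$ as a genuine additional group.
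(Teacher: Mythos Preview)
Your proof is correct and follows essentially the same counting approach as the paper's own proof, just with slightly more explicit bookkeeping about the two parity families of inter-bag edges and the cyclic indexing. The paper phrases the inter-bag count as ``each node is connected to half of the nodes in the other bag'' and the intra-bag count as ``$2x$ paths with $x-1$ edges each,'' which amounts to the same arithmetic you carry out.
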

\begin{proof}
    $G$ consists of $2x$ bags containing $\frac{x}{2}$ node pairs each. Therefore $|V_G|=2x^2$. Between two adjacent bags, there are $x\cdot \frac{x}{2}$ edges, since each node is connected to half of the nodes in the other bag. Therefore $G$ contains $x^3$ edges. $G'$ contains an additional $2x$ paths (one per bag) with $x-1$ edges each.
\end{proof}

\begin{restatable}{lemma}{PoALBIncidentEdges}\label{lem:PoALBIncidentEdges}
    Let $G$ be a dense cycle graph for some even $x\in\N$. Then, for each node $v\in V$ and label $1\le l\le x$, there is exactly one edge with label $l$ incident to $v$. Moreover,
    \begin{compactitem}
        \item for odd nodes $v_{i,j}$, all incident odd-labeled edges go towards bag $B_{i+1}$ and all incident even-labeled edges go towards bag $B_{i-1}$,
        \item for even nodes $v'_{i,j}$, all incident even-labeled edges go towards bag $B_{i+1}$ and all incident odd-labeled edges go towards bag $B_{i-1}$.
    \end{compactitem}
\end{restatable}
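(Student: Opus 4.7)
The plan is to verify the lemma by a direct case analysis, splitting on the parity of the node and on the parity of the label. The structure of the labelling almost dictates this split: each of the two label formulas in \Cref{def:PoALBConstruction} has a fixed parity modulo~$2$ (since $x$ is even, $2(k-j) \bmod x$ is even while $(2(k-j)+1) \bmod x$ is odd), so after the $+1$ one formula always produces odd labels and the other always even labels.

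First I would record the four parity patterns. For an odd node $v_{i,j}$, the incident edges towards $B_{i+1}$ are exactly the edges $\{v_{i,j},v'_{i+1,k}\}$, whose labels all have the form $(2(k-j)\bmod x)+1$, and hence are odd. Its incident edges towards $B_{i-1}$ are the edges $\{v'_{i-1,k},v_{i,j}\}$, whose labels have the form $((2(j-k)+1)\bmod x)+1$ and hence are even. The symmetric computation for even nodes $v'_{i,j}$ gives even labels towards $B_{i+1}$ and odd labels towards $B_{i-1}$. This already proves the ``moreover'' part of the statement and, crucially, shows that for a fixed node~$v$ and label~$l$, the parity of $l$ pins down which of the two neighbouring bags contains all the candidate incident edges with label~$l$.

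Having restricted to a single bag, uniqueness of the label-$l$ edge reduces to a simple counting argument. For instance, for an odd node $v_{i,j}$ and an odd label $l=2m+1$ with $0\le m<\tfrac{x}{2}$, the incident edges with label~$l$ in $B_{i+1}$ are exactly those $\{v_{i,j},v'_{i+1,k}\}$ with $2(k-j)\equiv 2m \pmod{x}$, i.e.\ $k\equiv j+m \pmod{x/2}$. Since $k$ ranges over $\{0,1,\dots,\tfrac{x}{2}-1\}$, this congruence has precisely one solution, so there is exactly one such edge. The three remaining cases (odd node with even label, even node with odd label, even node with even label) are handled identically: each time the label condition becomes a single linear congruence modulo $x/2$ in the neighbour index, with a unique solution in $\{0,\dots,\tfrac{x}{2}-1\}$. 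Combined with the parity pattern above, this yields exactly one incident edge with label~$l$ for every $1\le l\le x$.

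There is no real obstacle: the argument is essentially bookkeeping. The one subtlety worth flagging is the use of $x$ being even, which guarantees that ``$2(k-j)\bmod x$'' is always even and ``$2(k-j)+1 \bmod x$'' is always odd; without this, the clean parity separation between the two bag-directions would break. The cyclic indexing of the bags modulo $2x$ plays no role beyond allowing us to speak of $B_{i-1}$ and $B_{i+1}$ for every~$i$.
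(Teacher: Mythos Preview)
Your proposal is correct and follows essentially the same approach as the paper: both first use the evenness of $x$ to deduce that the two label formulas have fixed parity, pinning each label to a single neighbouring bag, and then argue that the map $k\mapsto\text{label}$ is a bijection onto the odd (respectively even) labels in $\{1,\dots,x\}$. The paper states the image set directly as $\{1,3,\dots,x-1\}$ or $\{2,4,\dots,x\}$, while you instead solve the underlying congruence modulo $x/2$; these are two phrasings of the same counting argument.
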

\begin{proof}
    In \Cref{def:PoALBConstruction}, we see that in dense cycle graphs, even nodes are only adjacent to odd nodes and vice versa. Let $v_{i,j}$ be an odd node. There are $\frac{x}{2}$ edges going towards bag $B_{i+1}$ with labels
    \[\{(2(k-j)\bmod x)+1\mid 0\le k< x\}=\{1,3,\dots,x-1\}.\]
    
    Similarly, there are $\frac{x}{2}$ edges going towards bag $B_{i-1}$ with labels
    \[\{((2(j-k)+1)\bmod x)+1\mid 0\le k<x\}=\{2,4,\dots,x\}.\]

    The proof for even nodes follows the same arguments.
\end{proof}

\begin{restatable}{lemma}{PoALBPaths}\label{lem:PoALBPaths}
    Let $G$ be a dense cycle graph for some even $x\in\N$. Then, each odd node $v_{i,j}$ has exactly one temporal path $P_{i,j}$ into the opposite bag $B_{(i+x)\bmod 2x}$ and exactly one temporal path $Q_{i,j}$ into bag $B_{(i+x+1)\bmod 2x}$. Moreover,
    \begin{compactitem}
        \item $P_{i,j}$ goes through bags $B_i, B_{i+1}, \dots, B_{(i+x)\bmod 2x}$,
        \item $Q_{i,j}$ goes through bags $B_i, B_{i-1}, \dots, B_{(i+x+1)\bmod 2x}$, and
        \item the edges of all paths $P_{i,j}$ form a disjoint union of $E_G$.
    \end{compactitem}
    Due to symmetry, similar properties hold for all even nodes $v'_{i,j}$.
\end{restatable}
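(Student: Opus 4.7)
My plan is to leverage the rigid label structure established in \Cref{lem:PoALBIncidentEdges}: each node has exactly one incident edge of each label in $\{1,\dots,x\}$. A first observation is that the label sequence along any temporal path must be \emph{strictly} increasing—two consecutive edges share an interior node, and by \Cref{lem:PoALBIncidentEdges} that node cannot have two distinct incident edges with the same label—so every temporal path has length at most $x$.

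Next I would analyse the bag displacement. Writing $a$ and $b$ for the numbers of forward and backward steps of a temporal path starting at $v_{i,j}$, we have $a+b \le x$. Reaching bag $B_{(i+x)\bmod 2x}$ requires $a-b \equiv x \pmod{2x}$, which together with $|a-b|\le a+b\le x$ forces $(a,b)\in\{(x,0),(0,x)\}$. The all-backward option is ruled out by parity: starting at an odd node, \Cref{lem:PoALBIncidentEdges} dictates that backward steps alternate even-odd-even-$\dots$ labels; together with strict monotonicity this would require labels $2,3,\dots,x+1$, exceeding the available range. Hence every temporal path from $v_{i,j}$ into the opposite bag consists of $x$ forward steps using all labels $1,2,\dots,x$ in order. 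By the label formula of \Cref{def:PoALBConstruction}, each step from the current endpoint is uniquely determined (the label parity fixes the direction, and the formula inverts uniquely to give the neighbour), so a simple induction yields both existence and uniqueness of $P_{i,j}$ together with the asserted bag traversal.

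For $Q_{i,j}$ I would repeat the displacement analysis with target $B_{(i+x+1)\bmod 2x}$: $a-b \equiv x+1 \pmod{2x}$ combined with $a+b\le x$ forces $(a,b)=(0,x-1)$. Thus the path takes $x-1$ backward steps with strictly increasing labels, and the parity alternation (starting even, since we leave an odd node backward) forces the labels to be exactly $2,3,\dots,x$ in that order. Inverting the label formula step-by-step again yields a unique path traversing bags $B_i,B_{i-1},\dots,B_{(i+x+1)\bmod 2x}$.

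Finally, for the disjoint-union claim, I would use a counting argument: the dense cycle graph contains $x^2$ odd nodes and each $P_{i,j}$ has $x$ edges, contributing $x^3$ edges in total, which matches $|E_G|=x^3$ from \Cref{lem:PoALBSize}. Edge-disjointness follows from the observation that an edge $e$ of label $\ell$ can only occupy step $\ell$ of some $P_{i,j}$, and the endpoint of $P_{i,j}$ at step $\ell-1$ must coincide with the earlier endpoint of $e$; this pins down $(i,j)$ uniquely by inverting the deterministic step-by-step construction. The analogous claims for even nodes follow by the symmetry of the label formula. I expect the main obstacle to be the careful parity bookkeeping that rules out the all-backward path to the opposite bag; everything else reduces to inverting the label formula of \Cref{def:PoALBConstruction}.
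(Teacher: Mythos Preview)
Your proposal is correct and rests on the same core ideas as the paper's proof: strict increase of labels along any temporal path (from \Cref{lem:PoALBIncidentEdges}), the parity alternation that ties label parity to step direction, a deterministic step-by-step construction of $P_{i,j}$ (and $Q_{i,j}$) using labels $1,2,\dots,x$ (resp.\ $2,\dots,x$), and a backward trace from an arbitrary edge to establish the disjoint-union claim.

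The one genuine difference is your bag-displacement bookkeeping via $(a,b)$. The paper's proof simply constructs the forward path and argues its uniqueness among paths traversing $B_i,B_{i+1},\dots,B_{(i+x)\bmod 2x}$, leaving implicit why no temporal path can reach the opposite bag by going the other way around (or by mixing directions). Your argument closes this gap explicitly: $a+b\le x$ together with $a-b\equiv x\pmod{2x}$ forces $(a,b)\in\{(x,0),(0,x)\}$, and the parity constraint (first backward label from an odd node is even) rules out $(0,x)$. This buys a cleaner proof of the ``exactly one temporal path into the opposite bag'' clause of the statement; the paper's version is terser but slightly under-argued on this point. For the disjoint union, your counting-plus-inversion argument and the paper's backward walk are essentially the same thing phrased differently.
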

\begin{proof}
    A path $P_{i,j}$ that goes from a node $v_{i,j}$ through bags $B_i, B_{i+1}, \dots, B_{(i+x)\bmod 2x}$ has at least $x$ edges. From \Cref{lem:PoALBIncidentEdges} we know that the labels of edges incident to a node are unique. Therefore, $P_{i,j}$ must contain the labels $1$ to $x$. Again with the help of \Cref{lem:PoALBIncidentEdges}, such a path can be constructed easily: Starting at $v_{i,j}$ we take the edge with label 1 from $v_{i,j}$ to bag $B_{i+1}$, next the edge with label 2 to $B_{i+2}$ and so on. Since choosing the next edge is unique in each step, the whole path is unique.

    The existence and uniqueness of the path $Q_{i,j}$ going from $v_{i,j}$ through bags $B_i, B_{i-1}, \dots, B_{(i+x+1)\bmod 2x}$ follows the same arguments with the key difference of the first having label 2 instead of 1 since $v_{i,j}$ is an odd node and therefore only has edges with even labels towards $B_{i-1}$.

    It remains to show, that the edges of all $P_{i,j}$ form a disjoint union of $E_G$. Let $e\in E_G$ be an arbitrary edge between bags $B_i$ and $B_{i+1}$. Similar to the construction of the paths above, we can walk "backwards": Starting at the endpoint of $e$ in $B_i$, take the edge with label $\lambda(e)-1$ between bags $B_{i-2}$ and $B_{i-1}$, then the edge with label $\lambda(e)-2$ and so on. Because of \Cref{lem:PoALBIncidentEdges}, this process is unique and will stop at some node $v$ after traversing an edge with label 1. Because of the direction of our process, $v$ has to be an odd node. This means, that $e$ is part of one of those paths. Due to the uniqueness of the process, $e$ cannot be part of another path, concluding the proof.
\end{proof}

\begin{restatable}{lemma}{PoALBConnected}\label{lem:PoALBConnected}
    Let $G$ be a connected dense cycle graph for some even $x\in\N$. Then, $G$ is temporally connected.
\end{restatable}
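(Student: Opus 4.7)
\medskip

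\noindent\textbf{Proof plan.} The strategy is to fix an arbitrary starting node and show that it can reach every other node by combining the inter-bag temporal paths provided by \Cref{lem:PoALBPaths} with the intra-bag edges of label $x+1$ that are added when going from $G$ to $G'$. The key observation is that every inter-bag edge has a label in $\{1,\ldots,x\}$ (by \Cref{lem:PoALBIncidentEdges}), so the intra-bag edges of label $x+1$ can always be appended at the end of any inter-bag temporal path without violating the monotonicity of labels.

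\medskip

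\noindent First I would fix a node $u\in V_{G'}$ and, by symmetry, assume that $u$ is odd, say $u=v_{i,j}$. By \Cref{lem:PoALBPaths}, $u$ has a temporal path $P_{i,j}$ visiting (exactly one node of) each of the bags $B_i,B_{i+1},\ldots,B_{(i+x)\bmod 2x}$, and a temporal path $Q_{i,j}$ visiting each of the bags $B_i,B_{i-1},\ldots,B_{(i-x+1)\bmod 2x}$. A short index computation shows that the union of these two families of bags is $\{B_0,\ldots,B_{2x-1}\}$, i.e., $u$ reaches at least one node of every bag using labels in $\{1,\ldots,x\}$.

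\medskip

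\noindent Next I would use the intra-bag structure. By definition of $G'$, for each bag $B_{i'}$ the set of added edges $\{v_{i',0},v'_{i',0}\},\{v'_{i',0},v_{i',1}\},\ldots,\{v'_{i',x/2-1}\}$ forms a Hamiltonian path through $B_{i'}$ with every edge carrying label $x+1$. Because $x+1$ is strictly larger than any label used on inter-bag edges, from the node at which $P_{i,j}$ (respectively $Q_{i,j}$) enters $B_{i'}$ we can append an arbitrary prefix of this intra-bag path to extend the temporal path to any chosen node of $B_{i'}$. For the starting bag $B_i$ itself, the intra-bag path alone (starting at $u$) already reaches every node of $B_i$. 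Combining both cases shows that $u$ reaches every node of $V_{G'}$, and as $u$ was arbitrary (and the case of an even $u$ is completely symmetric), the graph $G'$ is temporally connected.

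\medskip

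\noindent\textbf{Main obstacle.} The only nontrivial point is verifying that $P_{i,j}$ and $Q_{i,j}$ together hit every bag; this is a routine modular-arithmetic check using the indices given in \Cref{lem:PoALBPaths}. Once this is established, the fact that $x+1$ strictly exceeds the maximum inter-bag label makes the extension step immediate, so there is no real technical difficulty beyond careful bookkeeping of indices.
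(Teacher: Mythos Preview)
Your proposal is correct and follows exactly the same approach as the paper: use the paths $P_{i,j}$ and $Q_{i,j}$ from \Cref{lem:PoALBPaths} to reach some node in every bag with labels at most $x$, then traverse the intra-bag path of label $x+1$ to reach the remaining nodes. The paper's proof is even terser (two sentences), but the underlying argument is identical; your additional bookkeeping about which bags are covered and why the intra-bag extension works is sound, though note that what you append is a subpath of the intra-bag Hamiltonian path in the appropriate direction rather than a ``prefix.''
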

\begin{proof}
    Let $v\in V$ be a node. Because of \Cref{lem:PoALBPaths}, there are temporal paths from $v$ into each bag with an arrival time of at most $x$. Since there is a path with label $x+1$ in each bag, $v$ can reach every node in every bag.
\end{proof}

\begin{restatable}{lemma}{PoALBNash}\label{lem:PoALBNash}
    For each sufficiently large $n'\in \N$, there is a simple host graph $H$ with $2n' \ge n\ge n'$ nodes and $k=n$ terminals containing a Nash Equilibrium $\s$ s.t. $G(\s)$ is a connected dense cycle graph.
\end{restatable}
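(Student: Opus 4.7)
The plan is to pick the unique even integer $x$ with $n=2x^2 \in [n',2n']$, which is possible for all sufficiently large $n'$ because the ratio $(x+2)^2/x^2$ tends to $1$. Take $H$ to be the temporal clique on $2x^2$ vertices whose edges in $E_{G'}$ inherit their labels from the connected dense cycle graph $G'$ of \Cref{def:PoALBConstruction} and whose remaining edges all receive label $\lifetime_H=x+2$, and set $T_H=V_H$. In the strategy profile $\s$, each odd node $v_{i,j}$ buys the $x$ time-edges of its path $P_{i,j}$, and the $x-1$ within-bag edges of each bag are distributed among its agents in a carefully chosen way. By \Cref{lem:PoALBPaths} the paths $P_{i,j}$ partition $E_G$, so every cross-bag edge is bought exactly once; together with the within-bag edges being covered exactly once, this yields $G(\s)=G'$, and \Cref{lem:PoALBConnected} then ensures that every agent reaches every terminal.

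The Nash property for odd agents is the cleaner half of the argument. By \Cref{lem:PoALBPaths}, $P_{i,j}$ is the only temporal path in $G$ from $v_{i,j}$ to the opposite bag $B_{(i+x)\bmod 2x}$, and because within-bag edges carry label $x+1$ (strictly above every cross-bag label) they cannot be inserted in the middle of a cross-bag walk, so $P_{i,j}$ remains the unique route into $B_{i+x}$ even inside $G'$. Hence dropping any edge of $P_{i,j}$ severs $v_{i,j}$ from all $x$ nodes of $B_{i+x}$, and because label-$(x+2)$ edges cannot be followed by within-bag edges (since $x+1<x+2$), each of these $x$ nodes demands its own dedicated label-$(x+2)$ arc along a walk rooted at $v_{i,j}$. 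A short counting argument shows that retaining any $x-d$ edges of $P_{i,j}$ and compensating with extras costs at least $(x-d)+d\cdot x=x+d(x-1)>x$ for every $d\geq 1$ and $x\geq 2$, ruling out improvement.

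For the within-bag edges the analysis is considerably more delicate because temporal detours such as $v'_{i,j}\to v_{i-1,j}\to v'_{i,j-1}\to v_{i,j}$, which uses labels $1,\,x-1,\,x+1$, can bypass the direct edge $\{v_{i,j},v'_{i,j}\}$. The within-bag assignment therefore has to be chosen so that every such detour itself relies on a within-bag edge owned by a different agent; a careful case analysis, using \Cref{lem:PoALBIncidentEdges} to classify all temporal walks that can re-enter bag $B_i$ at a new node and exploiting the strict label gap between $x+1$ and $x+2$ to forbid any mixed shortcut through label-$(x+2)$ edges, shows that for every tentative deviation there is at least one $B_i$-endpoint whose only remaining temporal path to the deviator still uses the removed within-bag edge. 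The label-$(x+2)$ replacement needed to re-reach such an endpoint therefore always outweighs the single within-bag edge saved, which completes the Nash verification. This within-bag case analysis, rather than the relatively straightforward odd-node argument, is the main obstacle of the proof.
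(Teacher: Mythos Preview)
Your construction of the host graph, the assignment of each path $P_{i,j}$ to the odd node $v_{i,j}$, and the Nash argument for odd nodes all match the paper. The divergence is in how the within-bag label-$(x+1)$ edges are assigned, and here you miss the paper's key simplifying idea.

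You propose to distribute the $x-1$ within-bag edges of each bag $B_i$ among the agents of that same bag, and you correctly anticipate that this forces a delicate case analysis because of detours such as the one you describe. The paper sidesteps this entirely by exploiting the \emph{global} edge-buying capability: all $x-1$ within-bag edges of bag $B_{i}$ are bought by the single even node $v'_{(i+x)\bmod 2x,\,0}$ in the \emph{opposite} bag, and every other even node has the empty strategy. The Nash verification for $v'_{i,0}$ is then almost immediate: it must reach all $x$ nodes of the opposite bag $B_{(i+x)\bmod 2x}$; by \Cref{lem:PoALBPaths} it reaches exactly one of them along its unique temporal path before time $x+1$; to reach the remaining $x-1$ it needs, for each of them, an edge of label at least $x+1$ incident to that node, and $\s_{-v'_{i,0}}$ contains no such edges at all. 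Hence any feasible strategy for $v'_{i,0}$ has size at least $x-1$, which is exactly what it currently buys.

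In short, your approach tries to make a local-style assignment work inside a global-setting lemma, which is unnecessarily hard; the paper instead uses the freedom of non-incident purchases to reduce the within-bag case to a one-line counting argument. Your proposal is not wrong in spirit, but the ``careful case analysis'' you defer is the entire difficulty, and you have not actually specified the assignment or carried out the analysis---whereas the paper's opposite-bag assignment removes the difficulty altogether.
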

\begin{proof}
    Let $n'\in\N$ and $G$ be a connected dense cycle graph for $x=2\lceil\sqrt{\frac{n'}{8}}\rceil$. Note that $G$ has $2n' \ge 2x^2 \ge n'$ nodes. We define the host graph $H$ as the complete simple temporal graph on $V_G$ with
    \begin{align*}
        \forall e\in E_H\colon \lambda_H(e)=\begin{cases}
            \lambda_G(e)&\text{, if } e\in E_G\\
            x+2 &\text{, else}.
        \end{cases}
    \end{align*}
    Additionally, we construct the strategy profile $\s$ in the following way. For each odd node $v_{i,j}$, we set $S_{v_{i,j}}=\timeEdges_{P_{i,j}}$ where the $P_{i,j}$ are the paths from \Cref{lem:PoALBPaths}. For the first even node $v'_{i,0}$ in each bag, we set $S_{v'_{i,0}}$ to the set of (time) edges with label $x+1$ in the bag $i+x\bmod 2x$; this is the path with label $x+1$ in the bag on the opposite side of the connected dense cycle graph. For all of the other even nodes, the strategy is empty. By \Cref{lem:PoALBPaths}, every edge is bought by exactly one node with these strategies. Also, it holds $G(s)= G$. By \Cref{lem:PoALBConnected}, $G(\s)$ is temporally connected. It remains to show that no node has an improving response.

    Consider an odd node $v_{i,j}$. This node has to reach all of the nodes in the opposite bag $B_{(i+x) \bmod 2x}$. By $\Cref{lem:PoALBPaths}$, $P_{i,j}$ is the only temporal path in $G$ to that bag. All other paths to that bag in $H$ must therefore include an edge with label $x+2$. As $S_{v_{i,j}}=E_{P_{i,j}}$, an improving response cannot contain $E_{P_{i,j}}$ and therefore each path to a node in $B_{(i+x) \bmod 2x}$ has to contain an edge of label $x+2$. As $s$ does not contain any edges with label at least $x+2$, $v_{i,j}$ has to include those edges into its strategy. It needs one label $x+2$ edge for each node in $B_{(i+x) \bmod 2x}$. As there are $x$ nodes in $B_{(i+x) \bmod 2x}$, there cannot be an improving response.

    Now consider an even node $v_{i,j}$. For all $j \neq 0$ these nodes do not buy edges, so they cannot have an improving response. For $j=0$, the node $v_{i,j}$ has to reach all nodes in the opposite bag $B_{(i+x) \bmod 2x}$. By $\Cref{lem:PoALBPaths}$, there is only one node in the opposite bag that $v_{i,j}$ can reach before time $x+1$. So for the other nodes it needs one edge of label at least $x+1$ each. As $s_{-v_{i,j}}$ does not contain any edges of label at least $x+1$ adjacent to nodes in $B_{(i+x) \bmod 2x}$, those edges have to be bought by $v_{i,j}$, so it does not have an improving response.
\end{proof}

With this, we can now obtain the bound on the PoA.
\PoALB*
\begin{proof}
    We aim to show that there exists a constant $c \in \R$ such that for all sufficiently large $n,k \in \N$ with $n \geq k$ holds that $\poa^\lo(n,k) \ge c \sqrt{k}$. To do so, we first show that the statement holds for infinitely many pairs $(n,k)$ and then fill the gaps with \Cref{lem:NE_dismountable}.
    
    Let $n,k \in \N$ be sufficiently large. By \Cref{lem:PoALBNash} there exists a simple host graph $H$ with $k'\in \N$, $k \ge k' \ge k/2$ nodes and terminals that contains a Nash Equilibrium $s$ that is a connected dense cycle graph. By \Cref{lem:PoALBSize}, this equilibrium has $m \ge k'\sqrt{k'}/2$ time edges. As $k \le n$, there exists a constant $c' \in \N$ such that $n - k + k' \ge c'k' \ge n/4$. By definition of $H$ and \Cref{cor:NE_terminals_non_terminals}, there exists a host graph $H'$ with $n' = c'k'$ nodes and $k'$ terminals containing $m' = c'm + (c'-1)k'$ time edges. Additionally, if $c' \ge 3$, $H'$ contains a spanning tree. Using \Cref{lem:NE_dismountable} repeatedly on $H'$, we create a graph with $n$ nodes, $k$ terminals and an equilibrium with at least $m'$ time edges that also has a spanning tree. As $m' \ge c'm \ge \frac{n'}{k'}\cdot k'\sqrt{k'}/2 \ge n'\sqrt{k'}/2 \ge n\sqrt{k}/16 $, it follows that $\poa^\lo(n,k) \ge \sqrt{k}/16$. If $c' < 3$ we use the same argument on $H$ directly to get a similar bound.
\end{proof}


\section{Conclusion and Outlook}
We study a game-theoretic model where non-cooperative agents form a temporal network. We introduce two new dimensions to the problem: (i) the agents can build any edge in the network, (ii) the agents want to reach a subset of the other agents. We analyze the existence, the structure and the quality of equilibria. Our main results are upper and lower bounds on the PoA, along with a powerful tool that allows us to translate lower bounds between the non-terminal case to the terminal one. We show that the global and the local model are incomparable, which contradicts the intuition that more power to the agents would improve the quality of the equilibria. Additionally, all of our results hold for both the single label and multi label case.

There are various open problems that stem from our work. With regards to improving our results, it would be interesting to see whether equilibria exist for more than two terminals. We conjecture that this is true, but even the case with three terminals proves to be very challenging. It is also worthwhile to close the gaps between the upper and the lower bounds on the PoA for the local and the global setting. For Greedy Equilibria we showed that the PoA in the global setting is strictly worse than the PoA in the local setting. We believe that this is also true for Nash Equilibria. In particular, we believe that the PoA for NE is close to the current lower bound of $\Omega(\log(k))$ in the local setting while we conjecture it to be close to the current lower bound of $\Omega(\sqrt{k})$ in the global setting. This conjecture also implies that the gap between the PoA for NE and the PoA for GE is much larger in the global setting ($\Omega(\sqrt{k})$ versus $\Theta(k)$) than in the local setting (where the PoA for GE is at most a $\log$ factor larger).

While our paper provides a very general model for studying the formation of temporal networks by non-cooperative agents, there are still more extensions to be investigated. For example, agents might want to minimize the distance to the other agents or share the costs of buying edges.
Structural properties of the host network could also be considered. For enhanced applicability, the edges could be directed, could have non-uniform buying costs, and/or non-instant traversal times.

\bibliographystyle{ACM-Reference-Format} 
\bibliography{aamas_2025}

\end{document}